\newtheorem{remark}{Remark}[section]
\newtheorem{theorem}{Theorem}
\newtheorem*{litthe}{Theorem}
\newtheorem{proposition}{Proposition}[section]
\newtheorem*{conji}{Conjecture}
\newtheorem{conjecture}{Conjecture}
\newtheorem*{claim}{Claim}
\newtheorem{corollary}{Corollary}[section]
\title{Black holes without spacelike singularities}
\author{Mihalis Dafermos}
\thanks{University of Cambridge,
Department of Pure Mathematics and Mathematical Statistics,
Wilberforce Road, Cambridge CB3 0WB United Kingdom}
\begin{document}
\maketitle
\begin{abstract}
It is shown that for  small, spherically symmetric 
perturbations
of asymptotically flat two-ended Reissner--Nordstr\"om data  for the
Einstein--Maxwell--real scalar field system,  the boundary
of the dynamic spacetime which evolves is \emph{globally} represented 
by a bifurcate null hypersurface across which the metric extends continuously.
Under additional assumptions, it is shown that the Hawking mass blows
up identically along this bifurcate null hypersurface, and thus the metric
cannot be extended twice differentiably, in fact, cannot be extended
in a weaker sense characterized at the level of the Christoffel symbols.
The proof combines estimates obtained in previous work with 
an elementary Cauchy stability argument.
There are no restrictions on the size of the support of the scalar field, and the result applies
to both the future and past boundary of spacetime.
In particular, it follows that for an open set in the moduli space of solutions
around Reissner--Nordstr\"om, \emph{there is no spacelike component
of either the future or the past singularity}. 
\end{abstract}

\section{Introduction}
It is well known that for the \emph{Kerr} and \emph{Reissner--Nordstr\"om} families
of spacetimes, then unless the rotation
or charge  parameter vanishes,  the maximal
globally hyperbolic region $(\mathcal{M},g)$ determined 
(as a solution to the Cauchy problem for the Einstein or Einstein--Maxwell system, respectively)  by initial data
is smoothly extendible
to a larger spacetime $(\widetilde{\mathcal{M}},\widetilde{g})$
so that \emph{all} future incomplete causal geodesics $\gamma$ in $\mathcal{M}$ pass into the
extension $\widetilde{\mathcal{M}}\setminus \mathcal{M}$. The boundary
of $\mathcal{M}$
in $\widetilde{\mathcal{M}}$ defines what is known as a \emph{Cauchy horizon}
(denoted $\mathcal{CH}^+$)
and is in fact a bifurcate null hypersurface.
In the case of Reissner--Nordstr\"om, this situation can be illustrated formally
in the well-known Penrose
diagrammatic notation:
\[
\input{glob_RN.pstex_t}
\]

The above situation may be considered to be
undesirable from the physical or
epistemological point of view as it means that
the fate of classical observers is not determined by what  (from local considerations)
would appear to be an impeccable
initial hypersurface $\Sigma$. The question will not easily go away in view of a celebrated result of Penrose~\cite{Penr} which implies in particular that if one perturbs
the initial data on $\Sigma$, there will still always exist observers who, like
$\gamma$ above, have incomplete worldlines.
Thus, addressing the fate of  such observers who live only for a
finite time is an unavoidable issue for the theory.

From the above point of view, the situation in the celebrated \emph{Schwarzschild}
solution (thought of as a special case of the Reissner--Nordstr\"om/Kerr family corresponding
to vanishing charge/rotation) can be considered to be preferable.
Recall the Penrose diagramme:
\[
\input{glob_SCH.pstex_t}
\]
Here the curvature blows up along {\bf all} incomplete inextendible causal 
geodesics $\gamma$ as the supremum of affine time is approached,
and, in fact, one cannot extend nontrivially $\mathcal{M}$ to an
$(\widetilde{\mathcal{M}},\widetilde{g})$
such that $\widetilde{g}$ is a $C^0$ metric.
This means that those macroscopic observers who  live only for
 finite time do not just encounter infinite curvature but are pulled apart
by an infinite tidal deformation.
Though perhaps not so pleasant for those observers,
this situation gives the classical theory an attractive sense of ``closure''.
Macroscopic observers either live forever or are destroyed (as classically well-defined objects);
no one is left unaccounted for!

\subsection{Strong cosmic censorship}
Since the Schwarzschild metrics constitute a one-parameter subfamily
of Reissner--Nordstr\"om or Kerr, then in
the class of explicit stationary solutions, it is  the  (undesirable) Reissner--Nordstr\"om/Kerr 
behaviour  that would appear to be generic. 
It turns out, however, that the Reissner--Nordstr\"om 
Cauchy horizon $\mathcal{CH}^+$ is a surface
of infinite \emph{blueshift}, and an old heuristic argument due to Simpson and Penrose~\cite{simp}
suggested that this would cause it  to be unstable
to \emph{dynamic} perturbations of the metric, which would be amplified by 
a blueshift effect.
This motivated the so-called strong\footnote{The word ``strong'' here
is to differentiate between the so-called ``weak'' cosmic censorship, also
due to Penrose, which states that for
generic asymptotically flat data, future null infinity is complete (see~\cite{sings}
for a general formulation, and~\cite{naksinginst} for the seminal proof 
of this conjecture for the
Einstein--scalar field model under spherical symmetry). 
This conjecture has the loose  interpretation
that inextendible, incomplete geodesics must enter black hole regions. In the
formulations given,  
neither conjecture implies the other, but the Reissner--Nordstr\"om/Kerr
behaviour, were it generic, would violate strong cosmic censorship, but
not weak. The traditional epithets reflect precisely this relation.}
 cosmic censorship conjecture:

\begin{conji}[Strong cosmic censorship]
For generic asymptotically flat initial data for ``reasonable'' Einstein--matter systems, 
the maximal Cauchy
development is future  inextendible as a suitably regular Lorentzian manifold.
\end{conji}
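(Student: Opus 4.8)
The plan is not to attack the conjecture in the generality in which it is stated --- for the vacuum Einstein equations, or even for Einstein--Maxwell, in the absence of any symmetry, this is entirely out of reach --- but rather to isolate and rigorously capture the mechanism responsible for it in a tractable model. First I would pass to spherical symmetry and replace the vacuum equations by the Einstein--Maxwell--real scalar field system: the scalar field serves as a surrogate for gravitational radiation, while the Maxwell field (through its charge) plays the role of the angular momentum of Kerr, in that it is what produces a Cauchy horizon in the first place. The natural arena is then two-ended asymptotically flat initial data on $\Sigma$ close to subextremal Reissner--Nordstr\"om, where one has both the offending $\mathcal{CH}^+$ and enough structure to analyse it.

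The key steps, in order, are as follows. \emph{(i) Global structure.} Using the monotonicity of the area radius $r$ and of the Hawking mass, together with the Raychaudhuri equation, one reads off the Penrose diagramme of the maximal development: an exterior region terminating at a complete $\mathcal{I}^+$ and an event horizon $\mathcal{H}^+$, along which the scalar field obeys an inverse-polynomial (Price-law) upper bound quoted from previous work, and an interior region bounded to the future by a \emph{bifurcate null} hypersurface $\mathcal{CH}^+$ --- in particular with no spacelike portion. \emph{(ii) $C^0$ extension.} Treating the black hole interior as a characteristic initial value problem with data on $\mathcal{H}^+$, propagate estimates across the interior to show that $r$, $\Omega^2$, the scalar field $\phi$, and the charge function extend continuously up to and across each null segment of $\mathcal{CH}^+$, producing a $C^0$ extension of the metric; an elementary Cauchy stability argument then makes this statement \emph{global} along both components of $\mathcal{CH}^+$ emanating from $i^+$, and, by time reversal, along the past Cauchy horizon as well. \emph{(iii) Obstruction to better regularity.} Under a genericity assumption --- a pointwise lower bound on $\phi|_{\mathcal{H}^+}$ at late times, of inverse-polynomial type consistent with, and generically saturating, Price's law --- integrate the propagation equation for the Hawking mass along $\mathcal{CH}^+$ to conclude that it blows up \emph{identically}. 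Since an extension with Christoffel symbols in $L^2_{\mathrm{loc}}$ --- in particular any $C^2$ extension --- would force the Hawking mass to remain locally bounded, no such extension exists. \emph{(iv) Genericity.} Show that the lower bound required in (iii) holds for data in an open dense --- or, better, prevalent --- subset of the moduli space near Reissner--Nordstr\"om.

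A point worth emphasising is that step (ii) already shows that the \emph{naive}, $C^0$ version of the conjecture is \textbf{false}: there is a genuine continuous extension. Part of the work is therefore a reformulation --- the phrase ``suitably regular'' must be read as ``twice differentiable'', or, more precisely, ``with Christoffel symbols in $L^2_{\mathrm{loc}}$'' --- and it is only at that level that the mass-inflation estimates of (iii)--(iv) vindicate the statement.

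The main obstacle I expect is step (iv), together with the removal of the symmetry assumption. Converting the heuristic ``$\phi|_{\mathcal{H}^+}$ generically decays \emph{exactly} at the Price-law rate, with nonzero leading coefficient'' into a genuine genericity statement about \emph{Cauchy} data requires excluding the fine-tuned cancellations that would make the relevant coefficient vanish, and doing so in a topology natural for the initial value problem; this is delicate already under spherical symmetry. Beyond symmetry the difficulties compound: the analogue of the scalar-field flux across $\mathcal{H}^+$ is the shear of the event horizon generators, whose generic non-degeneracy must be established, and the blue-shift instability must then be propagated through a rotating interior with no symmetry available to integrate the relevant transport equations --- well beyond current technology. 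Within the scope of the present paper, the effective obstacle is the last one that can actually be surmounted: combining the interior estimates from previous work, which were formulated for data on characteristic hypersurfaces or near $\mathcal{H}^+$, with a Cauchy stability argument that propagates smallness from $\Sigma$ and handles the bifurcate structure of $\mathcal{CH}^+$ uniformly, for both the future and the past boundary and without any restriction on the support of $\phi$.
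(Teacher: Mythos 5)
The statement you are addressing is stated in the paper as a \emph{conjecture}, and the paper does not prove it; there is no ``paper's own proof'' to compare against. What your proposal describes is, accurately, the known \emph{programme} surrounding the conjecture --- essentially the content of the present paper together with its antecedents on the Einstein--Maxwell--real scalar field model --- but it is not a proof, and the places where it fails to be one are precisely the places the paper itself flags as open. Concretely: your step (iv) is verbatim Conjecture~\ref{isitgen} of the paper (generic lower bounds for $\phi$ on $\mathcal{H}^+$ at the Price-law rate). Nothing in the paper, or in the literature it cites, establishes this; Theorem~\ref{blowupth} and hence the mass-inflation conclusion are \emph{conditional} on it. Without (iv) you have shown only that the $C^0$ formulation of the conjecture is false (which is a disproof of one reading, not a proof of another) and that, \emph{if} the lower bound holds, the Hawking mass blows up. Saying ``show that the lower bound holds for an open dense subset'' is naming the obstacle, not surmounting it.

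Two further gaps in step (iii) as you state it. First, the rigorous inextendibility statement that actually follows from mass blow-up is $C^2$-inextendibility, via the pointwise lower bound $(\ref{Kretsch})$ on the Kretschmann scalar along curves exiting the spacetime. Your stronger claim --- that \emph{any} extension with Christoffel symbols in $L^2_{\rm loc}$ ``would force the Hawking mass to remain locally bounded'' --- is not established: the paper (Section~\ref{christoffels}) proves failure of $L^2_{\rm loc}$ Christoffel symbols only for a \emph{particular} $C^0$ extension with a prescribed differential structure, and explicitly declines to address the geometric problem of ruling out all such extensions. Second, even granting everything in spherical symmetry, the conjecture as stated concerns generic data for ``reasonable'' Einstein--matter systems, including vacuum; the passage beyond symmetry is Conjecture~\ref{introconj} and is acknowledged to be well beyond current technology. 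Finally, note that the paper's epilogue (Conjecture~\ref{cosmocon2}) suggests that the very reformulation you adopt --- inextendibility with $L^2_{\rm loc}$ Christoffel symbols --- may itself \emph{fail} in the presence of a positive cosmological constant near extremality, so the strategy of ``fix the regularity class, then prove blow-up'' is not guaranteed to rescue the conjecture even in principle.
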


The word ``reasonable'' above is meant to exclude in particular matter models for which 
the analogue of the above is false even when one does not couple to gravity\footnote{Here
again, one must be careful as to which notion of maximal development (i.e.~based
on which regularity class) one considers.
For the most classical of all matter models, the 
case of perfect fluids, for instance, the analogue of the above conjecture is false
for the smooth maximal development (in view of the well known formation of shocks),
but may be true for a notion of maximal development based on a wider class of 
solutions admitting shocks and their interactions. In the $1+1$ dimensional case, such 
a class can be built based on the space $BV$, at least for sufficiently small solutions.
See the discussion in Kommemi~\cite{kommemi}.}.
 In particular, the conjecture is meant to apply for the vacuum equations, where
 no matter is present.
   We shall comment on various formulations of `suitably regular' further down.

According to the above conjecture, the Schwarzschild inextendibility property, though special in the class of explicit, stationary solutions, 
would turn out to be generic after all in the class of dynamical solutions. 
In particular, for this conjecture to be true, the internal structure  of
the black hole region of the Reissner--Nordstr\"om/Kerr family would have to be unstable 
to perturbation of initial data.

Since at the time of the formulation of the above 
conjecture, Schwarzschild was the main example  exhibiting
the phenomenon postulated therein as generic,
it naturally served as  a basic model for 
the expected picture. In particular, it was expected that
generically, the singular boundary of spacetime should be 
spacelike, meaning that typically, observers ``falling into the singularity'' eventually
become causally independent, and, moreover, the singularity should be strong, meaning that the
volume element (with respect to a suitable foliation) vanishes and
macroscopic observers falling into the singularity are ``crushed'' (cf.~\cite{tipler}). 
Locally, it is well known that these properties are similar
to the ``big bang'' type singularities of model cosmological spacetimes like  FRW
or the Bianchi class.  (Indeed, the black hole interior of Schwarzschild may
be considered a homogeneous ``cosmological'' solution with spatial
topology $\mathbb R\times \mathbb S^2$ or quotients thereof.) 
Motivated primarily by the cosmological case, a much cited approach
to uncover the form of the  ``generic singularity'' attempted to infer
(via a series of approximations) the existence of an
infinite dimensional family of
spacetimes of prescribed singular behaviour with free functions, 
naively identifying the latter with the degrees of freedom of 
initial data~\cite{BKL2}. This approach was extremely influential in shaping a
`standard' picture of the interiors of generic
black holes, represented by the Schwarzschild Penrose diagramme,
but where the singularity exhibits behaviour as prescribed in~\cite{BKL2}. This
is sometimes known as the BKL scenario.

\subsection{The mass inflation scenario and weak null singularities}
This neat `standard picture' of spacetime singularities fell apart (at least
as a universal prescription)
thanks to the careful study of what
turned out to be a very fruitful class of toy models,
namely spherically symmetric models of collapse in the presence of charge.
These
are the simplest models which on the one hand allow
for the formation of Cauchy horizons (i.e.~they admit Reissner--Nordstr\"om
as a particular solution), while on the other hand, provide
a dynamic degree of freedom on which the blue-shift mechanism can operate
and affect a backreaction on the metric.

\subsubsection{Null dust models}
\label{nulldustmodels}
There is in fact a hierarchy of 
progressively more complicated spherically symmetric charged
models one can consider.
By far the simplest model
quantifying backreaction is that of a self-gravitating ingoing
null dust.  Here the spacetime can be described by an explicit solution,
the \emph{Reissner--Nordstr\"om--Vaidya} metric. This was
studied already in 1981 by Hiscock~\cite{hiscock}.  These solutions
admit a null boundary across which the spacetime
is extendible continuously. All curvature scalars remain bounded
on the boundary, suggesting at first instance that
the spacetime is regular there, just as for the Reissner--Nordstr\"om Cauchy
horizon;
more careful analysis quickly reveals
that various components blow up with respect
to a freely falling
frame. Thus, continuous spacetime extensions beyond the boundary
fail to have $C^2$ metric,
and the boundary can be viewed
as a null singularity, whose singular nature is however
not manifest in curvature scalars. 
In the language of~\cite{ellisking}, it is a
 ``whimper''.

The above model can easily be criticised in that it does not
allow for any sort of backscattering of radiation.
This fact, coupled perhaps with the general expectation that
 ``whimper'' singularities are unstable~\cite{ellisking},
led to the null geometry of the singularity  not receiving the attention it deserved.
In seminal 
work, Poisson and Israel~\cite{pi:is, pi:ih} ``irradiated'' the above spacetime
with a second
``outgoing'' null dust, starting at a retarded time subsequent
to the event horizon. The future evolution can no longer be
represented  as
an explicit solution, but Poisson and Israel were nonetheless
able to deduce that
the Hawking mass would diverge on the null boundary as soon
as one entered the domain of influence of the second null dust--hence
\emph{``mass inflation''}. The divergence of the Hawking mass
implied in particular that the Kretschmann scalar diverged--thus, 
the null boundary turned into a stronger ``curvature singularity'' (in the language
of~\cite{ellisking}).

It turned out, however, that the most interesting aspect of the singularity
of Poisson and Israel~\cite{pi:is,pi:ih} was not that it was ``stronger''
than that of   Hiscock~\cite{hiscock}, but rather,
as shown by Ori~\cite{ori0}, that it too was ``weak'', now from the point
of Tipler's classification~\cite{tipler}. 
Specifically, Ori~\cite{ori0} studied a slightly more tractable
model where the ingoing null dust
was replaced by a shell, and in the context of this model,
showed that the backreaction \emph{would not be sufficient} to immediately
contract the area of the symmetry spheres to $0$, in fact, the metric would remain
well behaved on the null boundary and could continue beyond.
This implies in particular that macroscopic objects might not be ``destroyed''
by the singularity.

In retrospect, as we shall see, the proper characterization of
the singularity in all of the above models~\cite{hiscock,pi:is,ori0}
is not at the level
 of blow up of the 
 curvature but at the level of the non-square integrability of the  Christoffel
 symbols. This notion is perhaps less familiar than the traditional classifications
 of singularity~\cite{ellisking, tipler}, but more relevant
 to pde properties governing the dynamics of the Einstein 
 equations as well as the 
 equations of continuum matter, for which the pointwise blow-up of curvature per se
is of no particular significance~\cite{krs}. 
 From this point of view, all above examples
 are in fact at the \emph{same} level
 of singularity, which,
 though fully deserving of the appelation ``weak'' in view of
 the properties discovered by Ori~\cite{ori0}, is at the same time
 sufficiently strong so as to be ``terminal''. 
 We shall return to this issue after discussing the results for
 Einstein--Maxwell--real scalar field model in Section~\ref{proofofmassinflation}.
See in particular footnote~\ref{uposnmeiwsn}.

\subsubsection{The Einstein--Maxwell--real scalar field model}
\label{themodels}
While better than the Vaidya model of~\cite{hiscock}, the 
above models~\cite{pi:is, ori0} are still open to criticism.
Backscattering is put in by hand, and only occurs strictly after horizon formation;
i.e.~the above spacetimes coincide with the model of~\cite{hiscock}
in  a small neighborhood of $i^+$.
Moreover, the rate of decay of radiation before  this artificial backscattering occurs
is also put in by hand. 

It was clear (see the discussion in~\cite{israel}) that a more realistic model  was necessary. 
In this
regard, a natural candidate was the
\emph{Einstein--Maxwell--real scalar field model}.
Here the Einstein equations are coupled to the Maxwell and
wave equations, but the scalar field is uncharged, thus the latter
two interact only through the gravitational coupling (see Section~\ref{THEMODELSEC}).

The above model admits true wave-like behaviour and allows for the decay
rate of radiation on the horizon to arise dynamically.
It is a generalisation of the spherically symmetric
Einstein--scalar field model whose
mathematical study was initiated by Christodoulou leading
to his seminal proofs of both weak and strong cosmic censorship~\cite{naksinginst}.
In contrast to Christodoulou's model, however, the
Einstein--Maxwell--real scalar field equations 
have a peculiar feature: In view of the absence
of charged matter, then for    the Maxwell tensor to be non trivial,
a complete spacelike hypersurface must have \emph{two} asymptotically
flat ends, just like in maximally extended Schwarzschild or Reissner--Nordstr\"om.
\emph{We will return to this global feature in Section~\ref{globalstarts} 
as this will be central
to considerations of the present paper.}
These global features \emph{per se} do not however directly
relate to the  question of  existence of weak null singularities
arising from the ``point'' $i^+$ of the Penrose diagramme, which
can be addressed restricting
to an arbitrary small neighbourhood of 
 $i^+$.

\subsubsection{The numerical and heuristic studies}
The above model was first approached numerically, starting
with a study by  Gnedin and Gnedin~\cite{gnedingnedin}.
Though in retrospect, the numerics of~\cite{gnedingnedin}  are best viewed
as inconclusive
(see the discussion of~\cite{BDIM, israel}), the authors  themselves interpreted their
results as suggesting that the entire singular boundary is spacelike, casting doubt on 
the mass-inflation/weak null  singularity  scenario.
The heuristic study of the scalar field model by 
Bonnano~\emph{et al}~\cite{BDIM}, on the other hand,
gave  arguments in favour of
the scenario.
This was followed
finally by more careful numerical work
by Brady--Smith~\cite{bradsmith}, starting from data on the horizon, and later 
Burko~\cite{burko} starting
from  characteristic data on a cone terminating on null infinity;
the results of these later numerics supported the  picture
proposed by Poisson--Israel and Ori of a non-empty
piece of null boundary emanating
from $i^+$ on which the mass blows up but across which the metric
is continuously extendible.

\subsubsection{The proof of the mass-inflation/weak null singularities scenario}
\label{proofofmassinflation}
The mass-inflation scenario and the associated weak null singularities were finally
mathematically proven
in~\cite{md:si, md:cbh}  
to indeed occur in the Einstein--Maxwell--real scalar field
model. 

The first result concerns
the persistence of a piece of null boundary of spacetime:
\begin{theorem}[Cauchy horizon stability \cite{md:cbh}]
\label{thes1}
For arbitrary asymptotically flat spherically symmetric
data for the Einstein--Maxwell--real scalar field
system for which the scalar field decays suitably
at spatial infinity $i^0$, then
if the charge is non-vanishing and the event horizon $\mathcal{H}^+$
is asymptotically subextremal,
it follows that the Penrose diagramme contains a subset which is as below
\[
\input{glob_pic.pstex_t}
\]
where $\mathcal{CH}^+$ is a non-empty piece of null boundary. 
Moreover, the spacetime can be continued beyond  $\mathcal{CH}^+$ to a strictly
larger manifold
with $C^0$
Lorentzian metric, to which the scalar field also extends continuously.
\end{theorem}
The $C^0$ extendibility statement captures the ``weakness'' of the singularity 
discussed by Ori in the context of~\cite{ori0}.
The above thus suggests that if  by `sufficiently regular' in the formulation
of strong cosmic censorship, one only requires
that the metric extend $C^0$ (see~\cite{sings}), then the conjecture is false.

The proof of the above mathematical theorem appeals in particular to the fact that 
Price's law   tails hold as an {\bf upper} bound for the behaviour of
the scalar field on the horizon $\mathcal{H}^+$, a statement which in turn was proven in
joint work with Rodnianski~\cite{dr1}. See Theorem~\ref{plt} of Section~\ref{persistence}. 
If such a  polynomial decay rate is also assumed as a {\bf lower} bound
(cf.~the numerical work~\cite{gundlach} and Conjecture~\ref{isitgen} in Section~\ref{globsingsec} below), then one obtains that 
the null boundary piece $\mathcal{CH}^+$ is in fact singular:
\begin{theorem}[Mass inflation \cite{md:cbh}]
\label{thes2}
If a suitable {\bf lower} bound on the decay rate of the scalar field on the event horizon 
$\mathcal{H}^+$ is assumed,
then the non-empty piece of null boundary $\mathcal{CH}^+$ of
Theorem~\ref{thes1} is in fact a weak null singularity on which the Hawking mass
blows up identically. In particular, the metric cannot
be continued beyond $\mathcal{CH}^+$  as a $C^2$ metric. The scalar
field cannot be extended beyond $\mathcal{CH}^+$ as a $H^1_{loc}$ function.
\end{theorem}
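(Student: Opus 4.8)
The plan is to feed the quantitative estimates underlying Theorem~\ref{thes1} into a rigorous incarnation of the mass-inflation mechanism of Poisson and Israel, organised as a characteristic initial value problem in the black hole interior. By Theorem~\ref{thes1}, after fixing a double-null gauge $(u,v)$ on a neighbourhood of $\mathcal{CH}^+$ inside the black hole, $\mathcal{CH}^+$ is attained in the limit $v\to\infty$, the area radius $r$ extends continuously with $r\to r_->0$, and the geometry there is close to that of a fixed subextremal Reissner--Nordstr\"om solution; in particular suitable rescalings of $\partial_u r$, $\partial_v r$, $\Omega^2$, $\phi$ and their first derivatives obey quantitative upper bounds on the corresponding region. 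I would restrict attention to the interior rectangle which is the maximal future development of data on a late sub-segment of the event horizon $\mathcal{H}^+$ and a transverse initial ingoing null ray. On $\mathcal{H}^+$ the data obey, by Price's law (Theorem~\ref{plt}), the sharp upper bound $|\partial_v(r\phi)|\le C\,v^{-p}$, and, by the extra hypothesis of the theorem, the matching lower bound $|\partial_v(r\phi)|\ge c\,v^{-p}$ for $v$ large.

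The first genuine step is to propagate the lower bound for the outgoing derivative of the scalar field off $\mathcal{H}^+$ and into the interior. The wave equation, rewritten as a transport equation for $r\,\partial_v\phi$ along ingoing null rays, has a right-hand side which, thanks to the decay estimates accompanying Theorem~\ref{thes1}, integrates in $u$ to a quantity negligible relative to $v^{-p}$; hence one obtains $|r\,\partial_v\phi|(u,v)\ge c'\,v^{-p}$ uniformly throughout the rectangle.

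Next comes the mass-inflation argument proper. Along ingoing rays the renormalised Hawking mass $\varpi$ (related to the Hawking mass by $m=\varpi-\tfrac{e^2}{2r}$) satisfies a monotonicity law of schematic form $\partial_v\varpi=\tfrac12(1-\mu)\,(r\,\partial_v\phi)^2/\partial_v r=(r\,\partial_v\phi)^2/(2\kappa)$, where $\kappa=-\Omega^2/(4\,\partial_u r)>0$ and the right-hand side is manifestly nonnegative in the trapped region of the interior. The blueshift instability, inherited from the nearby Reissner--Nordstr\"om geometry, forces $\kappa$ to degenerate as $v\to\infty$ at a rate fast enough that, fed the polynomial lower bound of the previous step, this right-hand side fails to be integrable in $v$; consequently $\int^{\infty}\partial_v\varpi\,dv=\infty$, so $\varpi\to\infty$ along every ingoing ray terminating on $\mathcal{CH}^+$, i.e.\ the Hawking mass blows up identically there. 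This is the step I expect to be the \emph{main obstacle}: the degeneration rate of $\kappa$ (equivalently of $1-\mu$) is itself coupled to the growing mass, so the whole estimate must be run as a continuity/bootstrap argument, with the a priori upper bounds of Theorem~\ref{thes1} used both to keep the error terms in the transport equations below the (only barely winning) exponential-versus-polynomial margin and to close the loop.

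It remains to extract the non-extendibility statements. Since $m\to\infty$ while $r$ stays bounded below by $r_->0$, the spacetime curvature is unbounded along $\mathcal{CH}^+$: in spherical symmetry the quantity $m/r^{3}$ governs the Weyl curvature, so e.g.\ the Kretschmann scalar, which contains a term comparable to $m^{2}/r^{6}$ (and no compensating negative term), blows up; hence the metric admits no $C^{2}$ extension across $\mathcal{CH}^+$. For the scalar field, an $H^{1}_{loc}$ extension across $\mathcal{CH}^+$ would, after changing to the coordinates of the continuous extension, render the right-hand side of the $\varpi$-monotonicity law integrable — i.e.\ the increment of $\varpi$ up to $\mathcal{CH}^+$ would be controlled by a (finite) Dirichlet-energy density of $\phi$ — contradicting the blow-up just established; therefore $\phi$ cannot be extended beyond $\mathcal{CH}^+$ as an $H^{1}_{loc}$ function.
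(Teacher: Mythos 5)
Your outline reproduces, in its essentials, the argument of~\cite{md:cbh} (where this theorem is actually proved --- the present paper only cites it and then supplies the globalisation and the inextendibility corollaries): the polynomial lower bound is propagated off $\mathcal{H}^+$ via the transport equation $\partial_u(r\partial_v\phi)=-\partial_u\phi\,\partial_v r$, the identity $\partial_v\varpi=-8\pi r^2\Omega^{-2}\partial_u r\,(\partial_v\phi)^2$ together with the exponential degeneration of $\kappa=-\tfrac14\Omega^2(\partial_u r)^{-1}$ (the blueshift) yields the divergence of $\varpi$, and the $C^2$ and $H^1_{\rm loc}$ statements follow exactly as the paper derives them from the Kretschmann inequality $(\ref{Kretsch})$ and the Raychaudhuri/mass identities. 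You also correctly flag the coupling between the degeneration rate of $\kappa$ and the growing mass as the main obstacle; in~\cite{md:cbh} this is tamed by running the blow-up step by contradiction, so that the assumed boundedness of $\varpi$ is precisely what supplies the control needed both to propagate the lower bound inward and to establish the exponential smallness of $\kappa$.
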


We have stated above the $C^2$ inextendibility result for the metric because it is the
easiest to give a self-contained
proof in view of the fact that curvature is a geometric invariant.
One expects to be able to make a  much stronger statement, however, namely that  \emph{the
metric can in fact not be continued beyond $\mathcal{CH}^+$  as a continuous
metric with
Christoffel symbols  in $L^2_{\rm loc}$.}\footnote{In~\cite{md:cbh}, it is shown that
the Christoffel symbols fail to be in $L^2_{\rm loc}$ for a \emph{particular}  extension 
with continuous metric.} This notion of inextendibility,
though not sufficient to show that macroscopic observers are torn apart,
ensures that the boundary of spacetime is singular enough so that
one cannot continue as
weak solutions to a suitable Einstein--matter system. In this sense,
it is sufficient to ensure a version of determinism.
The above results suggest that ``inextendiblity as a Lorentzian manifold
with continuous metric and with Christoffel symbols in $L^2_{\rm loc}$'' may be the correct  formulation
of  ``inextendible as a suitably regular Lorentzian metric'' in the statement of
 strong cosmic censorship. This formulation is due
 to Christodoulou.  See the discussion in the introduction
of~\cite{formation}.\footnote{\label{uposnmeiwsn}We note that these 
inextendibility remarks apply equally to the
models of Section~\ref{nulldustmodels}. In particular, from this
point of view, the singularity in the model studied by
 Hiscock~\cite{hiscock} is not really less singular than that of 
Poisson--Israel~\cite{pi:is}, when both are examined primarily from the perspective 
of the blow-up of their Christoffel symbols.
 In this regard, see also the discussion in~\cite{hermanhiscock}.}

We remark finally that
there is an additional level in the hierarchy of spherically symmetric
models which one can consider--namely
the Einstein--Maxwell--\emph{complex} scalar field system.
In this model, the scalar field carries charge which can source
the Maxwell field, and thus charged black holes
can form in collapse
from data which are topologically Euclidean and contain
no horizons.
This model has been studied numerically in~\cite{hodpiran} while
results similar to Theorem~\ref{thes1} (conditional, however, on an analogue
of Theorem~\ref{plt}) have been obtained
in upcoming work of Kommemi~\cite{kommemiCauchy}.

\subsection{Global, bifurcate weak null singularities}
\label{globalstarts}
As we have already discussed in Section~\ref{themodels},
in contrast to the model of~\cite{hodpiran, kommemi, kommemiCauchy}
just referred to, 
the Einstein--Maxwell--real scalar field system 
is such that for the Maxwell tensor to be non-trivial,
complete initial data necessarily will have
two asymptotically flat ends.  The theorems of the previous section 
only probed the structure of the boundary of spacetime
in a neighbourhood of $i^+$. \emph{What about the remaining boundary?}
This will be the subject of the present paper.

A preliminary result, 
using the fact that  the  matter model is, in the language of~\cite{kommemi},
``strongly tame'',
implies that, if the initial data hypersurface $\Sigma$ is moreover assumed
to be ``future admissible'' (see Section~\ref{genframsec}),
this boundary in general 
is as below:
\[
\input{glob_gen.pstex_t}
\]
where in addition to the null boundary components $\mathcal{CH}^+$ emanating
from $i^+$, on which $r$ is bounded below (at this level of generality,
\emph{these components are 
possibly empty}, but are indeed non-empty if Theorem~\ref{thes1}
applies),
there is an (again, \emph{possibly empty!})~achronal boundary on which
$r$ extends continuously to $0$, depicted above as the thicker-shaded dotted line. See 
Proposition~\ref{frameprop} for a precise statement.
 Note that the boundary decomposition implicit in the above diagramme already 
 represents a non-trivial statement about
possible singularity formation, 
as it excludes in particular ``first'' singularities corresponding to TIPs with 
compact intersection with $\Sigma$ and whose spherically
symmetric spheres do not
contract in the limit to zero area.
In view of the topology of the initial data and the monotonicity properties
inherent in the Einstein equations, it follows that
$J^-(\mathcal{I}^+)$ has (as depicted) a non-empty complement and $\mathcal{I}^+$
is complete. Thus, in particular, the above result contains the
statement of  ``weak'' cosmic censorship.\footnote{We emphasise, however,
that this is a \emph{very} soft result in comparison with Christodoulou's seminal 
proof~\cite{naksinginst} of
weak cosmic censorship in the case of one end, where the main
obstacle is null boundary components emanating from the centre of symmetry. In
the two-end case, there is no such centre and thus, a fortiori, no null boundary
components of the above type.}

In this short paper, it is shown that when one restricts
to suitably small perturbations of sub-extremal
Reissner--Nordstr\"om, the $r=0$ boundary 
{\bf is in fact empty}, and the two null components join 
to form a bifurcate null boundary across which  the spacetime is globally
extendible as a manifold with $C^0$ Lorentzian metric (to which the scalar
field also extends continuously)! Moreover, 
if the assumption of Theorem~\ref{thes2} holds on both components
of the event horizon $\mathcal{H}^+$, then the Hawking mass blows up
\emph{identically} on this bifurcate null boundary and the spacetime is 
\emph{inextendible}
as a Lorentzian manifold with  $C^2$ metric, the
scalar field is  intextendible as an $H^1_{\rm loc}$ function, and
in fact, one can make a statement about the blow up 
of the $L^2$ norm of the
Christoffel symbols.  
Specifically:
\begin{theorem}
\label{introtheo}
Let $(\mathcal{M},g, \phi, F)$ be the maximal development
of  sufficiently small 
spherically symmetric perturbations of asymptotically flat two-ended data corresponding
to subextremal
Reissner--Nordstr\"om with parameters $0<Q_{RN}<M_{RN}$, under the evolution of
the Einstein--Maxwell--real scalar field system.

Then there exists a later Cauchy surface $\Sigma_+$ which is
future-admissible and such that to the future
of $\Sigma_+$,
 the Penrose diagramme of $(\mathcal{M},g)$ is given by:
\[
\input{glob_pert.pstex_t}
\]

Similarly, there exists also an earlier Cauchy surface $\Sigma_-$
which is past-admissible,
such that to the past of $\Sigma_-$, the Penrose diagramme of $(\mathcal{M},g)$
is given by a time reversed depiction of the above,
with boundary components $\mathcal{I}^-$,
$\mathcal{CH}^-$, etc.

The global bound
 \[
 r\ge M_{RN}-\sqrt{M^2_{RN}-Q^2_{RN}}-\epsilon
 \]
 holds 
 for the area-radius $r$ of the spherically symmetric spheres,
where $\epsilon\to 0$
as the `size' of the perturbation tends to $0$.
Moreover, the metric extends continuously beyond
 $\mathcal{CH}^+$ to a strictly larger Lorentzian manifold 
 $(\widetilde{\mathcal{M}},\widetilde{g})$, making $\mathcal{CH}^+$ a bifurcate null 
hypersurface in $\widetilde{\mathcal{M}}$. The scalar field $\phi$
extends to a continuous function on $\widetilde{\mathcal{M}}$.
All future-incomplete  
causal geodesics in $\mathcal{M}$ extend to enter $\widetilde{\mathcal{M}}$.

Finally, if $\phi$ satisfies the assumption of Theorem~\ref{thes2}  
 on both components of the horizon $\mathcal{H}^+$, then the Hawking mass 
extends ``continuously''  to $\infty$ on all of $\mathcal{CH}^+$.
In particular,   $(\mathcal{M},g)$ is future inextendible as 
a  spacetime with $C^2$ Lorentzian metric,
and continuous extensions $\phi$ as in the previous paragraph fail to be $H^1_{\rm loc}$  
in a neighborhood of every point of  $\partial\mathcal{M}\subset\widetilde{\mathcal{M}}$.

Exactly analogous statements hold for $\mathcal{H}^-$, $\mathcal{CH}^-$.
\end{theorem}

In fact, one can hope to show that, under the assumption of Theorem~\ref{thes2},
given an extension $(\widetilde{\mathcal{M}},\widetilde{g})$ with 
$C^0$ metric, 
the Christoffel symbols fail to be $C^{0,1}$ 
everywhere on $\mathcal{CH}^+$, and,
wherever $\phi|_{\mathcal{CH}^+}$ is not locally constant,
the Christoffel symbols in fact fail to be  locally $L^2$.
In particular, the latter holds in a neighbourhood of $i^+_{A,B}$.
As in~\cite{md:cbh}, we will here, however, only show these statements \emph{for a particular $C^0$ extension} and
not address the geometric issues in generalising this for all such extensions.

If one only considered, say, future evolution,
the proof of the stability part of 
the above theorem would be much easier (but much less interesting)
if one moreover restricted to data which is supported strictly between
the two components of the future event horizon $\mathcal{H}^+$, as such solutions would
coincide with Reissner--Nordstr\"om in a neighbourhood of $i^+$. Evolving to the past, 
however,
this property would be lost. We have stated our result as a statement
about both future and past evolution to emphasise that we
are not dealing with that essentially uninteresting case.
 In fact, our smallness condition does not require that
$\phi$ be initially of compact support.

The analogy of the Reissner--Nordstr\"om and Kerr families suggests that the situation for 
solutions of the vacuum equations (with no symmetry
assumed!) in a neighbourhood of
the Kerr family may be similar.  This is further supported by a perturbative
analysis by Ori~\cite{ori}.
One may thus conjecture:
\begin{conjecture}[A.~Ori]
\label{introconj}
Let $(\mathcal{M},g)$ be the maximal vacuum Cauchy development of 
sufficiently small perturbations of asymptotically flat two-ended 
Kerr data corresponding to parameters $0<|a|<M$.
Then there exist  both a future and past extension $(\widetilde{\mathcal{M}},\widetilde{g})$ 
of $\mathcal{M}$ with $C^0$ metric $\widetilde{g}$
such that $\partial\mathcal{M}$ is a bifurcate
null cone in $\widetilde{\mathcal{M}}$ and {\bf all} future (past)  incomplete geodesics in
$\gamma$ pass into $\widetilde{\mathcal{M}}\setminus\mathcal{M}$.

Moreover, for generic such perturbations, any $C^0$ extension $\widetilde{\mathcal{M}}$
will fail to have $L^2$ Christoffel symbols in a neighbourhood of any
point of $\partial\mathcal{M}$.
\end{conjecture}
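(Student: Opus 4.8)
The plan is to transpose the strategy behind Theorem~\ref{introtheo} to the vacuum setting, replacing each spherically symmetric ingredient by its (vastly harder, and in several places still conjectural) analogue for Kerr, as anticipated by Ori~\cite{ori}. The spherically symmetric proof rests on three pillars: upper bounds of Price-law type for the perturbation along $\mathcal{H}^+$ (Theorem~\ref{plt}), feeding the interior $C^0$-stability mechanism of Theorem~\ref{thes1}; a global Cauchy-stability argument, launched from a late admissible surface, showing that the area-radius stays bounded below by $r_-$ up to an error tending to zero, so that the two interior null pieces join into a \emph{bifurcate} null cone with no intervening $r=0$ boundary; and the lower-bound/mass-inflation input of Theorem~\ref{thes2} giving the $L^2$ blow-up of the connection for generic data. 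I would organise the vacuum argument around these same three pillars, run both to the future and to the past exactly as in Theorem~\ref{introtheo}.

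First I would invoke the nonlinear stability of the Kerr exterior (for $0<|a|<M$, in the two-ended setting) to produce, after the evolution settles, a late Cauchy surface $\Sigma_+$ to the future of which the perturbed exterior is close to an exact Kerr of nearby parameters, with decay of the curvature perturbation along each component of $\mathcal{H}^+$ at the sharp (Price-law) rate. This is the vacuum analogue of combining \cite{dr1} with the framework of Theorem~\ref{thes1}; since at present only linear and slowly rotating nonlinear exterior stability are available, this step must for now be taken as a hypothesis. Granting such decay, I would then propagate control into the black-hole interior via the $C^0$-stability theory of the Kerr Cauchy horizon, which yields, in a neighbourhood of each $i^+$, a non-empty piece of $\mathcal{CH}^+$ across which $\widetilde g$ extends continuously and a suitably defined radius function, built from the perturbed geometry, remains bounded below by $M-\sqrt{M^2-a^2}-\epsilon$.

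The genuinely new step---and the heart of any proof of the \emph{global} statement---is to close these two pieces into a single bifurcate cone. In spherical symmetry this is a consequence of the monotonicity of $r$ under Raychaudhuri together with an elementary continuity argument; in vacuum there is no distinguished area-radius and no such monotonicity, so one must instead run a Cauchy-stability argument directly at the level of a double-null-type foliation adapted to the two horizons, controlling the full geometry in a neighbourhood of the bifurcation sphere where the two copies of $\mathcal{CH}^+$ meet. One must show that the smallness of the data on $\Sigma_+$ persists, \emph{uniformly}, across the entire interior slab bounded by the bifurcate cone, thereby excluding both a spacelike singularity and any pinching of the spheres to zero area; the statement that all future-incomplete causal geodesics enter $\widetilde{\mathcal{M}}\setminus\mathcal{M}$ would then follow from the $C^0$ extendibility together with the characterization of $\partial\mathcal{M}$. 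This closing-up step is where the absence of symmetry bites hardest and is, in my view, the main obstacle: the single-end interior theory does not cover a full neighbourhood of the bifurcation sphere, and the geometric quantities one would monotonically propagate in spherical symmetry have no clean substitute.

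Finally, for the rigidity/genericity clause I would impose a lower bound on the perturbation along $\mathcal{H}^+$ saturating the Price-law rate---shown to hold for generic data by a separate argument ruling out anomalously fast decay---and then run the vacuum analogue of the mass-inflation estimate of Theorem~\ref{thes2} to force blow-up of a curvature energy, hence failure of local $L^2$-integrability of the Christoffel symbols near every point of $\partial\mathcal{M}$. Beyond the closing-up step, the second serious difficulty is that the nonlinear exterior stability input is itself a major open problem, so that a complete proof requires nonlinear control all the way from $i^0$ through the exterior, across $\mathcal{H}^+$, and up to a complete bifurcate $\mathcal{CH}^+$; it is precisely this chaining together of several individually hard results, \emph{without} symmetry, that places the conjecture beyond current methods.
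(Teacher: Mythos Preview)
The statement you have been asked to ``prove'' is Conjecture~\ref{introconj}, and the paper contains no proof of it whatsoever: it is explicitly stated as an open conjecture, motivated by the analogy between Reissner--Nordstr\"om and Kerr and by Ori's perturbative analysis~\cite{ori}. There is therefore no ``paper's own proof'' against which to compare your proposal.

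Your write-up is not a proof either, and to your credit you say so: you correctly identify that the nonlinear stability of the Kerr exterior with sharp Price-law decay, the $C^0$ interior stability of the Kerr Cauchy horizon, the global closing-up near the bifurcation sphere, and the generic lower bound driving the $L^2$ blow-up of the connection are each, at the time of writing, open problems (or at best partially resolved). The roadmap you sketch is exactly the one the paper itself suggests by analogy with Theorem~\ref{introtheo}, and your identification of the two principal obstacles---the absence of a monotone area-radius to globalise the interior estimates, and the missing nonlinear exterior stability input---matches the paper's own remarks (see the closing paragraph of the introduction and Section~\ref{lukrodtheory}). But a roadmap with conjectural inputs is not a proof; the honest summary is that the statement is open and your proposal is a plausible strategy, not an argument that can be checked.
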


Thus, according to the above conjecture, not only
should black holes generically have part\footnote{We emphasise explicitly that the implication that \emph{part} of the
singular boundary be null, which is already suggested by Theorem~\ref{thes1}
and~\ref{thes2}, should apply to generic black holes
forming in collapse from one asymptotically flat end, in view of the
fact that it is believed that these always approach Kerr in a neighbourhood
of $i^+$ in the domain of outer communications.
The above conjecture
cannot however address the {\bf global} structure of the 
boundary of such spacetimes as these are 
{\bf globally} far from the Kerr family. We note, for instance, that
the numerical work of~\cite{hodpiran} in the context of the spherically symmetric
charged scalar field model suggests, in addition to a null component,
also a non-empty spacelike portion of 
the singularity in the case of collapse of data with one end. 
See also Conjecture~1.11 of~\cite{kommemi} and subsequent comments therein.} 
of their singular boundary 
null, but
a small neighbourhood  (in the moduli space of vacuum metrics)
of the Kerr family would, in both future and past,
  generically {\bf be entirely free of spacelike singularities}.

We end this introduction with a final remark: Until a few years ago,
``large data'' problems for the Einstein vacuum equations without symmetry
seemed mathematically   intractable.
This view changed completely with the seminal work of Christodoulou~\cite{formation} on
black hole formation in  vacuum,
which demonstrated how largeness can be ``tolerated'' in the analysis as long as    it 
appears
in a controlled way and is ``coupled''  with a corresponding
``smallness''. In broad terms, this is in fact quite reminiscent of structure that plays a role
in  the analysis here.
An even more directly related  link is provided by the recent
Luk--Rodnianski theory~\cite{LukRod1, LukRod2} of interacting impulsive gravitational waves, reviewed briefly in  Section~\ref{lukrodtheory} of this paper,
which illuminates a new hiearchy in the vacuum
equations which is almost certainly  key to
 understanding solutions in the presence of null singularities.
 These  works, coupled with progress on the stability of black hole
 \emph{exterior} regions (surveyed in~\cite{bhsp}, cf.~the role of Theorem~\ref{plt}), give hope that Conjecture~\ref{introconj}, 
 and its      ``cosmological twin'' Conjecture~\ref{cosmocon}--but also, quite ominously,
 Conjecture~\ref{cosmocon2}--both  
 stated in Section~\ref{epilogue},
will soon  be amenable to mathematical analysis.

\section{The Einstein--Maxwell--(real)-scalar field model}
\label{THEMODELSEC}
The Einstein--Maxwell--(real)-scalar field system describes the
interaction of a gravitational field, an electromagnetic field, and an uncharged scalar field.
The latter two thus interact solely via their coupling to gravity.
The equations are given by
\begin{equation}
\label{EMSF1}
R_{\mu\nu}-\frac12g_{\mu\nu}R=8\pi T_{\mu\nu},
\end{equation}
\begin{equation}
\label{EMSF2}
\nabla^\mu F_{\mu\nu} =0 , \qquad \nabla_{[\lambda}F_{\mu\nu]}=0,
\end{equation}
\begin{equation}
\label{EMSF3}
\Box_g\phi = 0,
\end{equation}
\begin{equation}
\label{EMSF4}
T_{\mu\nu}= \partial_\mu\phi\,\partial_\nu\phi-\frac12 g_{\mu\nu} \nabla^\alpha\phi\nabla_\alpha\phi
+\frac{1}{4\pi}F^\alpha_{\, \, \mu}F_{\alpha\nu}- \frac1{16\pi} g_{\mu\nu} F^{\alpha\beta} F_{\alpha\beta}.
\end{equation}

Let us recall the version of the foundational
well posedness theorem for general relativity, which
applies for the above model: 
\begin{litthe}[\cite{Mch}]
Let $(\Sigma, \bar{g}, K, \bar{F}_{ab}, \bar{F}_{0b}, \phi_0,\phi_1)$   be a smooth
Einstein--Maxwell--scalar field
 data set. Then there exists a unique
non-empty maximal smooth 
Cauchy development  $(\mathcal{M},g,F_{\mu\nu}, \phi)$
for the equations $(\ref{EMSF1})$--$(\ref{EMSF4})$.    
\end{litthe}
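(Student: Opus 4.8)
The plan is to follow the classical Choquet-Bruhat--Geroch strategy, adapted to the coupled system (\ref{EMSF1})--(\ref{EMSF4}). The fundamental difficulty is that the Einstein equations (\ref{EMSF1}) are not hyperbolic as written, owing to their invariance under diffeomorphisms; likewise the Maxwell system (\ref{EMSF2}) carries a gauge freedom once one writes $F=dA$. The first step, therefore, is to break both gauges. For the metric I would impose \emph{wave} (harmonic) coordinates, i.e.~demand that the contracted Christoffel symbols $H^\mu:=g^{\alpha\beta}\Gamma^\mu_{\alpha\beta}$ vanish; substituting into the Ricci tensor yields the \emph{reduced} Einstein equations, in which the principal part of $R_{\mu\nu}$ becomes $-\tfrac12 g^{\alpha\beta}\partial_\alpha\partial_\beta g_{\mu\nu}$, a quasilinear wave operator on the components of $g$. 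For the Maxwell field, since the second equation of (\ref{EMSF2}) states that $F$ is closed, one may locally write $F=dA$; imposing the Lorenz gauge $\nabla^\mu A_\mu=0$ then turns the first equation of (\ref{EMSF2}) into the wave equation $\Box_g A_\nu = R_\nu{}^\mu A_\mu$, while closedness holds automatically. As (\ref{EMSF3}) is already a wave equation, the unknowns $(g_{\mu\nu},A_\mu,\phi)$ together satisfy a quasilinear system of second-order wave equations.

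The second step is local well-posedness for this reduced system. From the data $(\Sigma,\bar g,K,\bar F_{ab},\bar F_{0b},\phi_0,\phi_1)$ one constructs initial values for $(g,A,\phi)$ and their transverse derivatives on $\Sigma$, choosing the lapse, shift and the time component of $A$ so that $H^\mu=0$ and $\nabla^\mu A_\mu=0$ hold initially. Standard energy estimates in Sobolev spaces $H^s$ for $s$ sufficiently large, together with a Picard-type iteration, then furnish a unique smooth local solution of the reduced system with continuous dependence on the data. The third step is to recover the \emph{geometric} equations from the reduced ones. Using the twice-contracted Bianchi identity and the conservation law $\nabla^\mu T_{\mu\nu}=0$, one shows that $H^\mu$ satisfies a homogeneous linear wave equation; since the constraint equations on $\Sigma$ and the gauge choice force $H^\mu$ and its normal derivative to vanish initially, uniqueness for the wave equation gives $H^\mu\equiv 0$, so the reduced solution in fact solves (\ref{EMSF1}). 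An analogous argument, invoking charge conservation, propagates $\nabla^\mu A_\mu=0$, whence (\ref{EMSF2}) holds.

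The fourth step is geometric uniqueness of the local development: any two smooth Cauchy developments of the same data agree, up to a diffeomorphism fixing $\Sigma$, on a common neighbourhood of $\Sigma$. This follows by transporting both developments into wave gauge and invoking uniqueness for the reduced system. The final step is the passage to the \emph{maximal} development. I would consider the partially ordered set of all smooth globally hyperbolic developments of the given data, ordered by isometric embedding fixing $\Sigma$. The geometric uniqueness just established allows one, following Geroch's argument, to glue any totally ordered family of developments along their common isometric parts into a single development that is again globally hyperbolic; this provides an upper bound for every chain, so Zorn's lemma produces a maximal element $(\mathcal{M},g,F,\phi)$, non-empty by local existence and unique again by geometric uniqueness.

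The main obstacle is this last step. Local existence and the propagation of the gauges and constraints are by now routine, but the Geroch gluing requires genuine care: one must check that the quotient obtained by identifying isometric pieces is again a \emph{Hausdorff} manifold---the point where any failure of uniqueness would destroy the construction---and that global hyperbolicity and the Cauchy property survive the inductive limit. It is precisely the geometric uniqueness of the fourth step that renders the identifications unambiguous and thereby underwrites the existence of a \emph{unique} maximal development.
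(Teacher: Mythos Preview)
Your outline is a faithful sketch of the classical Choquet-Bruhat--Geroch argument, suitably adapted to the Einstein--Maxwell--real scalar field system, and contains no substantive gaps at the level of detail given. Note, however, that the paper does \emph{not} supply its own proof of this statement: it is recorded as a literature theorem and attributed directly to~\cite{Mch}, with no argument beyond the citation. There is therefore nothing in the paper to compare your approach against; you have simply filled in what the paper takes for granted.
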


The motivation for the above system has been discussed already
in Section~\ref{themodels} and at length  in~\cite{md:si, md:cbh}.
It is perhaps the simplest system admitting 
Reissner--Nordstr\"om as an explicit solution and allowing at the same time
for non-trivial wave dynamics in spherical symmetry.

\section{The spherically symmetric reduction}
\label{ssred}
In the case of spherically symmetric data, then the maximal Cauchy development
is spherically symmetric and the Einstein equations reduce to a $1+1$
dimensional system.

For the purpose of this discussion,
by spherically symmetric data set
 we assume that $(\Sigma, \bar{g})$
is a warped product $\mathbb R\times_r \mathbb S^2$,
with metric $dx^2 +r^2(x)d\sigma_{\mathbb S}^2$ (admitting the obvious ${\rm SO}(3)$-action),
$K_{ab}$ is an ${\rm SO}(3)$-invariant symmetric $2$-tensor on 
$\mathbb R\times_r \mathbb S^2$,
$\phi_0$, and $\phi_1$ are functions on $\mathbb R$,
and $\bar{F}_{ab}$, $\bar{F}_{0b}$ 
are an ${\rm SO}(3)$ invariant $2$-form and vector field, respectively, on
$\mathbb R\times_r \mathbb S^2$, together satisfying the
Einstein--Maxwell-real scalar field \emph{constraint}
equations.  These equations can be found, for instance, in~\cite{dr1}.

\begin{proposition}
Let
$(\mathcal{M},g, F, \phi)$ be the maximal Cauchy development of
smooth spherically
symmetric data as described above. Then $\mathcal{M}$ is spherically symmetric
and
is moreover
of the form of a warped product $(\mathcal{M},g) =\mathcal{Q}\times_{r}\mathbb S^2$, 
where $\mathcal{Q}$
admits a global null coordinate system $(u,v)$,
and the metric
takes the form
\[
-\Omega^2(u,v)\, du\, dv+r^2(u,v)\sigma_{\mathbb S^2}.
\]
The Maxwell field takes the form    
\[
F= \frac{\Omega^2 Q_e}{2r^2} du\wedge dv+  Q_m \sin\theta\, d\theta\wedge d\varphi
\]
for real constants $Q_e$, $Q_m$,
and $\phi$ descends to a function $\phi:\mathcal{Q}\to \mathbb R$. 
Defining $Q=\sqrt{Q_e^2+Q_m^2}$, the
full content of the equations is thus given by
\begin{equation}
\label{requ}
\partial_u\partial_v r=-\frac{\Omega^2}{4r}-\frac{1}r\partial_v r\partial_u r+\frac14
\Omega^2r^{-3}Q^2,
\end{equation}
\begin{equation}
\label{Omegeq}
\partial_u\partial_v \log \Omega^2=-4\pi \partial_u\phi\partial_v\phi+\frac{\Omega^2}{4r^2}+\frac1{r^2}
\partial_vr\partial_ur-\frac{\Omega^2Q^2}{2r^{4}},
\end{equation}
\begin{equation}
\label{WAVe}
\partial_u (r\partial_v \phi)=-\partial_u\phi\partial_vr,
\end{equation}
\begin{equation}
\label{Raych1}
\partial_u(\Omega^{-2}\partial_u r) =-4\pi r \Omega^{-2}(\partial_u\phi)^2,
\end{equation}
\begin{equation}
\label{Raych2}
\partial_v(\Omega^{-2}\partial_v r) = -4\pi r \Omega^{-2}(\partial_v\phi)^2.  
\end{equation}
\end{proposition}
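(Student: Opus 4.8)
The plan is to prove this in four stages: propagate the spherical symmetry to the whole development by uniqueness, pass to the $1{+}1$-dimensional quotient equipped with global double-null coordinates, establish the rigidity of the Maxwell field (constancy of the charges), and finally reduce the field equations term by term. First I would propagate the symmetry. For any $R\in\mathrm{SO}(3)$, its action on the data set $(\Sigma,\bar g,K,\bar F_{ab},\bar F_{0b},\phi_0,\phi_1)$ preserves all the initial quantities and hence maps the data set to an isometric one. Applying the well-posedness theorem of Section~\ref{THEMODELSEC} to both the original and the transformed data, and invoking the uniqueness of the maximal Cauchy development, one obtains a unique isometry of $(\mathcal{M},g)$ covering $R$ and restricting to the given action on $\Sigma$. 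Since $R$ was arbitrary, this yields an isometric $\mathrm{SO}(3)$-action on $\mathcal{M}$ whose orbits are the evolutes of the orbits on $\Sigma$, i.e.\ generically $2$-spheres. The area-radius $r$ is then defined invariantly by setting $4\pi r^2$ equal to the area of the orbit sphere through a point.

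Since the action is isometric and proper with principal orbits diffeomorphic to $\mathbb{S}^2$, and since in the two-ended setting there are no fixed points (no centre of symmetry), the orbit space $\mathcal{Q}=\mathcal{M}/\mathrm{SO}(3)$ inherits the structure of a smooth $2$-dimensional Lorentzian manifold, and $g$ takes the warped-product form $g_{\mathcal{Q}}+r^2\,\sigma_{\mathbb{S}^2}$, with $g_{\mathcal{Q}}$ the induced metric on $\mathcal{Q}$. On the orientable, time-orientable surface $\mathcal{Q}$ I would then introduce global double-null coordinates $(u,v)$: each point admits two transverse null directions, and integrating the associated null foliations produces coordinates in which $g_{\mathcal{Q}}=-\Omega^2\,du\,dv$, so that $g$ assumes the stated form. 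The existence of such a chart \emph{globally} is the point requiring care; it is obtained by exploiting the causal and topological structure inherited from the two-ended initial hypersurface $\Sigma$ and a standard normalisation of the null rays.

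Next I would pin down the Maxwell field. An $\mathrm{SO}(3)$-invariant $2$-form on the warped product $\mathcal{Q}\times_r\mathbb{S}^2$ decomposes into an ``electric'' piece proportional to the area form $du\wedge dv$ of $\mathcal{Q}$ and a ``magnetic'' piece proportional to $\sin\theta\,d\theta\wedge d\varphi$, with coefficients a priori functions on $\mathcal{Q}$. The closure condition $\nabla_{[\lambda}F_{\mu\nu]}=0$ forces the magnetic coefficient to be a constant $Q_m$ (its differential must vanish on $\mathcal{Q}$), while the divergence equation $\nabla^\mu F_{\mu\nu}=0$, using crucially that the scalar field is uncharged so that no current sources the field, forces the electric piece to take the form $\frac{\Omega^2 Q_e}{2r^2}\,du\wedge dv$ with $Q_e$ constant. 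This is the Maxwellian rigidity underlying Birkhoff's theorem: with no charged matter the charges are conserved, and the electromagnetic energy-momentum is completely determined by $Q=\sqrt{Q_e^2+Q_m^2}$ and $r$, both contributions combining into the single $\sim Q^2 r^{-4}$ term in the reduced equations.

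Finally, the reduction of the equations is a direct computation: I would insert the warped-product metric into the Einstein tensor, write $T_{\mu\nu}$ as the sum of the scalar-field stress tensor built from $\partial_u\phi$ and $\partial_v\phi$ and the now-explicit Maxwell stress tensor, and read off the components. The $uv$ and angular components yield $(\ref{requ})$ and $(\ref{Omegeq})$, the reduced equation $\Box_g\phi=0$ yields $(\ref{WAVe})$, and the $uu$ and $vv$ components yield the Raychaudhuri equations $(\ref{Raych1})$ and $(\ref{Raych2})$. I expect the main obstacle to lie not in these computations, which are routine bookkeeping once the warped-product curvature formulae are in hand, but rather in the second stage---verifying that $\mathcal{Q}$ genuinely carries a \emph{global} double-null coordinate system---together with the care required in the third stage to exclude a non-constant charge.
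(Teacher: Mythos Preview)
The paper does not actually supply a proof of this proposition: it is stated as a standard reduction, with the equations simply recorded and followed only by remarks on the residual gauge freedom $\bar u=f(u)$, $\bar v=g(v)$. Your outline is correct and is exactly the argument one would give to justify such a statement: propagation of the $\mathrm{SO}(3)$-action by uniqueness of the maximal development, passage to the Lorentzian quotient $\mathcal{Q}$ (noting, as you do, that the two-ended topology $\Sigma\simeq\mathbb{R}\times\mathbb{S}^2$ excludes a centre of symmetry and hence fixed points), introduction of double-null coordinates on $\mathcal{Q}$, the Birkhoff-type rigidity of the Maxwell field in the absence of charged matter, and the componentwise reading-off of the reduced system. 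You also correctly flag the one genuinely nontrivial ingredient, namely that the null coordinates can be chosen \emph{globally} on $\mathcal{Q}$; this follows from global hyperbolicity of $\mathcal{Q}$ (inherited from that of $\mathcal{M}$) together with its topology $\mathbb{R}^2$, which guarantees that the two transverse null foliations are globally defined and yield a global chart.
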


 Let us note that we have a gauge freedom in defining null coordinates
 $\bar{u}= f(u)$, $\bar{v}= g(v)$, for general smooth functions $f$, $g$. 
 We will normalise expedient null coordinate systems later
 on.
 
 In ``translating'' intuition from the above system to the problem 
 of vacuum collapse, 
the reader familiar with the formalism of~\cite{formation}
should note that $r\partial_v r$ corresponds to ${\rm tr} \chi$,
while it is useful to think of $r\partial_v \phi$ as  an analogue of
$\hat\chi$, even though in the spherically symmetric model, $r\partial_v\phi$
appears in the geometry (via the coupling) at the level of curvature.
 
\section{The general framework}
\label{genframsec}
We will consider spherically symmetric, asymptotically flat data
$(\Sigma, \ldots)$ with two  ends.  Let us label the ends $A$ and $B$,
and let us restrict our $(u,v)$ coordinate system such that 
asymptotically on $\Sigma$,
$\partial_v$ points towards end $A$, and $\partial_u$ points towards $B$.

In order to obtain a ``good''  a priori
characterization
of the boundary of the maximal development, it is convenient to restrict to data
such that 
\begin{equation}
\label{admissibility}
\Sigma= {\Sigma_A}\cup\Sigma_ B
\end{equation}
where $\Sigma_A$ satisfies $\partial_ur<0$, $\Sigma_B$ satisfies
$\partial_vr<0$, and $\Sigma_A$, $\Sigma_B$ are connected. 
We shall call  two-ended asymptotically flat data satisfying $(\ref{admissibility})$
\emph{future admissible}. 
Let us note that asymptotic flatness requires that the end $A$ is ``contained''
in $\Sigma_A$ and disjoint from $\Sigma_B$, and similarly, the end
$B$ is ``contained'' in $\Sigma_B$ and disjoint from $\Sigma_A$.

Note that Cauchy hypersurfaces in Schwarzschild or Reissner--Nordstr\"om
are future admissible if they have no intersection with the closure of the white hole region in 
$\mathcal{Q}$.

The significance of the decomposition $(\ref{admissibility})$
is that by Raychaudhuri $(\ref{Raych1})$, $(\ref{Raych2})$ it follows
that $\mathcal{Q}\cap J^+(\Sigma)=\{\partial_ur<0\}\cup \{\partial_vr<0\}$.
It serves as a generalisation of Christodoulou's ``no antitrapped surfaces'' 
condition~\cite{sings},
appropriately reworked for the two-ended case.
Note finally that the property $(\ref{admissibility})$ is stable
to perturbation of data.

For future admissible data we have the following result which in fact follows from
estimates shown in~\cite{md:cbh}, but is best
 viewed as a special
case of more general results of Kommemi~\cite{kommemi} 
concerning the Einstein--Maxwell--complex scalar field system:

\begin{proposition}
\label{frameprop}
Let $(\mathcal{M}=\mathcal{Q}\times_r\mathbb S^2, \Omega^2, \phi, Q_e, Q_m)$ 
be the maximal Cauchy development of data as above, with $\Sigma$ future
admissible. 

I.~Then, in the notation of~\cite{kommemi},
the Penrose diagramme of $(\mathcal{M},g)$ 
to the future of $\Sigma$ is as below:
\[
\input{glob_form.pstex_t}
\]
with the usual convention that some of the boundary components may be empty.

II.~Under the above assumptions, then
$\mathcal{I}^+_{A,B}$ is complete,  the horizons $\mathcal{H}^+_{A,B}$ 
are non-empty  and a Penrose inequality 
$\sup_{\mathcal{H}^+_{A,B}}r\le 2M^f_{A,B}$ holds on each horizon, where
$M^f_{A,B}$ is the infimum of the Bondi mass of $\mathcal{I}^+_{A,B}$,
respectively.

III.~Suppose $(\widetilde{\mathcal{M}},\widetilde{g})$ is a $C^2$ future-extension of
$(\mathcal{M},g)$. Then, denoting
by $\pi_Q:\mathcal{M}\to\mathcal{Q}$ the natural projection,
there exists a  causal curve $\gamma$
passing into $\widetilde{\mathcal{M}}\setminus\mathcal{M}$ such 
that $\overline{\pi_{\mathcal{Q}}(\gamma|_{\mathcal{M}})}\cap (\mathcal{CH}^+_A\cup\mathcal{CH}^+_B)\ne\emptyset$.

\end{proposition}

We review very briefly some properties encoded in the above
notation (see~\cite{kommemi} for details). 
The boundary segments $\mathcal{I}^+_{A,B}$ have the
property that $r$ extends ``continuously'' to $\infty$. We identify
$\mathcal{I}^+_{A,B}$ with ``future null infinity''. The notation
$i^+_{A,B}$ in fact already encodes the statement that ``future null infinity is
complete'' (cf.~Part II above). Monotonicity properties (cf.~Section~\ref{trappedsec} 
below) allow for the definition of a nonnegative Bondi mass function on
$\mathcal{I}^+_{A,B}$ whose infimum defines the final Bondi mass $M^f_{A,B}$. 
The sets $\mathcal{CH}^+_{A,B}$ are half-open segments emanating from
(but not including) ${i}^+_{A,B}$ characterized
by the property that $0$ is not a limit point of $r$ on their interior,
whereas $r$ extends continuously to $0$ on
$\mathcal{S}_A\cup\mathcal{S}_B\cup
\mathcal{S}$.

Note that since by Part II, $J^-(\mathcal{I}^+)$ has indeed a non-trivial
complement, we have $\mathcal{CH}^+_A\cup\mathcal{CH}^+_B\cup\mathcal{S}_A
\cup\mathcal{S}_B\cup
\mathcal{S}\ne\emptyset$. Any $4$ of the above $5$ components, however,
can be empty.

We emphasise that Proposition~\ref{frameprop} is in fact a very general result
for spherically symmetric Einstein--matter systems,  holding for all 
matter-models which, in the terminology introduced by Kommemi~\cite{kommemi},
are  ``strongly tame''.
Another example of a strongly tame matter model is the Einstein--Vlasov
system. See~\cite{dafren}.

Note finally that a  Cauchy hypersurface which is future
admissible as we have defined it
cannot be admissible to past evolution.  In Reissner--Nordstr\"om or Schwarzschild,
then Cauchy hypersurfaces are past-admissible iff they do not intersect the closure
of the black hole region.
On the other hand, by a Cauchy stability
argument, one can show that given an \emph{arbitrary} spherically symmetric
asymptotically flat
Cauchy hypersurface of Reissner--Nordstr\"om, then for suitably
small perturbations of initial data, there will exist a future admissible 
Cauchy surface in the future, and a past-admissible Cauchy surface in the past
which coincide for large $r$ with the original hypersurface.
See Corollary~\ref{admississue} of Section~\ref{CSsec}.

\section{The trapped region, event horizons, and outermost apparent horizons}
\label{trappedsec}
We assume in this section that $\mathcal{Q}$ is given by Proposition~\ref{frameprop}.
We will restrict consideration to $J^+(\Sigma)$, i.e.~by $\mathcal{Q}$ below
we shall actually mean $J^+(\Sigma)\cap \mathcal{Q}$.
The notation $J^-(\mathcal{I}^+)$, etc., is understood in the usual sense
i.e.~as $J^-(\mathcal{I}^+)\cap \mathcal{Q}(\cap J^+(\Sigma))$, 
where $J^-$ denotes now the
causal structure of the ambient $1+1$ dimensional Minkowski space defining
the Penrose diagrammes.
Let us note that for $p\in \mathcal{Q}$, then $J^+_{\mathcal{Q}}(p)=
J^+_{\mathbb R^{1+1}}(p)\cap \mathcal{Q}$ so there is no
danger of confusion. Closure on the other hand will be always with respect to the ambient
topology of $\mathbb R^{1+1}$.

Let us define the \emph{trapped region}:
\[
\mathcal{T}=\{(u,v)\in\mathcal{Q}:\partial_ur<0, \partial_vr<0\}.
\]
Our definition of admissibility
$(\ref{admissibility})$
 implies that $\mathcal{T}\cap \Sigma\neq\emptyset$,
in particular,
\[
\mathcal{T}\neq \emptyset.
\]

The Raychaudhuri equations $(\ref{Raych1})$, $(\ref{Raych2})$ then
imply that $\partial_v r >0$ in $J^-(\mathcal{I}^+_A)$, 
$\partial_u r>0$ in $J^-(\mathcal{I}^+_B)$.
We recall the definition of $\mathcal{H}^+_A$ as the future boundary
of $J^-(\mathcal{I}^+_A)$ in $\mathcal{Q}$,
and similarly $\mathcal{H}^+_B$. It follows that $\partial_vr \ge 0$
on $\mathcal{H}^+_A$, while $\partial_ur\ge 0$ on $\mathcal{H}^+_B$.

By our definition of admissibility, we easily see that 
 on $\overline{J^-(\mathcal{I}^+_A)}\cap\mathcal{Q}$, 
 and thus, by Raychaudhuri $(\ref{Raych1})$ 
 also on $J^+(\mathcal{H}^+_A)$,
 we must have $\partial_ur<0$, while
 similarly, on $\overline{J^-(\mathcal{I}^+_B)}\cap\mathcal{Q}$, 
 and thus, by $(\ref{Raych2})$,
 also on $J^+(\mathcal{H}^+_B)$, we must have $\partial_vr<0$.

Let us finally note 
that in view of the above,
we  indeed have
$\mathcal{H}^+_A\cap\mathcal{H}^+_B=\emptyset$,
as has been depicted and
\[
\mathcal{T}\cap \overline{J^-(\mathcal{I}_{A,B}^+)}=\emptyset.
\]

We define $\mathcal{A}'_{A}\subset  J^+(\mathcal{H}^+_A)$ to be the set
$(u,v)\in  J^+(\mathcal{H}^+_A)$ such that $\partial_v r(u,v)=0$ but $\partial_v r(u^*,v)>0$ for
$u^*<u$. We call $\mathcal{A}'_A$ the \emph{outermost
apparent horizon} corresponding to end $A$.

Our assumptions in fact imply the non-emptiness of $\mathcal{A}'_A$.
For this, we first notice that, in view of both $(\ref{Raych1})$, $(\ref{Raych2})$,
we have
$\mathcal{CH}^+_B\cap \overline{J^+(\mathcal{H}^+_A)}\subset\overline{\mathcal{T}}$.
Similarly, we have
\[
\mathcal{S}_A\cup\mathcal{S}_B\cup \mathcal{S}\subset \overline{\mathcal{T}}.
\]
(Indeed, if  $r(u,v)< \inf_{x\in \Sigma} r(x)$, then $(u,v)\in \mathcal{T}$.)
It follows that along every ingoing null ray from $\mathcal{H}^+_A$,
one must encounter a unique point of $\mathcal{A}'_A$.

We define $\mathcal{A}'_B$ analogously.

We emphasise that $\mathcal{A}'_{A,B}$ as defined can be seen to be achronal sets,
but are not necessarily connected.

Finally, let us remark that the above notion of trapped region $\mathcal{T}$ 
is defined by restricting attention to surfaces of symmetry. 
For a  general discussion of \emph{non-spherically symmetric} trapped surfaces
in spherically symmetric ambient spacetimes, see~\cite{seno}.

\section{The Hawking mass}
Recall the important quantity
\begin{equation}
\label{mdef}
m=\frac r2(1-|\nabla r|^2)=\frac r2(1+4\Omega^{-2}\partial_ur\partial_vr)
\end{equation}
known as the \emph{Hawking mass}.

We remark that the Hawking mass is at the level of first derivatives  of
the metric.
We also note, however, the inequality
\begin{equation}
\label{Kretsch}
R_{\mu\nu\alpha\beta} R^{\mu\nu\alpha\beta} \ge \frac{4}{r^4}
\left(\frac{2m}r\right)^2
\end{equation}
for the Kretschmann scalar.

The so-called \emph{renormalised Hawking mass} 
\[
\varpi = m+\frac{Q^2}{2r}
\]
satisfies
\begin{equation}
\label{RHM1}
\partial_u \varpi = - 8\pi r^2\Omega^{-2}\partial_v r (\partial_u\phi)^2, 
\end{equation}
\begin{equation}
\label{RHM2}
\partial_v \varpi =-8\pi r^2\Omega^{-2}\partial_u r (\partial_v\phi)^2.
\end{equation}

We see the monotonicity properties 
\begin{equation}
\label{monintrap}
\partial_u\varpi\ge 0,\qquad \partial_v\varpi\ge 0\qquad{\rm\ in\ }\mathcal{T},
\end{equation}
while
\[
\partial_u\varpi\le 0, \qquad \partial_v\varpi\ge 0\qquad{\rm\ in\ }\overline{J^-(\mathcal{I}^+_A)}\cup
(J^+(\mathcal{H}^+_A)\cap J^-(\mathcal{A}'_A)),
\]
and similarly, 
\[
\partial_v\varpi\le 0, \qquad \partial_u\varpi\ge 0\qquad{\rm\ in\ }\overline{J^-(\mathcal{I}^+_B)}\cup
(J^+(\mathcal{H}^+_B)\cap J^-(\mathcal{A}'_B)).
\]
Since by $(\ref{mdef})$ we have $0<r=2m$ on $\mathcal{A}'_{A,B}$,
it follows that on $\mathcal{A}'_{A,B}$ we have
$\varpi\ge Q$ and $\varpi  > 0$.

Defining now the asymptotic renormalised Hawking mass
$\varpi^+_{A,B}$ on each
of the two horizons $\mathcal{H}^+_{A,B}$
by
\[
\varpi^+_{A,B} =\sup_{\mathcal{H}^+_{A,B}} \varpi,
\]
we have 
by our monotonicity properties that
\[
0<\varpi^+_{A,B}\le M^f_{A,B}\le  M^{ADM}_{A,B}<\infty.
\]
where $M^f_{A,B}$, $M^{ADM}_{A,B}$ denote the final Bondi and ADM mass
associated to each end, as well as
\begin{equation}
\label{notsuper}
Q\le \varpi^+_{A,B}.
\end{equation}

Interestingly, $(\ref{notsuper})$ can be proven given only 
the solution in $\mathcal{H}^+\cup J^-(\mathcal{I}^+)$ under the assumption 
that $r$ is bounded above on $\mathcal{H}^+$, i.e.~without
using $\mathcal{A}'_{A,B}\ne\emptyset$.  See~\cite{md:cbh}.

Let us note that $(\ref{notsuper})$ allows for the case
\begin{equation}
\label{extrem}
Q=\varpi^+_{A,B}.
\end{equation}
This is the case of an \emph{asymptotically extremal black hole} (from the perspective
of the event horizon\footnote{cf.~the condition $Q=M^f_{A,B}$ discussed in~\cite{kommemicharge}}). 
 As we shall see, 
Theorem~\ref{chstab} below will have to exclude 
$(\ref{extrem})$ in its statement. It is an open question to understand 
the dynamics sufficiently so as to
characterize the causal structure of the boundary in the case $(\ref{extrem})$, even in a neighbourhood
of $i^+_{A,B}$. See the remarks in~\cite{kommemi}.

A much  more general discussion, including some of the statements given 
here, is contained in Kommemi~\cite{kommemicharge}.

\section{Persistence of the null boundary: $\mathcal{CH}^+_{A,B}\ne \emptyset$}
\label{persistence}
We now give a more precise version of the first theorem
of~\cite{md:cbh} (Theorem~\ref{thes1} of the introduction) 
adapted to the current
framework.

For this, let us first normalise the null coordinates $u$ and
$v$.
Given $\Sigma$ with metric $dx^2+r^2(x)d\sigma_{\mathbb S^2}$,
where $x\to\infty$ corresponds to end $A$, $x\to-\infty$ corresponds to end $B$,
we pick an arbitrary origin and normalise our null coordinates such that
$\partial_ux=-1$, $\partial_vx=1$ on $\Sigma$.
We note that our notion of asymptotic flatness then
guarantees that these coordinates are Eddington--Finkelstein-like, in
the sense that $\partial_ur$ is uniformly bounded
above and away from $0$ on $\mathcal{I}^-_B$, $\mathcal{I}^+_A$,
and similarly $\partial_vr$ on $\mathcal{I}^-_A$, $\mathcal{I}^+_B$ 
(in small neighborhoods of $i^0_{A,B}$).

\begin{theorem}[\cite{md:cbh}]
\label{chstab}
Under the assumptions of Proposition~\ref{frameprop}
and the above normalisation of the null coordinates, 
let us in addition require that initially on $\Sigma$,
\begin{equation}
\label{stoapeiro}
|\partial_u (r\phi)|\le Cr^{-2}, \qquad |\partial_v(r\phi)|\le Cr^{-2},
\end{equation}
for some arbitrary constant $C$.
Then if
\begin{equation}
\label{nonex}
0< Q< \varpi^+_{A,B},
\end{equation}
then 
\begin{equation}
\label{thenonemp}
\mathcal{CH}^+_{A,B}\ne \emptyset.
\end{equation}
Moreover,
there exist  neighourhoods $\mathcal{U}_{A,B}$ of $i^+_{A,B}$ such 
that $\mathcal{A}'_{A,B}\cap \mathcal{U}_{A,B}$ is a connected achronal
curve terminating at $i^+_{A,B}$, and such that $D^+(\mathcal{A}'_{A,B})\subset\mathcal{T}$. 
\end{theorem}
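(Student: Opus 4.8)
The plan is to reduce the statement to a neighbourhood of $i^+_A$, the argument near $i^+_B$ being identical after interchanging the roles of $u$ and $v$; modulo the present normalisations this then becomes the first theorem of \cite{md:cbh}, so that what really remains is to organise those estimates within the two-ended framework of Proposition~\ref{frameprop}. First I would invoke Price's law: hypothesis $(\ref{stoapeiro})$ is precisely the decay assumption at spatial infinity $i^0_A$ needed to apply the upper bounds of Theorem~\ref{plt}, proven in \cite{dr1}, which yield polynomial-in-$v$ decay of $|\phi|$ and $|\partial_v(r\phi)|$ along $\mathcal{H}^+_A$ together with convergence of $(r,\Omega^2,\varpi)$ on $\mathcal{H}^+_A$ to Reissner--Nordstr\"om values, with limiting parameters $(Q,\varpi^+_A)$. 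Here $(\ref{nonex})$ ensures that this limiting Reissner--Nordstr\"om solution is \emph{strictly subextremal}, so that its Cauchy-horizon radius $r_-=\varpi^+_A-\sqrt{(\varpi^+_A)^2-Q^2}$ is positive and the surface gravities controlling the interior estimates are nonzero.

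Next I would pose the characteristic initial value problem for $(\ref{requ})$--$(\ref{Raych2})$ in $J^+(\mathcal{H}^+_A)$, with data on $\mathcal{H}^+_A$ (controlled by the previous step) and on a late ingoing cone close to $i^+_A$, and run a bootstrap on the region $\{r\ge r_0\}$ for a fixed $r_0\in(0,r_-)$. The bootstrap would carry: the sign $\partial_u r<0$, already known on $J^+(\mathcal{H}^+_A)$ by admissibility and Raychaudhuri $(\ref{Raych1})$; upper and lower bounds on $\Omega^2$ capturing first the red-shift and then the blue-shift behaviour inherited from the limiting geometry; monotonicity and boundedness of $\varpi$ via $(\ref{RHM1})$--$(\ref{RHM2})$; and weighted bounds on $r|\partial_v\phi|$ obtained by integrating the transport equation $(\ref{WAVe})$ against the horizon decay. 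The point is that in the blue-shift region $\Omega^2$ is exponentially small, and this, together with the polynomial decay of $\phi$ transported off $\mathcal{H}^+_A$, makes the right-hand sides of $(\ref{requ})$ and $(\ref{Omegeq})$ integrable, so that $r$ stays bounded below --- indeed converges along each ingoing ray to a positive limit function --- $\Omega^2$ does not degenerate, and $\phi$ extends continuously. This produces a non-empty limiting null segment emanating from $i^+_A$ on which $\inf r>0$, which is exactly $(\ref{thenonemp})$.

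For the apparent-horizon statement I would argue as follows. On $\mathcal{H}^+_A$ we have $\partial_v r\ge0$ while $\partial_u r<0$ on all of $J^+(\mathcal{H}^+_A)$; since, as already observed, $r$ must eventually drop below $\inf_\Sigma r$ along any ingoing ray from $\mathcal{H}^+_A$ and hence enter $\mathcal{T}$, the locus $\{\partial_v r=0\}$ is met exactly once along each such ray, defining $\mathcal{A}'_A$. That $D^+(\mathcal{A}'_A)\subset\mathcal{T}$ --- i.e.\ that $\partial_v r$ stays negative to the future --- follows from the sign of $\partial_u\partial_v r$ in $(\ref{requ})$ in the region $r>Q$ adjacent to $\mathcal{H}^+_A$, and, in the complementary region $r\le Q$ nearer the Cauchy horizon, from the exponential closeness to the strictly trapped Reissner--Nordstr\"om interior established in the second step. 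Finally, near $i^+_A$ that same closeness, where the Reissner--Nordstr\"om apparent horizon degenerates onto $\mathcal{H}^+_A$, forces $\mathcal{A}'_A\cap\mathcal{U}_A$ to be a single connected achronal curve terminating at $i^+_A$, for a suitable neighbourhood $\mathcal{U}_A$ of $i^+_A$.

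The hard part is the second step: constructing and closing the weighted red-shift/blue-shift estimates in the interior all the way up to the Cauchy horizon, and exploiting strict subextremality $(\ref{nonex})$ to keep the exponential weights nondegenerate. These are precisely the estimates of \cite{md:cbh}; the only genuinely new point here is the bookkeeping needed to run them simultaneously near $i^+_A$ and $i^+_B$, and to verify that the normalisation of $(u,v)$ fixed above places the horizon data under the hypotheses used there.
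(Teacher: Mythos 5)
Your proposal is essentially the paper's own treatment: Theorem~\ref{chstab} is not proved here but quoted from~\cite{md:cbh}, with the paper noting only that the proof rests on the Price's law upper bounds of Theorem~\ref{plt} from~\cite{dr1} and the interior red-shift/blue-shift estimates of~\cite{md:cbh} (cf.\ the curve $\gamma_A$ of Proposition~\ref{stableblue}), which is exactly the structure you describe and correctly defer to that reference. Your handling of the apparent-horizon statement (the sign of $\partial_u\partial_v r$ at $\partial_v r=0$ when $r>Q$, propagated by Raychaudhuri) is likewise consistent with the monotonicity arguments the paper sets up in Sections~5--6.
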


The second statement above signifies that $\mathcal{A}'_{A,B}\cap \mathcal{U}_{A,B}$ is
what is sometimes known as
a `dynamical horizon', except that it may contain outgoing null pieces.
Such null pieces
will in fact be excluded under the assumptions of Theorem~\ref{blowupth}
in the next section.
See~\cite{williams} for various general results about $\mathcal{A}'$
in spherically symmetric spacetimes.

Let us note that 
the proof of Theorem~\ref{chstab} required to first establish
a version of ``Price's law''-type decay for the scalar field along the event horizons
$\mathcal{H}^+_{A,B}$. This was
proven in joint work with Rodnianski~\cite{dr1}. 
As this has independent interest, we give the statement 
relevant for data satisfying $(\ref{stoapeiro})$:
\begin{theorem}[M.D.--Rodnianski \cite{dr1}]
\label{plt}
Under the assumptions $(\ref{stoapeiro})$, $(\ref{nonex})$ of Theorem~\ref{chstab},
then for all $\epsilon>0$,
\begin{equation}
\label{forprice'slaw}
|\phi|+ |\partial_v\phi| \le  \tilde{C}_\epsilon 
v^{-2+\epsilon}, \qquad |\phi|+|\partial_u\phi| \le \tilde{C}_\epsilon u^{-2+\epsilon}
\end{equation}
on $\mathcal{H}^+_A$, $\mathcal{H}^+_B$ respectively,
for a $\tilde{C}_\epsilon$ depending only on $\epsilon$,
 $C$, $\varpi^+_{A,B}$ and $Q$. 
\end{theorem}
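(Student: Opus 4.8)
The plan is to derive Theorem~\ref{plt} as an easy corollary of the general Price's law decay established in~\cite{dr1} by identifying the data assumed in~\eqref{stoapeiro} with a special case of the (more general) decay hypotheses at $i^0_{A,B}$ used there, then tracking the dependence of constants. First I would recall that in~\cite{dr1} the relevant statement is proven on each end separately, and applies to the maximal development of asymptotically flat data for which the scalar field satisfies a weighted decay bound at spatial infinity; the bound~\eqref{stoapeiro}, $|\partial_u(r\phi)|\le Cr^{-2}$ and $|\partial_v(r\phi)|\le Cr^{-2}$ on $\Sigma$, is precisely the $r^{-3}$-type flux decay (after one more integration) that serves as admissible initial data for that theorem, with the constant $C$ playing the role of the data-size parameter. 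So the first step is purely bookkeeping: verify that~\eqref{stoapeiro} implies the hypotheses of the Price's law theorem of~\cite{dr1}, noting that by Proposition~\ref{frameprop} the global causal structure (existence of $\mathcal{H}^+_{A,B}$, completeness of $\mathcal{I}^+_{A,B}$) is already in place, and that~\eqref{nonex}, i.e.\ $0<Q<\varpi^+_{A,B}$, guarantees the horizon is asymptotically subextremal so that the redshift along $\mathcal{H}^+_{A,B}$ is available.

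The second step is to invoke the decay estimate of~\cite{dr1} directly: along $\mathcal{H}^+_A$ one obtains, for every $\epsilon>0$, the pointwise bound $|\phi|+|\partial_v\phi|\le \tilde C_\epsilon v^{-2+\epsilon}$, and symmetrically along $\mathcal{H}^+_B$ with $u$ in place of $v$ and $\partial_u$ in place of $\partial_v$. Here $v$ (resp.\ $u$) is the Eddington--Finkelstein-like coordinate normalised as in Section~\ref{persistence}, so one should check that the normalisation $\partial_u x=-1$, $\partial_v x=1$ on $\Sigma$ is compatible with (or at worst differs by a bounded reparametrisation from) the coordinate normalisation in~\cite{dr1}; any such bounded change of coordinates can be absorbed into $\tilde C_\epsilon$. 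The remaining point is the claimed dependence of $\tilde C_\epsilon$: one reads off from the proof in~\cite{dr1} that the constant depends only on $\epsilon$, the data-size constant $C$, the final horizon mass $\varpi^+_{A,B}$, and the charge $Q$ — this is because the redshift rate and all the bootstrap constants in that argument are controlled by the subextremality gap $\varpi^+_{A,B}-Q$ (and $\varpi^+_{A,B}+Q$) together with $C$.

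I do not expect any genuine obstacle here, since this theorem is explicitly quoted from prior work~\cite{dr1}; the only real ``work'' is the translation between the data formulation used there and the hypothesis~\eqref{stoapeiro} stated here, and the verification that the constant's dependence is as claimed. If one wanted a more self-contained account, the hardest ingredient being imported is the iterated use of the wave equation~\eqref{WAVe} together with the energy/pointwise hierarchy to propagate the $v^{-3}$ flux decay on ingoing cones to the $v^{-2+\epsilon}$ pointwise decay on $\mathcal{H}^+_A$, but since we are permitted to assume results stated earlier — and this is one of them — the proof here consists precisely of citing~\cite{dr1} and remarking that~\eqref{stoapeiro}, \eqref{nonex} fall within its scope, with the constants tracked as stated. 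Thus the proof is: \emph{apply the main theorem of~\cite{dr1} to each end, note the hypotheses are implied by~\eqref{stoapeiro} and~\eqref{nonex}, and record the dependence of $\tilde C_\epsilon$.}
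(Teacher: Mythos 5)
Your proposal is correct and coincides with what the paper does: Theorem~\ref{plt} is imported wholesale from~\cite{dr1}, and the paper offers no proof beyond the citation, exactly as you describe. Your additional remarks on verifying that $(\ref{stoapeiro})$, $(\ref{nonex})$ fall within the scope of~\cite{dr1}, on the compatibility of the Eddington--Finkelstein-like normalisation, and on the dependence of $\tilde{C}_\epsilon$ on the subextremality gap are the right bookkeeping points and are consistent with the paper's discussion surrounding the statement.
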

For data where $\phi$ is 
of compact support on $\Sigma$, the decay bounds in the statement
of Theorem~\ref{plt} on the right hand
side of $(\ref{forprice'slaw})$ can be
improved to $\tilde{C}_\epsilon v^{-3+\epsilon}$, $\tilde{C}_\epsilon u^{-3+\epsilon}$,
illuminating the familiar obstruction at power $3$. 
It is well known, however, that 
general semilinear problems even on a fixed Minkowski background already lead to
decay which is captured by inverse polynomial decay weaker than $t^{-3}$,
even for compactly supported data.
Thus, it is far preferable to state
Theorem~\ref{plt} 
in the above  more general context of data satisfying only $(\ref{stoapeiro})$,
where the statements are more robust.

Finally, let us remark that 
assumption $(\ref{nonex})$ is a ``teleological'' assumption.
Note, however, that in view of the monotonicity properties of Section~\ref{trappedsec},
it follows immediately that $(\ref{nonex})$ is indeed satisfied if it is assumed
for example that initially,
$\varpi>Q$ on $\Sigma$. 
We can already state in particular:
\begin{corollary}
\label{stabtoR}
If our initial data as in Proposition~\ref{frameprop}
is assumed sufficiently close to the data of a given subextremal Reissner--Nordstr\"om 
solution,
then $(\ref{nonex})$ is indeed satisfied and the conclusions of the above theorem hold.
\end{corollary}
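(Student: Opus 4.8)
The plan is to check that initial data sufficiently close to subextremal Reissner--Nordstr\"om verifies every hypothesis of Theorem~\ref{chstab}, the only nontrivial one being the teleological condition $(\ref{nonex})$. The structural hypotheses are essentially automatic. Future admissibility $(\ref{admissibility})$ holds for Reissner--Nordstr\"om on a suitably chosen $\Sigma$ (one not meeting the closure of the white hole region) and, as already noted, is stable to perturbation of data, so Proposition~\ref{frameprop} applies; the Eddington--Finkelstein-like normalisation of $u,v$ is always available; and the weighted bound $(\ref{stoapeiro})$ holds trivially for Reissner--Nordstr\"om, where $\phi\equiv 0$, and persists with some constant $C$ provided ``close'' is measured in a norm controlling the decay of $\phi$ and its derivatives at $i^0_{A,B}$.

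It therefore remains to verify $(\ref{nonex})$, i.e.\ $0<Q<\varpi^+_{A,B}$, and the point is to do this with \emph{no} control of the evolution, using only the one-sided monotonicity of $\varpi$ along the event horizons flagged in the remark preceding the corollary. On $\mathcal{H}^+_A$ one has $\partial_u r<0$ (since $\mathcal{H}^+_A\subset\overline{J^-(\mathcal{I}^+_A)}$), so $(\ref{RHM2})$ gives $\partial_v\varpi\ge 0$ there; as $\mathcal{H}^+_A$ is an outgoing null ray on which the future direction is that of increasing $v$, the function $\varpi$ is non-decreasing along $\mathcal{H}^+_A$ towards $i^+_A$, whence
\[
\varpi^+_A=\sup_{\mathcal{H}^+_A}\varpi \ge \varpi|_{\mathcal{H}^+_A\cap\Sigma}.
\]
Here $\mathcal{H}^+_A\cap\Sigma$ is a single sphere, since by Part~II of Proposition~\ref{frameprop} the horizon $\mathcal{H}^+_A$ is non-empty and, lying in $J^+(\Sigma)$, has its past endpoint on $\Sigma$. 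The identical argument with $(\ref{RHM1})$ applies on $\mathcal{H}^+_B$. Consequently it suffices to establish the pointwise bound $\varpi>Q$ on all of $\Sigma$, together with $Q>0$.

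Both of these are now soft consequences of Cauchy stability at the level of the initial data. The charge $Q$ is a conserved quantity determined by the Maxwell data, so it is close to $Q_{RN}>0$ and in particular positive. The restriction $\varpi|_\Sigma = \frac r2\big(1-|\nabla r|^2\big)+\frac{Q^2}{2r}$ is, by $(\ref{mdef})$, a function of the first-order data $(\bar g,K,\bar F)$ on $\Sigma$ and so depends continuously on that data in the relevant weighted topology. For subextremal Reissner--Nordstr\"om one has $\varpi|_\Sigma\equiv M_{RN}$ and $Q=Q_{RN}$, so $\varpi-Q\equiv M_{RN}-Q_{RN}>0$: a fixed positive gap, uniform over $\Sigma$ including the two asymptotically flat ends, where $\varpi\to M^{ADM}_{A,B}$. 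Hence for a sufficiently small perturbation $\varpi-Q>0$ everywhere on $\Sigma$, which by the previous paragraph yields $(\ref{nonex})$; Theorem~\ref{chstab} then applies and delivers the conclusion.

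The only genuine work is not conceptual but a matter of bookkeeping: fixing the precise norm in which ``sufficiently close'' is to be understood so that it simultaneously guarantees future admissibility, controls the weighted decay $(\ref{stoapeiro})$ uniformly, and controls $\varpi|_\Sigma$ --- which involves one derivative of the metric --- uniformly up to the two asymptotically flat ends. Once such a norm is fixed the argument is entirely elementary, the teleological quantity $\varpi^+_{A,B}$ having been tamed purely by the favourable sign of $\partial_v\varpi$ (resp.\ $\partial_u\varpi$) on the horizon.
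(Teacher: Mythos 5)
Your proposal is correct and follows essentially the same route as the paper: the paper's justification is precisely the remark preceding the corollary, namely that by the monotonicity of $\varpi$ (via $(\ref{RHM1})$--$(\ref{RHM2})$ and the sign of $\partial_u r$, $\partial_v r$ on $\overline{J^-(\mathcal{I}^+_{A,B})}$) the teleological condition $(\ref{nonex})$ follows from $\varpi>Q$ on $\Sigma$, which holds for data close to subextremal Reissner--Nordstr\"om since there $\varpi\equiv M_{RN}>Q_{RN}$. Your additional bookkeeping (stability of admissibility, persistence of $(\ref{stoapeiro})$, positivity of $Q$) is consistent with what the paper leaves implicit.
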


\section{$\mathcal{CH}^+_{A,B}$ is (globally) singular}
\label{globsingsec}

Under an additional assumption on the behaviour of the scalar field
on $\mathcal{H}^+_{A,B}$, it is shown that $\mathcal{CH}^+_{A,B}$  in fact
correspond to 
\emph{weak null singularities}. 
These can be thought of as singular null hypersurfaces in 
a $C^0$ extension of the metric, singular in that the Kretschmann scalar
blows up, in fact, so that the
Christoffel symbols are singular in a sense to be discussed in Section~\ref{christoffels}.

Specifically, the following result was proven in~\cite{md:cbh}:
\begin{theorem}[\cite{md:cbh}]
\label{blowupth}
Under the assumptions of Theorem~\ref{chstab}, suppose
we have in addition that  
\begin{equation}
\label{lowerprice}
|\partial_v\phi|\ge cv^{-6+\tau  },      \qquad |\partial_u\phi|\ge cu^{-6+\tau}
\end{equation}
on $\mathcal{H}^+_A$, $\mathcal{H}^+_B$,
for all $v\ge V$, $u\ge U$ respectively, for some $U,V<\infty$, and some
$c>0$, $\tau>0$.
Then $\varpi$ extends ``continuously'' to $\infty$ on 
$\mathcal{CH}^+_{A,B}$ in a neighbourhood of $i^+_{A,B}$.

Moreover, choosing $\mathcal{U}_{A,B}$ in Theorem~\ref{chstab}
sufficiently small, then the curves $\mathcal{A}'_{A,B}\cap\mathcal{U}_{A,B}$
are in fact spacelike.
\end{theorem}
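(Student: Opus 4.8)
The plan is to localise near $i^+_{A,B}$ and reduce both assertions to the a priori estimates in the black hole interior established in~\cite{md:si, md:cbh}, using Price's law (Theorem~\ref{plt}) as an upper bound and the hypothesis~(\ref{lowerprice}) as a lower bound. I describe the argument at the end $A$; the end $B$ is identical, and $\mathcal{H}^-$, $\mathcal{CH}^-$ follow by time reversal. Let $\mathcal{U}_A$ be the neighbourhood of $i^+_A$ from Theorem~\ref{chstab}, so that $\mathcal{A}'_A\cap\mathcal{U}_A$ is a connected achronal curve terminating at $i^+_A$ with $D^+(\mathcal{A}'_A\cap\mathcal{U}_A)\subset\mathcal{T}$, and fix null coordinates with $\mathcal{H}^+_A\subset\{u=U_0\}$, the interior $\{u>U_0\}$, and $v\to\infty$ along $\mathcal{H}^+_A$ towards $i^+_A$, so that $\mathcal{CH}^+_A$ near $i^+_A$ is realised as the limit $v\to\infty$ at fixed $u>U_0$.

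The first and heaviest step is to recall from~\cite{md:si, md:cbh} the following picture in a characteristic rectangle $[U_0,U_1]\times[V_0,\infty)$ with $U_1>U_0$ close to $U_0$ and $V_0$ large. (i) Two-sided bounds $0<r_1\le r\le r_2$ hold, with $r\to r_+(\varpi^+_A,Q)$ on $\mathcal{H}^+_A$; in particular $r$ is bounded below on $\mathcal{CH}^+_A$ --- the non-emptiness already recorded in Theorem~\ref{chstab}. (ii) The solution is $\epsilon$-close there to the subextremal Reissner--Nordstr\"om solution with parameters $(\varpi^+_A,Q)$; since~(\ref{nonex}) makes the inner-horizon surface gravity $\kappa_-(\varpi^+_A,Q)$ strictly positive, one has, along each outgoing ray $\{u=\mathrm{const}\}$ entering $\mathcal{T}$, the blueshift bound $\Omega^2(u,v)\le C_u e^{-2\kappa v}$ for a fixed $0<\kappa<\kappa_-$, together with $-\partial_u r\ge c_u>0$, while along the short red-shift segment between $\mathcal{H}^+_A$ and $\mathcal{A}'_A$ one has $0\le\partial_v r\le C_u e^{-2\kappa_+ v}$. (iii) The scalar-field flux, bounded above on $\mathcal{H}^+_A$ by Theorem~\ref{plt} and below by~(\ref{lowerprice}), propagates into the interior: the local $v$-flux $\int_{v_1}^{v_2}(\partial_v\phi)^2\,dv'$ through a segment of an outgoing ray $\{u=\mathrm{const}\}$ with $u$ deep in $\mathcal{T}$ is bounded below by $\delta_u(v_1)$ times the flux through $\{u=U_0\}$ over the same $v$-range, with $\delta_u$ decaying at most polynomially in $v_1$.

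Granting (i)--(iii), blow-up of $\varpi$ is a short computation. Along $\{u=\mathrm{const}\}$ in $\mathcal{T}$ one has $\partial_u r<0$, so~(\ref{RHM2}) gives $\partial_v\varpi\ge 0$; integrating~(\ref{Raych2}) in $v$ and inserting into~(\ref{mdef}) yields, for $u$ past $\mathcal{A}'_A$ and $V_0$ chosen inside $\mathcal{T}$,
\begin{equation}
\label{massinfeq}
\varpi(u,v)\ \ge\ \frac r2+\frac{Q^2}{2r}+c'_u\int_{V_0}^{v}\Omega^{-2}(\partial_v\phi)^2\,dv',\qquad c'_u>0,
\end{equation}
so $\varpi(u,v)\to\infty$ as $v\to\infty$ provided $\int^\infty\Omega^{-2}(\partial_v\phi)^2\,dv'=\infty$. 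Splitting $[V_0,\infty)$ into dyadic intervals $[v_j,2v_j]$, $v_j=V_0\,2^j$, and using (ii) and (iii) together with~(\ref{lowerprice}), the $j$-th contribution is at least $c\,e^{2\kappa v_j}\delta_u(v_j)\int_{v_j}^{2v_j}(v')^{-12+2\tau}\,dv'\ge c''\,e^{2\kappa v_j}v_j^{-N}$ for a fixed $N$, which tends to $\infty$; hence the series diverges. As this holds for every $u\in(U_0,U_1)$, $\varpi$ extends ``continuously'' to $\infty$ on $\mathcal{CH}^+_A$ near $i^+_A$ (shrinking $\mathcal{U}_A$ if needed); in view of~(\ref{Kretsch}) this also yields the $C^2$-inextendibility of Theorem~\ref{thes2}.

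For the second assertion --- that $\mathcal{A}'_A\cap\mathcal{U}_A$ is spacelike after shrinking $\mathcal{U}_A$ --- observe that near $i^+_A$ the curve $\mathcal{A}'_A$ is the zero set of $\partial_v r$, hence a $C^1$ curve wherever $(\partial_u\partial_v r,\partial_v\partial_v r)\ne(0,0)$, and it is spacelike precisely where $\partial_u\partial_v r$ and $\partial_v\partial_v r$ share a sign. On $\mathcal{A}'_A$ we have $\partial_v r=0$, so~(\ref{requ}) reduces to $\partial_u\partial_v r=-\frac{\Omega^2}{4r}(1-Q^2 r^{-2})$ and~(\ref{Raych2}) to $\partial_v\partial_v r=-4\pi r(\partial_v\phi)^2$. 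Near $i^+_A$, $r\to r_+(\varpi^+_A,Q)$ on $\mathcal{A}'_A$ and $r_+(\varpi^+_A,Q)>Q$ by~(\ref{nonex}); hence for $\mathcal{U}_A$ small $r>Q$ and so $\partial_u\partial_v r<0$ on $\mathcal{A}'_A\cap\mathcal{U}_A$. It remains to check $\partial_v\partial_v r<0$, i.e.\ $\partial_v\phi\ne 0$ on $\mathcal{A}'_A\cap\mathcal{U}_A$, and this is where~(\ref{lowerprice}) is used pointwise. Since $\mathcal{A}'_A$ lies in the red-shift segment adjacent to $\mathcal{H}^+_A$, integrating~(\ref{WAVe}) in $u$ from $\mathcal{H}^+_A$ shows $r\partial_v\phi$ differs from its value on $\mathcal{H}^+_A$ by at most $\int|\partial_v r|\,|\partial_u\phi|\,du$, which by (ii) and the upper bounds of Theorem~\ref{plt} is $\le C e^{-2\kappa_+ v}v^{K}=o(v^{-6+\tau})$; combined with~(\ref{lowerprice}) this forces $|r\partial_v\phi|\ge c'''v^{-6+\tau}>0$ on $\mathcal{A}'_A\cap\mathcal{U}_A$. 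Thus both $\partial_u\partial_v r$ and $\partial_v\partial_v r$ are strictly negative there, and $\mathcal{A}'_A\cap\mathcal{U}_A$ is a spacelike curve. The genuinely hard part is items (ii)--(iii): the construction and quantitative control of the blueshift region near $\mathcal{CH}^+_A$ and, above all, the propagation of the \emph{lower} bound~(\ref{lowerprice}) into the interior so that it survives the red-shift loss and is not swamped by nonlinear error terms --- this is exactly the analysis carried out in~\cite{md:si, md:cbh}; granted it, both conclusions follow as above.
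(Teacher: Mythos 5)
First, note that the paper itself offers no proof of Theorem~\ref{blowupth}: it is imported verbatim from~\cite{md:cbh}, so there is no in-paper argument to compare against. Judged on its own terms, your proposal correctly identifies the mechanism of both conclusions --- mass inflation as the divergence of $\int\Omega^{-2}(-\partial_ur)(\partial_v\phi)^2\,dv$ (equivalently $\int\kappa^{-1}(\partial_v\phi)^2\,dv$ with $\kappa=-\tfrac14\Omega^2(\partial_ur)^{-1}$), driven by an exponential blueshift factor beating the polynomial lower bound $(\ref{lowerprice})$; and the spacelike character of $\mathcal{A}'_A$ from the signs of $\partial_u\partial_vr=-\tfrac{\Omega^2}{4r}(1-Q^2r^{-2})<0$ (since $r\to r_+>Q$ by $(\ref{nonex})$) and $\partial_v\partial_vr=-4\pi r(\partial_v\phi)^2<0$ on $\{\partial_vr=0\}$, the latter requiring precisely the pointwise nonvanishing of $\partial_v\phi$ propagated from $(\ref{lowerprice})$ through the wave equation $(\ref{WAVe})$. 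The exponent $-6+\tau$ is indeed calibrated so that the error $\int|\partial_u\phi||\partial_vr|\,du$ (a product of two Price-law-type $v^{-3+\epsilon}$ decays) does not swamp the lower bound. All of this matches the known argument of~\cite{md:cbh}.

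The genuine gap is in the logical status of your item (ii). You assert the blueshift bound $\Omega^2(u,v)\le C_ue^{-2\kappa v}$ \emph{unconditionally}, as a consequence of ``$\epsilon$-closeness to Reissner--Nordstr\"om'' in the interior. But the stability analysis of~\cite{md:cbh} does not supply a pointwise exponential upper bound on $\Omega^2$ valid up to $\mathcal{CH}^+$: what it gives (cf.\ Proposition~\ref{stableblue} and its propagation in this paper) is polynomial control of $-\partial_vr$ and two-sided bounds on $r$. The exponential decay of $\kappa$ toward the Cauchy horizon is derived in~\cite{md:si, md:cbh} only \emph{under the contradiction hypothesis that $\varpi$ remains bounded} on a rectangle reaching $\mathcal{CH}^+$ --- boundedness of $\varpi$ is what keeps $1-2m/r$ under control and allows the linear blueshift estimate to close. (Indeed, when mass inflation does occur the relation $\partial_vr=\kappa(1-2m/r)$ degenerates and the ``closeness to RN'' you invoke fails in exactly the quantities you need.) So the mass-inflation step must be run as a reductio: assume $\varpi(u_0,\cdot)\not\to\infty$, use monotonicity $(\ref{monintrap})$ to bound $\varpi$ on the whole rectangle, \emph{then} derive (ii) and (iii), and contradict via your divergent integral. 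As written, your direct argument is circular at (ii). Beyond this, the proposal is a reduction rather than a proof: items (ii)--(iii), which you rightly flag as the hard analytic content, are taken wholesale from the references; given that the present paper does the same for the entire theorem, that is defensible, but the conditional nature of (ii) should be made explicit.
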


The reader should compare the required lower bounds $(\ref{lowerprice})$ 
with the upper bounds $(\ref{forprice'slaw})$
proven in Theorem~\ref{plt}. (In particular, $(\ref{lowerprice})$
is indeed compatible with $(\ref{forprice'slaw})$!). 
We have in fact the following
\begin{conjecture}
\label{isitgen}
For \emph{generic} data as in Thereom~\ref{chstab}, the assumptions of
Theorem~\ref{blowupth} hold.
\end{conjecture}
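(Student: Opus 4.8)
The plan is to deduce the pointwise lower bounds~(\ref{lowerprice}) from sharp late-time \emph{asymptotics} --- not merely upper bounds --- for $\phi$ along the event horizons $\mathcal{H}^+_{A,B}$. By Theorem~\ref{chstab} and the decay estimates of Theorem~\ref{plt}, the exterior region $J^-(\mathcal{I}^+_{A,B})\cap J^+(\Sigma)$ of such a solution exists globally with $\mathcal{I}^+_{A,B}$ complete, and its geometry converges (in suitable weighted norms) to that of a fixed subextremal Reissner--Nordstr\"om exterior with parameters governed by $\varpi^+_{A,B}$ and $Q$; in particular the evolution of $\phi$ on, say, $\mathcal{H}^+_A$ is to leading order that of the linear wave $\Box_{g_{RN}}\psi=0$ on that background, the self-interaction in~(\ref{WAVe}) being quadratic in first derivatives and hence, by~(\ref{forprice'slaw}), strictly lower order in the relevant decay hierarchy. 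It therefore suffices to establish~(\ref{lowerprice}) itself, i.e.~that for generic data $\partial_v\phi|_{\mathcal{H}^+_A}$ does not vanish for large $v$ and is bounded below there by $c\,v^{-6+\tau}$ for some $c,\tau>0$, and symmetrically for $\partial_u\phi|_{\mathcal{H}^+_B}$.

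The linear step is where the real work lies. On a subextremal Reissner--Nordstr\"om exterior one must establish the sharp behaviour of $\psi|_{\mathcal{H}^+}$ at late times: there should be a hierarchy of functionals $\mathcal{N}_0,\mathcal{N}_1,\ldots$ of the data (weighted integrals of $(\phi_0,\phi_1)$ over $\Sigma$, depending linearly on the data to leading order) such that if $\mathcal{N}_0[\psi]\ne0$ then $\psi|_{\mathcal{H}^+}(v)=c\,v^{-3}+o(v^{-3})$ and $\partial_v\psi|_{\mathcal{H}^+}(v)=-3c\,v^{-4}+o(v^{-4})$ with $c$ a nonzero multiple of $\mathcal{N}_0[\psi]$, and more generally, if $\mathcal{N}_0=\cdots=\mathcal{N}_{k-1}=0\ne\mathcal{N}_k$ then $\psi|_{\mathcal{H}^+}$ decays no faster than $v^{-3-k}$, with an analogous one-derivative statement. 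The decisive point is that a genuine \emph{lower} bound, with first-derivative control so that $\partial_v\psi$ does not vanish at late times, is needed --- far more than the $O(v^{-3})$ upper bound implicit in Theorem~\ref{plt}. I would obtain it by propagating the behaviour near $\mathcal{I}^+$ inward across the exterior via the $r^p$-weighted energy method, invoking the red-shift estimate (non-degenerate, by subextremality) at $\mathcal{H}^+$, and \emph{computing} the asymptotic coefficient $c$ from the conserved and almost-conserved fluxes; since only the $\ell=0$ mode is present, there are no mode-by-mode complications.

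Genericity is then soft. Each $\mathcal{N}_k$ is a continuous functional on the space of data to which Theorem~\ref{chstab} applies, and none is identically zero (for each $k$, exhibit one datum obeying~(\ref{admissibility}), (\ref{stoapeiro}), (\ref{nonex}) on which it is nonvanishing). Hence the set of data for which $\phi|_{\mathcal{H}^+_A}$ decays faster than every inverse polynomial --- which, by the linear step carried over to the nonlinear solution, would force all $\mathcal{N}_k=0$ --- is contained in a countable intersection of proper closed sets of codimension $\ge1$, hence is non-generic; for generic data some $\mathcal{N}_k\ne0$. In fact, taking the open, dense condition $\{\mathcal{N}_0\ne0\ {\rm on\ both\ ends}\}$ intersected with the open conditions of Theorem~\ref{chstab}, one has for such data $\partial_v\phi|_{\mathcal{H}^+_A}\sim-3c\,v^{-4}$ with $c\ne0$ and likewise on $\mathcal{H}^+_B$, so~(\ref{lowerprice}) holds with $\tau=2$; the degenerate cases $\mathcal{N}_0=0$, which alone could produce decay as fast as or faster than $v^{-6}$, lie in the non-generic exceptional set.

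It remains to pass from the linear to the genuine nonlinear solution. By~(\ref{forprice'slaw}), the quadratic terms in~(\ref{WAVe}) and the deviation of $(r,\Omega^2,Q^2)$ from their Reissner--Nordstr\"om values contribute to $\phi|_{\mathcal{H}^+_A}$ at a rate strictly faster than $v^{-3}$, so a continuity/bootstrap argument --- extending the analysis of~\cite{dr1} from upper bounds to a genuine extraction of leading-order asymptotics --- shows that the nonlinear $\phi|_{\mathcal{H}^+_A}$ inherits the leading term $c\,v^{-3}$, hence the pointwise lower bound~(\ref{lowerprice}); the statements on $\mathcal{H}^-_{A,B}$ follow by the time-reversal symmetry already used in Theorem~\ref{introtheo}. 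The main obstacle is unquestionably the sharp linear asymptotics: proving that $\psi|_{\mathcal{H}^+}$ decays \emph{no faster} than $v^{-3}$ when $\mathcal{N}_0\ne0$, together with the accompanying first-derivative lower bound that~(\ref{lowerprice}) genuinely demands as a pointwise statement about $\partial_v\phi$, requires a delicate analysis near $\mathcal{I}^+$ and its propagation to $\mathcal{H}^+$ in which one must \emph{compute} a leading coefficient rather than merely estimate it; a subsidiary obstacle is verifying that the quadratic nonlinear errors cannot exactly cancel this coefficient, which forces the argument of~\cite{dr1} to be pushed one order beyond its stated conclusions.
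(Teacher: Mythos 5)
First, be aware that the statement you are addressing is posed in the paper as a \emph{Conjecture}, and the paper offers no proof of it: the author explicitly presents it as an open problem whose ``resolution would completely close the book on the mass-inflation scenario''. There is therefore no argument of the author's to compare yours against. What you have written is a research programme rather than a proof, and you yourself identify its core --- the sharp lower-bound late-time asymptotics of the scalar field on $\mathcal{H}^+$ --- as unproven. That core is precisely the open content of the conjecture (the upper bounds of Theorem~\ref{plt} are the hard analytic input of~\cite{dr1}, and no corresponding lower bound is known in the paper's framework), so the proposal cannot be counted as a proof.

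Beyond that, two steps would fail as written. Your claim that the quadratic terms in $(\ref{WAVe})$ and the deviation of the geometry from Reissner--Nordstr\"om are ``strictly lower order in the relevant decay hierarchy'' is not supported by the only available a priori bounds, namely $(\ref{forprice'slaw})$: with $|\partial_v\phi|\le \tilde{C}_\epsilon v^{-2+\epsilon}$ (or $v^{-3+\epsilon}$ for compactly supported data) the quadratic errors are only $O(v^{-4+2\epsilon})$, which is \emph{not} below the putative leading order $v^{-4}$ of $\partial_v\phi|_{\mathcal{H}^+}$; ruling out cancellation of the leading coefficient by these errors is therefore not a ``continuity/bootstrap'' afterthought but a second genuinely open problem, as your final sentence half-concedes. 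Second, even granting the linear asymptotics, a statement of the form ``$\psi|_{\mathcal{H}^+}$ decays no faster than $v^{-3-k}$'' does not yield $(\ref{lowerprice})$, which demands $|\partial_v\phi|\ge c\,v^{-6+\tau}$ pointwise for \emph{all} $v\ge V$: one must exclude zeros and oscillation of $\partial_v\phi$ at late times, which is strictly stronger than identifying a nonvanishing asymptotic coefficient unless the error is also controlled pointwise at the level of one derivative. Finally, the genericity step presupposes the existence, continuity and non-triviality of the functionals $\mathcal{N}_k$ --- all part of the unproven linear analysis --- as well as a topology on the data in which a countable union of their zero sets is negligible, whereas the conjecture itself leaves the notion of ``generic'' unspecified. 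The strategy is a sensible one and correctly locates the difficulty, but every load-bearing step is deferred.
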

Resolution of the above would completely close the book
on the mass-inflation scenario in the context of Einstein--Maxwell--real
scalar field model, as the blow-up part of the scenario would then
be retrieved
for generic data posed on a spacelike asymptotically flat Cauchy surface.

We proceed in the remainder
of this section to  strengthen slightly Theorem~\ref{blowupth} to show that
$\varpi$ extends ``continuously'' to $\infty$ on the entire $\mathcal{CH}^+_{A,B}$.

First, note that in view of the second statement of Theorem~\ref{chstab},
we have that
there exists an open neighbourhood $\mathcal{U}_{A,B}$ of $i^+_{A,B}$ 
such that
\[
\mathcal{CH}^+_{A,B}\cap \mathcal{U}_{A,B}\subset \overline{\mathcal{T}}.
\]
We have in fact that this is true globally:
\begin{proposition}
\label{interestingly}
Under the assumptions of Theorem~\ref{blowupth},
\[
\mathcal{CH}^+_{A,B} \subset \overline{\mathcal{T}}.
\]
\end{proposition}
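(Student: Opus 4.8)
The plan is to argue by contradiction: suppose some point $p \in \mathcal{CH}^+_A$ fails to lie in $\overline{\mathcal{T}}$ (the case of $\mathcal{CH}^+_B$ is symmetric). Since $\mathcal{CH}^+_A$ is a half-open achronal null segment emanating from $i^+_A$, and since by the second statement of Theorem~\ref{chstab} (as refined in Theorem~\ref{blowupth}) a neighbourhood $\mathcal{U}_A$ of $i^+_A$ already satisfies $\mathcal{CH}^+_A \cap \mathcal{U}_A \subset \overline{\mathcal{T}}$, such a $p$ would have to lie at positive ``distance'' from $i^+_A$ along $\mathcal{CH}^+_A$. The idea is to slide along $\mathcal{CH}^+_A$ starting from $i^+_A$ and consider the first $v$-coordinate $v_* $ at which the ingoing null ray $\{v = v_*\}$ meets $\mathcal{CH}^+_A$ at a point outside $\overline{\mathcal{T}}$; equivalently, by Raychaudhuri and the monotonicity of $\partial_v r$, I would track whether $\partial_v r \leq 0$ persists up to the boundary along each such ray.

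First I would recall the sign structure established in Section~\ref{trappedsec}: on $J^+(\mathcal{H}^+_A)$ one has $\partial_u r < 0$ everywhere, so membership in $\mathcal{T}$ on $J^+(\mathcal{H}^+_A)$ is governed purely by the sign of $\partial_v r$, and by Raychaudhuri $(\ref{Raych2})$ the quantity $\Omega^{-2}\partial_v r$ is monotone non-increasing in $u$ along each ingoing ray. Hence once $\partial_v r \leq 0$ at a point of $J^+(\mathcal{H}^+_A)$, it remains $\leq 0$ (in fact $<0$ strictly past $\mathcal{A}'_A$) along the future of that point on the same ingoing ray, so the region $\overline{\mathcal{T}} \cap J^+(\mathcal{H}^+_A)$ is ``downward closed'' along ingoing rays up to the boundary $\mathcal{CH}^+_A$. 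Second, I would use that $D^+(\mathcal{A}'_A) \subset \mathcal{T}$ from Theorem~\ref{chstab} together with the fact, from Theorem~\ref{blowupth}, that $\mathcal{A}'_A \cap \mathcal{U}_A$ is spacelike: this forces the apparent horizon, once entered, never to be exited going to the future, so that $\mathcal{CH}^+_A$ lies entirely in the future of $\mathcal{A}'_A$ and hence in $\overline{\mathcal{T}}$, provided $\mathcal{A}'_A$ extends — as a connected spacelike (or at worst achronal, null-piece-free) curve — all the way ``across'' to meet every ingoing ray that reaches $\mathcal{CH}^+_A$.

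The main obstacle, therefore, is precisely the global behaviour of the apparent horizon $\mathcal{A}'_A$ away from the neighbourhood $\mathcal{U}_A$ of $i^+_A$: a priori $\mathcal{A}'_A$ need not be connected and could contain outgoing null pieces, or could ``terminate'' on $\mathcal{S}_A \cup \mathcal{S}$ in a way that leaves a gap. I would handle this by the non-emptiness argument already sketched in Section~\ref{trappedsec} — along every ingoing null ray from $\mathcal{H}^+_A$ one meets a point of $\mathcal{A}'_A$, because $r$ must eventually drop below $\inf_\Sigma r$ and $\{r < \inf_\Sigma r\} \subset \mathcal{T}$ — and then invoke $\mathcal{S}_A \cup \mathcal{S}_B \cup \mathcal{S} \subset \overline{\mathcal{T}}$ and $\mathcal{CH}^+_B \cap \overline{J^+(\mathcal{H}^+_A)} \subset \overline{\mathcal{T}}$ to cover the remaining boundary components. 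Combining: every ingoing ray from $\mathcal{H}^+_A$ enters $\mathcal{T}$ at its (first) intersection with $\mathcal{A}'_A$ and stays in $\overline{\mathcal{T}}$ thereafter by the Raychaudhuri monotonicity above; since $\mathcal{CH}^+_A$ is reached along such rays strictly to the future of that intersection, we get $\mathcal{CH}^+_A \subset \overline{\mathcal{T}}$, contradicting the existence of $p$. The one technical point needing care is ensuring the ingoing ray through $p$ actually does cross $\mathcal{A}'_A$ before reaching $\mathcal{CH}^+_A$ rather than, say, running off to $\mathcal{S}$; this is where one uses that on $\mathcal{CH}^+_A$ the area-radius $r$ is bounded below away from $0$ (by definition of $\mathcal{CH}^+$), so the ray cannot have escaped into the $r \to 0$ region, and monotonicity of $r$ pins down that the sign change of $\partial_v r$ must have already occurred.
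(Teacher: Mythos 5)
Your argument hinges on the claim that, by Raychaudhuri $(\ref{Raych2})$, the quantity $\Omega^{-2}\partial_v r$ is monotone non-increasing in $u$ along ingoing rays, so that once an ingoing ray from $\mathcal{H}^+_A$ crosses $\mathcal{A}'_A$ it stays in $\overline{\mathcal{T}}$ all the way up to $\mathcal{CH}^+_A$. That is not what $(\ref{Raych2})$ says: $\partial_v(\Omega^{-2}\partial_v r)\le 0$ gives monotonicity of $\Omega^{-2}\partial_v r$ in $v$, i.e.\ along \emph{outgoing} rays, while $(\ref{Raych1})$ gives monotonicity of $\Omega^{-2}\partial_u r$ in $u$. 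No equation forces $\partial_v r$ to remain negative as $u$ increases at fixed $v$. On the contrary, by $(\ref{requ})$ the quantity $\partial_u\partial_v r$ contains the contribution $\Omega^2\left(-\tfrac{1}{4}r^{-1}+\tfrac{1}{4}Q^2 r^{-3}\right)$, which is \emph{positive} precisely when $r<Q$ --- and $r<Q$ is exactly the regime relevant near the Cauchy horizon, where $r$ approaches $\varpi-\sqrt{\varpi^2-Q^2}<Q$. So an ingoing ray may in principle exit the trapped region after crossing the outermost apparent horizon; ruling this out \emph{is} the content of Proposition~\ref{interestingly}, not something one gets for free. A telling symptom is that your argument never uses the hypotheses of Theorem~\ref{blowupth}: if it were correct it would establish the proposition under the assumptions of Theorem~\ref{chstab} alone, which the paper explicitly flags as an open question.

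The missing ingredient is the mass blow-up. The paper's proof supposes $(u_1,v_1)\in\mathcal{CH}^+_A\setminus\overline{\mathcal{T}}$; then, since $\partial_u r<0$ throughout $J^+(\mathcal{H}^+_A)$, each nearby ingoing ray $v=\tilde v$ must exit $\mathcal{T}$ at some $\tilde u$ with $\partial_v r(\tilde u,\tilde v)=0$, hence $r(\tilde u,\tilde v)=\varpi\pm\sqrt{\varpi^2-Q^2}$ there. By Theorem~\ref{blowupth} together with the monotonicity $(\ref{monintrap})$ (and the fact that $\tilde u$ stays a definite distance past $u(\mathcal{H}^+_A)$), one has $\varpi(\tilde u,\tilde v)\to\infty$ as $\tilde v\to v_1$; since $r$ is bounded, the minus sign must hold and $r(\tilde u,\tilde v)=\varpi-\sqrt{\varpi^2-Q^2}\to 0$. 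Once $r<\min_\Sigma r$ the point is forced into $\mathcal{T}$ by the Raychaudhuri equations, contradicting $\partial_v r(\tilde u,\tilde v)=0$. You would need to import this quantitative blow-up argument to close the gap; the soft structural facts you invoke ($\mathcal{A}'_A\ne\emptyset$ along each ingoing ray, $D^+(\mathcal{A}'_A)\subset\mathcal{T}$ near $i^+_A$, $r$ bounded below on $\mathcal{CH}^+_A$) are all correct but do not suffice.
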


\begin{proof}
Define $r_0= \min_{\Sigma} r$. We have
by the Raychaudhuri equations $(\ref{Raych1})$, $(\ref{Raych2})$ that
 that if $(u,v)\in \mathcal{Q}$ for
which $r(u,v)<r_0$, then $(u,v)\in \mathcal{T}$.

Now, let $(u_1,v_1)\in \mathcal{CH}^+_A$ (WLOG, we shall prove the statement
for $\mathcal{CH}^+_A$)
for which $(u_1,v_1)\not\in \overline{\mathcal{T}}$.
It follows from the second statement of Theorem~\ref{chstab} 
that for each $\tilde{v}<v_1$ sufficiently close to $v_1$, there exists a $\tilde{u}(\tilde{v})$
such that $\partial_vr (\tilde{u},\tilde{v})=0 $, and such that $(u,\tilde{v})\in \mathcal{T}$
for $u^*<u<\tilde{u}$, where $(u^*,\tilde{v})\in \mathcal{A}'_A$.
Moreover, 
there exists an $\epsilon$ such that $\tilde{u}\ge u(\mathcal{H}^+)+\epsilon$
for all $\tilde{v}$. 

Since $\partial_vr (\tilde{u},\tilde{v})=0 $ we have
$r(\tilde{u},\tilde{v})=\varpi\pm \sqrt{\varpi^2-Q^2}$.
On the other hand, by the monotonicity   and Theorem~\ref{blowupth} above, it follows
that $\varpi(\tilde{u},\tilde{v})\to \infty$ as $\tilde{v}\to v_1$.
It follows that for $\tilde{v}$ sufficiently close to $v_1$, we have
 $r(\tilde{u},\tilde{v})=\varpi- \sqrt{\varpi^2-Q^2}\to 0$ as $\tilde{v}\to v_1$. 
In particular, for $\tilde{v}$ sufficiently close to $v_1$, $r(\tilde{u},\tilde{v})< r_0$
and thus  $(\tilde{u},\tilde{v})\in \mathcal{T}$,
a contradiction.
\end{proof}

\begin{corollary}
\label{globsingc}
Under the assumptions of   Theorem~\ref{blowupth}, 
$\varpi$ ``continuously'' extends to $\infty$ on the whole of
$\mathcal{CH}^+_{A,B}$.
\end{corollary}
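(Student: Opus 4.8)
The plan is to use the monotonicity $(\ref{RHM1})$ of $\varpi$ in the outgoing ($u$-)direction, which is favourable precisely on the closure of the trapped region, to propagate the blow-up of $\varpi$ near $i^+_{A,B}$ supplied by Theorem~\ref{blowupth} along the \emph{entire} null segment $\mathcal{CH}^+_{A,B}$, exploiting that all of $\mathcal{CH}^+_{A,B}$ lies in $\overline{\mathcal{T}}$ by Proposition~\ref{interestingly}. Working with $\mathcal{CH}^+_A$, fix a point $p\in\mathcal{CH}^+_A$; it lies on an outgoing ray $\{u=u_1\}$ with $u_1>u(\mathcal{H}^+_A)$ (since $\mathcal{CH}^+_A$ emanates from but does not contain $i^+_A$), and let $v_\infty$ be the supremum of $v$ along that ray, so that $p=(u_1,v_\infty)$ is approached as $v\uparrow v_\infty$. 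Choose a \emph{fixed} $u_0\in(u(\mathcal{H}^+_A),u_1)$ close enough to $u(\mathcal{H}^+_A)$ that $(u_0,v_\infty)$ lies in the neighbourhood $\mathcal{U}_A$ of $i^+_A$ furnished by Theorems~\ref{chstab} and~\ref{blowupth}; then $\varpi(u_0,v)\to\infty$ as $v\to v_\infty$. The goal is to show $\varpi(u_1,v)\ge\varpi(u_0,v)$ for $v$ close to $v_\infty$, which upon letting $v\to v_\infty$ gives the continuous extension of $\varpi$ to $\infty$ at $p$; since the same bound holds with $u_1$ replaced by any intermediate $u\in[u_0,u_1]$, this upgrades to a statement about all approaches to $p$ within $\mathcal{Q}$, and since $p$ is arbitrary we are done (the cases of $\mathcal{CH}^+_B$ and, time-reversed, $\mathcal{CH}^-_{A,B}$ being identical).

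Integrating $(\ref{RHM1})$ gives $\varpi(u_1,v)-\varpi(u_0,v)=\int_{u_0}^{u_1}\bigl(-8\pi r^2\Omega^{-2}\partial_v r\bigr)(\partial_u\phi)^2\,du$, so the desired bound follows once we know $\partial_v r\le 0$ on $[u_0,u_1]\times\{v\}$ for all $v$ sufficiently close to $v_\infty$. The left endpoint is handled first: since the $u$-coordinate of $\mathcal{A}'_A$ tends to $u(\mathcal{H}^+_A)<u_0$ along $i^+_A$ and $D^+(\mathcal{A}'_A)\subset\mathcal{T}$ by Theorem~\ref{chstab}, we get $(u_0,v)\in\mathcal{T}$, hence $\partial_v r(u_0,v)<0$, for $v$ near $v_\infty$. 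Suppose now $\partial_v r$ first vanished at some $\bar u(v)\in(u_0,u_1]$. On $[u_0,\bar u(v))$ we have $\partial_v r<0$ and, being in $J^+(\mathcal{H}^+_A)$, also $\partial_u r<0$, so this segment lies in $\mathcal{T}$ and $(\ref{monintrap})$ yields $\varpi(\bar u(v),v)\ge\varpi(u_0,v)$. Along this segment $r$ decreases and $\varpi$ increases, so at $\bar u(v)$, where $\partial_v r=0$, the value $r(\bar u(v),v)$ is the smaller of the two roots $\varpi\pm\sqrt{\varpi^2-Q^2}$ of $r^2-2\varpi r+Q^2=0$ (the larger root is increasing in $\varpi$ and cannot be reached from above); hence $r(\bar u(v),v)=Q^2\bigl(\varpi+\sqrt{\varpi^2-Q^2}\bigr)^{-1}\le Q^2/\varpi(u_0,v)\to 0$ as $v\to v_\infty$. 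But once $r<r_0:=\min_\Sigma r$, the Raychaudhuri equations $(\ref{Raych1})$, $(\ref{Raych2})$ force $(\bar u(v),v)\in\mathcal{T}$, contradicting $\partial_v r(\bar u(v),v)=0$. So no such zero exists and $\partial_v r\le 0$ on $[u_0,u_1]\times\{v\}$ for $v$ near $v_\infty$, as required.

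The main obstacle is exactly this step of excluding the zero $\bar u(v)$: a priori $\partial_v r$ could return to $\ge 0$ somewhere between $u_0$ and $u_1$ (at an ``inner'' apparent-horizon-type locus, precisely the $\tilde u(\tilde v)$ of the proof of Proposition~\ref{interestingly}), and there $(\ref{RHM1})$ loses its sign, breaking the monotonicity chain. The resolution is the localized re-run of Proposition~\ref{interestingly} above: because the blow-up of $\varpi$ near $i^+_A$ propagates into $\mathcal{T}$ through $(\ref{monintrap})$, any such inner zero would collapse the area-radius below $r_0$, which is incompatible with $\partial_v r=0$ there. The only remaining care concerns the two endpoints $u=u_0$ (controlled by the structure of $\mathcal{A}'_A$ near $i^+_A$ from Theorem~\ref{chstab}) and $u=u_1$ (where one merely passes to the limit $v\to v_\infty$), neither of which is serious.
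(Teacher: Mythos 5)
Your proof is correct and takes essentially the same route as the paper, whose own proof is the one-line combination of Theorem~\ref{blowupth}, Proposition~\ref{interestingly}, and the monotonicity $(\ref{monintrap})$. The only difference is that you re-run the argument of Proposition~\ref{interestingly} inline to verify that the relevant null segments $[u_0,u_1]\times\{v\}$ actually lie in $\mathcal{T}$ (not merely that $\mathcal{CH}^+\subset\overline{\mathcal{T}}$), which is a legitimate and slightly more careful filling-in of the detail needed to integrate $(\ref{RHM1})$.
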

\begin{proof}
We apply Theorem~\ref{blowupth}, the above proposition and again the monotonicity
$(\ref{monintrap})$.
\end{proof}

As an additional corollary we have the following
\begin{corollary}
Under the assumptions of   Theorem~\ref{blowupth}, 
$(\mathcal{M},g)$ is future inextendible as a manifold with $C^2$ Lorentzian metric.
\end{corollary}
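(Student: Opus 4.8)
The plan is to combine Corollary~\ref{globsingc} with the pointwise lower bound \eqref{Kretsch} for the Kretschmann scalar in terms of the Hawking mass. Since the Kretschmann scalar $R_{\mu\nu\alpha\beta}R^{\mu\nu\alpha\beta}$ is a coordinate-invariant quantity, its behaviour near $\mathcal{CH}^+_{A,B}$ obstructs any $C^2$ extension, regardless of the coordinate system or the ambient manifold structure of a putative extension. The key point is to upgrade the ``continuous'' extension of $\varpi$ to $\infty$ into genuine divergence of curvature along causal curves approaching the Cauchy horizon, and then to invoke Part III of Proposition~\ref{frameprop}.

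Concretely, I would proceed as follows. First, suppose for contradiction that $(\widetilde{\mathcal{M}},\widetilde{g})$ is a $C^2$ future-extension of $(\mathcal{M},g)$. By Part III of Proposition~\ref{frameprop}, there exists a causal curve $\gamma$ passing into $\widetilde{\mathcal{M}}\setminus\mathcal{M}$ whose projection to $\mathcal{Q}$ has closure meeting $\mathcal{CH}^+_A\cup\mathcal{CH}^+_B$; say it meets $\mathcal{CH}^+_A$ at a point $p$. Second, along $\gamma|_{\mathcal{M}}$ approaching $p$, the area-radius $r$ converges to $r(p)$, which is \emph{finite and positive} --- indeed $\mathcal{CH}^+_A$ is by definition the half-open segment on which $0$ is not a limit point of $r$, and by Corollary~\ref{stabtoR}/Theorem~\ref{chstab}, $r$ is bounded below there by a positive constant (in the perturbative regime, by $M_{RN}-\sqrt{M_{RN}^2-Q_{RN}^2}-\epsilon$). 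Third, by Corollary~\ref{globsingc}, $\varpi$ --- and hence $m=\varpi-\frac{Q^2}{2r}$, since $r$ is bounded away from $0$ --- tends to $\infty$ along $\gamma$ as it approaches $p$. Fourth, plugging into \eqref{Kretsch}, we get
\[
R_{\mu\nu\alpha\beta}R^{\mu\nu\alpha\beta}\ge \frac{4}{r^4}\left(\frac{2m}{r}\right)^2\to\infty
\]
along $\gamma|_{\mathcal{M}}$ as it approaches $p$. But if $\widetilde{g}$ is $C^2$, the Kretschmann scalar of $\widetilde{g}$ is a continuous function on $\widetilde{\mathcal{M}}$, hence bounded in a neighbourhood of the point of $\widetilde{\mathcal{M}}\setminus\mathcal{M}$ through which $\gamma$ passes --- and it agrees with that of $g$ on $\mathcal{M}$. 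This contradiction establishes the claim. The time-reversed argument applies to $\mathcal{H}^-$, $\mathcal{CH}^-$.

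The only genuinely delicate point --- and where one must be slightly careful --- is step two together with the passage from ``$\varpi$ extends continuously to $\infty$ on $\mathcal{CH}^+_{A,B}$'' (a statement about the $1+1$-dimensional reduction $\mathcal{Q}$) to ``$m\to\infty$ along $\gamma$'' (a statement about a curve in the $3+1$-dimensional $\mathcal{M}$ entering the extension). This requires knowing that the curve $\gamma$ furnished by Part III of Proposition~\ref{frameprop} actually has $\pi_{\mathcal Q}(\gamma|_{\mathcal M})$ limiting onto $\mathcal{CH}^+_{A,B}$ in such a way that $\varpi$ evaluated along it diverges --- i.e.\ that the monotonicity \eqref{monintrap} in $\mathcal{T}$ (and Proposition~\ref{interestingly}, which places $\mathcal{CH}^+_{A,B}$ in $\overline{\mathcal{T}}$) forces $\varpi\to\infty$ not just ``continuously on the boundary'' but along any sequence of interior points converging to a boundary point of $\mathcal{CH}^+_{A,B}$. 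This is precisely what the monotonicity gives: once one is in $\mathcal{T}$ and sufficiently close to $\mathcal{CH}^+_A$, $\varpi$ is increasing in both $u$ and $v$, so its ``continuous'' limit along the boundary dominates its values along any interior approach. The rest is routine, modulo the standard fact that for a $C^2$ metric the Kretschmann scalar is a well-defined continuous scalar invariant.
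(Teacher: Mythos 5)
Your proof is correct and follows essentially the same route as the paper: a contradiction argument combining Part III of Proposition~\ref{frameprop}, the global blow-up of $\varpi$ from Corollary~\ref{globsingc}, and the Kretschmann lower bound $(\ref{Kretsch})$ against the continuity of curvature for a $C^2$ metric. Your version merely spells out the details (positivity of the lower bound on $r$, passage from $\varpi$ to $m$, and the monotonicity argument for divergence along interior approaches) that the paper leaves implicit.
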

\begin{proof}
Let $(\widetilde{\mathcal{M}}, \widetilde{g})$ be a non-trivial $C^2$ extension.
We apply Part III~of Proposition~\ref{frameprop}. Let $\gamma$ be such that
$\overline{\pi_{\mathcal{Q}}(\gamma|_{\mathcal{M}})}\cap \mathcal{CH}^+\ne\emptyset$. 
Since $\varpi$ extends ``continuously'' to $\infty$ on $\mathcal{CH}^+$, it follows
from $(\ref{Kretsch})$
that the supremum of the Kretschmann scalar on
$\gamma|_{\mathcal{M}}$ is $\infty$,
a contradiction.  
\end{proof}

The above extendibility statement can be strengthened,
but such statements below $C^2$ need care.
We will examine the consequence of the blow-up of mass
in our setting in Section~\ref{christoffels}. 
Cf.~the comments following Theorem~\ref{introtheo}.

Conjecture~\ref{isitgen} notwithstanding, 
it is an open question whether Proposition \ref{interestingly}
holds under the assumptions of  only Theorem~\ref{chstab}.

\section{Cauchy stability}
\label{CSsec}
We will apply a Cauchy stability argument in the following form.
Let us fix a subextremal Reissner--Nordstr\"om solution with parameters
$0<Q_{RN}<M_{RN}$, and an arbitrary spherically symmetric (but \emph{not necessarily
admissible}) Cauchy hypersurface $\Sigma$. We fix the gauge 
described before Theorem~\ref{chstab}.

We shall consider ``$\delta$-perturbations'' of the data on $\Sigma$.
A $\delta$-perturbation will  be such that (with respect to the above
gauge), the following pointwise inequalities hold on $\Sigma$:
\begin{equation}
\label{eps1}
|r-r_{RN}|<\delta,\qquad |\Omega-\Omega_{RN}|<\delta,
\end{equation}
\begin{equation}
\label{eps2}
|\partial_u r-\partial_u r_{RN}|<\delta, \qquad |\partial_v r-\partial_vr_{RN}|<\delta,
\end{equation}
\begin{equation}
\label{eps3}
|\partial_u \Omega- \partial_u \Omega_{RN}|<\delta, 
\qquad |\partial_v \Omega- \partial_v \Omega_{RN}|<\delta,
\end{equation}
\begin{equation}
\label{eps4-5}
|Q-Q_{RN}|<\delta, \qquad
|\varpi-M_{RN}|<\delta,
\end{equation}
\begin{equation}
\label{eps6}
|\partial_u\phi|<\delta, \qquad |\partial_v\phi|<\delta.
\end{equation}

Note that in the two-end case, our equations admit a local well-posedness
theory in $C^1$ of $(\Omega, r, \phi)$.
Our Cauchy stability statement is then as follows:
\begin{proposition}
\label{CSprop}
Given any $0<U,V<\infty$ and $\epsilon>0$ 
there exists a $\delta$
so that for  $\delta$-perturbations of subextremal Reissner--Nordstr\"om data,
the maximal Cauchy development $(\mathcal{M},g, F, \phi)$ 
contains the compact null rectangle
$[-U,U]\times[-V,V]$ and the estimates $(\ref{eps1})$--$(\ref{eps6})$ hold there, with 
$\epsilon$ in place of $\delta$.
\end{proposition}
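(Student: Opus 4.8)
The plan is to prove Proposition~\ref{CSprop} by a standard continuity/bootstrap argument applied to the characteristic (or mixed) initial value problem for the $1+1$ system $(\ref{requ})$--$(\ref{Raych2})$, exploiting the fact that on the fixed compact rectangle $[-U,U]\times[-V,V]$ the background Reissner--Nordstr\"om solution is smooth and all its relevant quantities ($r_{RN}$, $\Omega_{RN}$, their first derivatives, $\varpi_{RN}$, $Q_{RN}$) are bounded, with $r_{RN}$ bounded below by a positive constant $r_* > 0$ on the rectangle (since $r = 0$ does not occur in the Reissner--Nordstr\"om block covered by a fixed coordinate rectangle adapted to the admissible gauge). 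The key point is that the system is quasilinear with the only possible degeneration being $r \to 0$ or $\Omega \to 0$; on the rectangle the background stays away from both, so for a sufficiently small perturbation the solution cannot degenerate there either, and hence the local existence time cannot shrink below the rectangle.

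First I would set up the iteration: rewrite $(\ref{requ})$, $(\ref{Omegeq})$, $(\ref{WAVe})$, $(\ref{Raych1})$, $(\ref{Raych2})$ as an integral system for the unknowns $(r, \partial_u r, \partial_v r, \log\Omega^2, \partial_u\log\Omega^2, \partial_v\log\Omega^2, \phi, \partial_u\phi, \partial_v\phi)$ along characteristics, integrating $\partial_u$-equations from the $v$-axis portion of $\Sigma$ and $\partial_v$-equations from the $u$-axis portion (using the normalization $\partial_u x = -1$, $\partial_v x = 1$ on $\Sigma$ to identify $\Sigma$ with the relevant coordinate segment, and noting $\varpi$, $Q$ are recovered algebraically from $r$, $\partial_u r$, $\partial_v r$, $\Omega$ via $(\ref{mdef})$ and the constancy of $Q$). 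Then I would define the bootstrap set: the set of $(u,v) \in [-U,U]\times[-V,V]$ on which the solution exists and $(\ref{eps1})$--$(\ref{eps6})$ hold with, say, $2\epsilon$ in place of $\delta$ (where $\epsilon \le \epsilon_0$ is chosen small enough that $r \ge r_*/2 > 0$ and $\Omega \ge \Omega_*/2 > 0$ there). This set is non-empty (it contains a neighborhood of $\Sigma$ by local well-posedness in $C^1$), relatively open (again by local well-posedness), and the bulk of the work is showing it is relatively closed, which follows from improving $2\epsilon$ to $\epsilon$: subtracting the background integral equations from the perturbed ones, one gets a linear Gr\"onwall-type inequality for the $C^0$ norm of the differences of all nine quantities over the rectangle, with constants depending only on $U$, $V$, $r_*$, $\Omega_*$, and the $C^0$ bounds of the background and its derivatives on the rectangle; this yields a bound of the form (difference) $\le C(U,V) \cdot \delta$, so choosing $\delta < \epsilon / C(U,V)$ closes the bootstrap.

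The main obstacle — really the only non-routine point — is bookkeeping the fact that the differenced equations genuinely close: the nonlinear terms like $r^{-1}\partial_u r\, \partial_v r$, $\Omega^2 r^{-3} Q^2$, $\partial_u\phi\,\partial_v\phi$, $\Omega^{-2}(\partial_u\phi)^2$, etc., must be shown to be Lipschitz in the nine unknowns uniformly on the region where $r \ge r_*/2$ and $\Omega \ge \Omega_*/2$, so that their differences are controlled by the differences of the unknowns times an $O(1)$ constant. This is where the lower bounds on $r$ and $\Omega$ — inherited from the background on the compact rectangle — are essential, and it is the reason the statement is restricted to a fixed $[-U,U]\times[-V,V]$ rather than an unbounded region. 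Once the Lipschitz estimates are in place, a two-dimensional Gr\"onwall argument (iterating in $u$ and $v$, or using the standard characteristic energy/sup-norm iteration) gives the closure. I would then remark that Proposition~\ref{CSprop} immediately yields Corollary~\ref{admississue}: applying it with $U$, $V$ large enough that the Reissner--Nordstr\"om rectangle contains a future-admissible and a past-admissible Cauchy slice (existing for the background by inspection of its Penrose diagram), the smallness of the perturbation in $C^1$ of $(r, \partial r, \Omega, \ldots)$ preserves the sign conditions $\partial_u r < 0$ on $\Sigma_A$, $\partial_v r < 0$ on $\Sigma_B$ defining admissibility, since these are open conditions.
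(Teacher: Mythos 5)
Your argument is correct, and in fact the paper gives no proof of Proposition~\ref{CSprop} at all: it is asserted as a standard consequence of the $C^1$ local well-posedness theory mentioned in the sentence immediately preceding it. Your bootstrap/Gr\"onwall argument on the compact rectangle---using that $r_{RN}\ge M_{RN}-\sqrt{M_{RN}^2-Q_{RN}^2}>0$ and $\Omega_{RN}^2>0$ there to keep the nonlinearities uniformly Lipschitz---is exactly the standard proof the paper has in mind, so there is nothing to compare beyond noting that you have supplied the omitted details.
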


In particular, we have,

\begin{corollary}
\label{admississue}
Let $\epsilon>0$.
Consider an arbitrary data set which
is a $\delta$-perturbation of Reissner--Nordstr\"om data
with parameters $M_{RN}$, $Q_{RN}$, and let $(\mathcal{M},g,  F,\phi)$
be the maximal Cauchy development.
Then for $\delta$ sufficiently small there exist past and future admissible Cauchy hypersurfaces $\Sigma_\pm\subset
\mathcal{M}$,
which coincide with $\Sigma$ for large values of $|x|$,
and such that the induced data on $\Sigma_\pm$ are $\epsilon$-perturbations
of Reissner--Nordstr\"om data on $\Sigma_\pm$, with respect to the normalisation
 described before Theorem~\ref{chstab}, now applied to Reissner--Nordstr\"om data
on $\Sigma_\pm$.
\end{corollary}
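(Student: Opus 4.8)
The plan is to deduce the corollary from Proposition~\ref{CSprop} together with an explicit construction carried out first inside the fixed Reissner--Nordstr\"om background. First I would work entirely in the subextremal Reissner--Nordstr\"om solution and produce a spacelike Cauchy hypersurface $\Sigma_+^{RN}$ which agrees with $\Sigma$ for $|x|$ large and which is future admissible, together with its past-admissible counterpart $\Sigma_-^{RN}$. In the Penrose diagram of Reissner--Nordstr\"om, future admissibility amounts to $\Sigma_+^{RN}$ having no intersection with the closure of the white hole region (Section~\ref{genframsec}), equivalently $\partial_u r<0$ on the component containing end $A$ and $\partial_v r<0$ on the component containing end $B$. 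Since $\Sigma$ is asymptotically flat with two ends, the Eddington--Finkelstein-like normalisation fixed before Theorem~\ref{chstab} already forces $\partial_u r<0$ near $i^0_A$ and $\partial_v r<0$ near $i^0_B$; only the behaviour of $\Sigma$ on a fixed compact ``core'' of the $(u,v)$--plane is at issue, and there one deforms $\Sigma$ to the timelike future, keeping it spacelike, until it clears the white hole region with a definite margin. One then checks (classically, using that $\Sigma^{RN}$ is Cauchy and $I^+(\Sigma^{RN})\subset D^+(\Sigma^{RN})$) that $\Sigma_+^{RN}$ is again Cauchy, that away from its asymptotic tails it lies in a fixed compact set $K\subset\mathcal{Q}_{RN}\cap D^+(\Sigma^{RN})$, and that on $K$ it is uniformly spacelike with $\partial_u r$ resp.\ $\partial_v r$ negative and bounded away from $0$ on the relevant components.

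Next I would transfer this surface to the perturbed spacetime. Choose a null rectangle $[-U,U]\times[-V,V]$ whose interior contains $K$ in the normalised coordinates, and apply Proposition~\ref{CSprop} with target $\epsilon'$ to be fixed below: for $\delta$ small, $\mathcal{M}$ contains $[-U,U]\times[-V,V]$ and $(r,\Omega,\partial r,\partial\Omega,Q,\varpi,\partial\phi)$ lie within $\epsilon'$ of their Reissner--Nordstr\"om values there. Define $\Sigma_+\subset\mathcal{M}$ to be the hypersurface given by the same $(u,v)$--graph as $\Sigma_+^{RN}$ over $K$, and equal to $\Sigma$ for $|x|$ large. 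On the core $\Sigma_+$ is spacelike and satisfies $\partial_u r<0$ resp.\ $\partial_v r<0$ on the appropriate component, because these hold with a margin for $\Sigma_+^{RN}$ and $\epsilon'$ is small; on the asymptotic tails $\Sigma_+$ coincides with $\Sigma$, so the same sign conditions hold by asymptotic flatness and the induced data is the original $\delta$-perturbation data, hence $\delta$--close to Reissner--Nordstr\"om on $\Sigma_+$. Thus $\Sigma_+$ is future admissible, and, choosing first $\epsilon'$ and then $\delta$ small enough --- also to absorb the smooth change of null-coordinate normalisation adapted to $\Sigma_+$ rather than $\Sigma$, which is a diffeomorphism of the $(u,v)$--plane close to the identity since it is determined by $r|_{\Sigma_+}$ --- the induced data on $\Sigma_+$ is an $\epsilon$--perturbation of Reissner--Nordstr\"om data on $\Sigma_+$. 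The construction of $\Sigma_-$ is identical with ``past'' in place of ``future''.

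It remains to verify that $\Sigma_+$ is a Cauchy hypersurface of $\mathcal{M}$: since it is achronal, agrees with the Cauchy surface $\Sigma$ outside a compact set, and the spacetime region enclosed between $\Sigma$ and $\Sigma_+$ is contained in $[-U,U]\times[-V,V]\subset\mathcal{M}$ (the core of $\Sigma_+^{RN}$, hence of $\Sigma_+$, lying in $D^+(\Sigma)$), every inextendible causal curve of $\mathcal{M}$ meeting $\Sigma$ also meets $\Sigma_+$, so $D(\Sigma_+)=\mathcal{M}$. The step I expect to require the most care is precisely this causal-theoretic bookkeeping combined with the matching of the compact ``core'' estimates supplied by Proposition~\ref{CSprop} to the non-compact asymptotic tails, where Proposition~\ref{CSprop} does not apply and one falls back on asymptotic flatness and the fact that the data there is unchanged; the purely quantitative content of the corollary is otherwise immediate from Proposition~\ref{CSprop}.
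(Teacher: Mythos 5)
Your proposal is correct and follows exactly the route the paper intends: the corollary is stated as an immediate consequence of Proposition~\ref{CSprop}, obtained by first constructing the admissible hypersurfaces in the background Reissner--Nordstr\"om solution (pushing $\Sigma$ off the closure of the white/black hole region on a compact core) and then transporting them to the perturbed spacetime via the uniform estimates on a compact null rectangle, matching with the unperturbed asymptotic tails. Your attention to the two points the paper leaves implicit --- the verification that the deformed surface remains Cauchy, and the near-identity change of normalisation of the null coordinates adapted to $\Sigma_\pm$ --- is exactly the right bookkeeping.
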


In view of the above corollary, we are justified in having restricted up to now
attention on the case $\Sigma=\Sigma_+$, and we shall always restrict
our attention in what follows to the case $\Sigma=\Sigma_+$, and to future
evolution,
in the proof of Theorem~\ref{introtheo}.

\section{A uniform estimate on the future boundary of the `stable blue-shift' region}

To prove the main theorem of the present paper,
we  appeal directly to another 
statement from~\cite{md:cbh}, which is used
in the proof of relation $(\ref{thenonemp})$ of Theorem~\ref{chstab}. 

\begin{proposition}[see Proposition 9.1  of \cite{md:cbh}]
\label{stableblue}
Under the assumptions of Theorem~\ref{chstab}, there exists a spacelike curve $\gamma_A$ terminating at $i^+_A$,
\[
\input{glob_gam.pstex_t}
\]
an $s>0$, a $C_s>0$,
such that on $\gamma_A$,
\begin{equation}
\label{propagating}
0< -\partial_v r  \le C_s v^{-1-s}
\end{equation}
and such that given any $\epsilon>0$, there exists a $V_0$ such that
\begin{equation}
\label{wellintheblue}
\left|r- \varpi^+_A+\sqrt{(\varpi^+_A)^2-Q^2}\right|\le \epsilon
\end{equation}
on $\gamma_A\cap\{v\ge V_0\}$. 
\end{proposition}

Examining the proofs of~\cite{md:cbh, dr1}, we easily see
\begin{claim}
Given $\epsilon_0>0$ and $C>0$, then 
there exists a $\delta_0>0$ such
that for all  $\delta_0$-perturbations of a fixed sub-extremal
Reissner--Nordstr\"om data-set with parameters $0<Q_{RN}<M_{RN}$
on an admissible $\Sigma$,
and such that moreover $(\ref{stoapeiro})$ holds with constant $C$, then
$(\ref{propagating})$ holds on $\gamma_A$ where 
$C_s$, $s$ can be chosen uniformly,  and such 
that $(\ref{wellintheblue})$ holds with $\epsilon=\epsilon_0$
for a $V_0$ that can be chosen  uniformly.
\end{claim}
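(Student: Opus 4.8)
The plan is to show that the constants $C_s$, $s$, and $V_0$ appearing in Proposition~\ref{stableblue} can be chosen uniformly over the family of $\delta_0$-perturbations, by unwinding the proofs of~\cite{md:cbh, dr1} and checking that every quantitative input depends only on the allowed data. The key observation is that, by the monotonicity of Section~\ref{trappedsec} together with the Cauchy stability Proposition~\ref{CSprop}, a $\delta_0$-perturbation of subextremal Reissner--Nordstr\"om data with $0<Q_{RN}<M_{RN}$ automatically satisfies $0<Q<\varpi^+_{A,B}$ with quantitative bounds: indeed $|Q-Q_{RN}|<\delta_0$ and $\varpi \ge \varpi|_\Sigma > M_{RN}-\delta_0$ by $(\ref{monintrap})$ (once one has passed to an admissible $\Sigma$), so for $\delta_0$ small one has, say, $Q \le \tfrac12(Q_{RN}+M_{RN}) \le \varpi^+_{A,B}$, and also an \emph{upper} bound $\varpi^+_{A,B}\le M^{ADM}_{A,B} < M_{RN}+\delta_0$. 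Thus the triple $(C, \varpi^+_{A,B}, Q)$ on which the constant $\tilde C_\epsilon$ of Theorem~\ref{plt} depends ranges over a \emph{compact} set bounded away from the extremal locus $Q=\varpi^+_{A,B}$; hence Price's law decay $(\ref{forprice'slaw})$ holds with a single constant $\tilde C_\epsilon$ valid for the whole family.

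From there I would retrace the construction of $\gamma_A$ in the proof of Proposition~9.1 of~\cite{md:cbh} (the `stable blue-shift region' argument) and check the following hierarchy of uniformity: first, the quantitative decay along $\mathcal{H}^+_A$ is uniform by the above paragraph; second, the redshift/blue-shift bootstrap that propagates estimates from $\mathcal{H}^+_A$ into the interior up to $\gamma_A$ uses only the surface gravity of the relevant horizons and the decay constants — all of which are continuous functions of $(\varpi^+_A, Q)$ on the compact parameter set and hence uniformly controlled; third, the characteristic initial data for this interior argument, posed on a truncated outgoing cone near $i^+_A$, is controlled by Proposition~\ref{CSprop} applied on a fixed compact null rectangle $[-U,U]\times[-V,V]$ chosen large enough to reach past the region where the asymptotic behaviour has stabilised. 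The resulting exponent $s>0$ and constant $C_s$ in $(\ref{propagating})$ come out of this bootstrap with the same uniformity, and the threshold $V_0$ in $(\ref{wellintheblue})$ is fixed once the decay constant $\tilde C_{\epsilon_0}$ and the distance-to-extremality are fixed.

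Concretely, the steps in order are: (i) invoke Corollary~\ref{admississue} so that we may assume $\Sigma$ admissible and the data still $\epsilon$-close to Reissner--Nordstr\"om; (ii) use monotonicity and $(\ref{eps4-5})$ to pin $(\varpi^+_{A,B},Q)$ into a fixed compact subset of $\{0<Q<\varpi^+\}$; (iii) apply Theorem~\ref{plt} with the constant $\tilde C_\epsilon$ now depending only on $(C,$ and this compact set$)$, hence uniform; (iv) feed this into the interior estimates of~\cite{md:cbh}, tracking that every constant is a continuous — hence bounded — function of the parameters over the compact set, and that the only genuinely $\delta_0$-dependent input (the size of $\partial_u\phi,\partial_v\phi$ off the horizon on the initial cone) is handled by Proposition~\ref{CSprop}; (v) read off uniform $C_s$, $s$, $V_0$. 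I expect the main obstacle to be bookkeeping rather than conceptual: one must verify that nowhere in the proof of Proposition~9.1 of~\cite{md:cbh} does a constant implicitly depend on the \emph{fine} structure of the perturbation (e.g.\ on higher derivatives or on a decay \emph{lower} bound) rather than merely on the quantities listed — in other words, that the argument was already `soft' enough in its data-dependence. The phrase ``examining the proofs'' signals exactly that this is a matter of inspecting the existing argument's quantitative structure, and the compactness-plus-continuity packaging above is what makes the uniformity fall out.
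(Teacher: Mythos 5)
Your proposal is correct and follows essentially the same route as the paper, which disposes of this claim in one line (``Examining the proofs of~\cite{md:cbh, dr1}, we easily see\dots''): the content is precisely that every constant in Proposition~\ref{stableblue} traces back to $(C,\varpi^+_{A,B},Q)$, and that Cauchy stability plus the monotonicity of $\varpi$ confine these to a compact subset of the subextremal region $\{0<Q<\varpi^+\}$, so the constants of Theorem~\ref{plt} and of the interior bootstrap of~\cite{md:cbh} can be chosen uniformly. Your elaboration via compactness-plus-continuity of the parameter dependence is exactly the bookkeeping the paper leaves implicit.
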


Note that the position of $\gamma_A$ in $(u,v)$ coordinates depends
of course on the solution.

Finally, we remark that an identical construction gives us $\gamma_B$,
analogous to $\gamma_A$, with $u$ replacing $v$.

\begin{remark}
Let us note here that $\gamma_A$ above represents, in the terminology
introduced in~\cite{md:cbh}, the future boundary of the ``stable blue shift region''
$\mathcal{B}_\gamma$. See Section~6 of~\cite{md:cbh} for a discussion 
of these regions.
 As we shall see, the significance of
$(\ref{propagating})$ is that it is integrable in $v$, and this property will persist
to the future of $\gamma_A$.
\end{remark}

In the course of the proof of our main theorem, we will not in fact appeal to 
Theorem~\ref{chstab} or Corollary~\ref{stabtoR},  
but rather directly to Proposition~\ref{stableblue}.
In particular, we shall effectively recover the proof of the non-emptiness
relation $(\ref{thenonemp})$, given of course Proposition~\ref{stableblue}.

\section{$\mathcal{S}\cup\mathcal{S}_{A}\cup \mathcal{S}_B=\emptyset$}
Having recalled all that we will need, we may now prove the main result
of the paper. Given the framework of Proposition~\ref{frameprop} and the comment
following Corollary~\ref{admississue}, then the first
two statements of 
Theorem~\ref{introtheo} of the introduction will follow from
\begin{theorem}
\label{willfollowf}
Let $0<Q_{RN}<M_{RN}$. Let $C>0$ be a constant.
Then there exists a $\delta>0$  such that for
all $\delta$-perturbations
of subextremal Reissner--Nordstr\"om data on a future admissible $\Sigma$
with parameters $Q_{RN}$, $M_{RN}$,
satisfying moreover $(\ref{stoapeiro})$,
then 
\[
\mathcal{S}\cup\mathcal{S}_{A}\cup \mathcal{S}_B=\emptyset.
\]
Moreover, given $\epsilon>0$, then the global bound 
\[
r\ge M_{RN}-\sqrt{M_{RN}^2-Q_{RN}^2}-\epsilon
\]
holds, for $\delta$ chosen small enough.
\end{theorem}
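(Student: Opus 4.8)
The plan is to run a continuity/bootstrap argument in the region to the future of the curves $\gamma_A$, $\gamma_B$ supplied by Proposition~\ref{stableblue}, using the integrability of $-\partial_v r$ along $\gamma_A$ (and $-\partial_u r$ along $\gamma_B$) as the crucial quantitative input, exactly as indicated in the Remark following the Claim. First I would fix the subextremal parameters $0<Q_{RN}<M_{RN}$ and the constant $C$, and invoke Corollary~\ref{admississue} (the comment after it) to reduce to the case $\Sigma=\Sigma_+$ future admissible; then, applying the Claim, choose $\delta$ so small that along $\gamma_A$ the bound $0<-\partial_v r\le C_s v^{-1-s}$ holds with uniform $C_s,s$ and that $|r-(\varpi^+_A-\sqrt{(\varpi^+_A)^2-Q^2})|\le\epsilon_0$ holds on $\gamma_A\cap\{v\ge V_0\}$ with $V_0$ uniform, and symmetrically for $\gamma_B$. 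Since $\delta$-closeness to subextremal Reissner--Nordstr\"om forces $\varpi^+_A$ and $Q$ to be within $O(\delta)$ of $M_{RN}$ and $Q_{RN}$ respectively, the right endpoint of this interval is within $O(\epsilon_0+\delta)$ of $r_- := M_{RN}-\sqrt{M_{RN}^2-Q_{RN}^2}$; choosing $\epsilon_0,\delta$ small makes $r\ge r_- - \epsilon/2$ on $\gamma_A$, and likewise on $\gamma_B$.

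Next I would set up the bootstrap in the causal rectangle bounded in the past by $\gamma_A\cup\gamma_B$ (the `diamond' whose past corner is near $i^+_{A,B}$ and which, a priori, could contain the $r=0$ boundary components $\mathcal{S}_A,\mathcal{S}_B,\mathcal{S}$). On each outgoing ray one has $-\partial_u r\ge 0$ by the analysis of Section~\ref{trappedsec} (we are in $J^+(\mathcal{H}^+_A)$), and integrating equation~(\ref{requ}), together with equation~(\ref{Raych1})/(\ref{Raych2}) which give monotonicity of $\Omega^{-2}\partial_u r$ and $\Omega^{-2}\partial_v r$, I would propagate the bound on $-\partial_v r$ off of $\gamma_A$: the key point is that the term $-\frac1r\partial_v r\,\partial_u r$ in~(\ref{requ}) has a favourable sign for $\partial_v$ (since $\partial_u r<0$ in the trapped region), so $-\partial_v r$ cannot grow faster than the linear ODE driven by $\frac{\Omega^2}{4r}$ and $\frac14\Omega^2 r^{-3}Q^2$ allows. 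The bootstrap assumptions would be: (i) $r\ge r_- - \epsilon$ throughout the region reached so far; (ii) $\int -\partial_v r\, dv$ along outgoing rays stays bounded by a uniform multiple of $\int_{\gamma_A}$; and a symmetric pair in $u$. Under (i), $r$ is bounded below, so $\Omega^2$ (which one controls via~(\ref{Omegeq}), using the Price-law/$(\ref{stoapeiro})$-driven decay of $\phi$ derivatives to bound $\partial_u\partial_v\log\Omega^2$) stays comparable to its Reissner--Nordstr\"om value, and the ODE comparison closes assumption (ii) with a strictly better constant, which in turn — since the total variation of $r$ in the $v$-direction is then uniformly small for $v$ large — closes assumption (i) with room to spare. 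The output is that $r$ cannot drop to $0$: $r\ge r_- - \epsilon$ everywhere to the future, hence $\mathcal{S}_A=\mathcal{S}_B=\mathcal{S}=\emptyset$ and the stated global lower bound holds.

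The main obstacle I anticipate is controlling $\Omega^2$ uniformly down to, and across, the would-be Cauchy horizon: equation~(\ref{Omegeq}) for $\log\Omega^2$ involves $\partial_u\phi\,\partial_v\phi$, and while the $v$-decay of $\partial_v\phi$ on the horizon is given by Theorem~\ref{plt} (Price's law from~(\ref{stoapeiro})), one must transport this into the interior and integrate against the $u$-direction behaviour — this is precisely where the blueshift lives and where the argument of~\cite{md:cbh} does its real work. The honest way to handle this is \emph{not} to reprove those estimates but to note, as the text already sets up, that Proposition~\ref{stableblue} hands us the curve $\gamma_A$ at which the dangerous part of the evolution has already been absorbed: on and beyond $\gamma_A$ the quantity $-\partial_v r$ is \emph{integrable} in $v$ with a uniform bound, and the sign structure of~(\ref{requ}) together with~(\ref{Raych1})--(\ref{Raych2}) means integrability, once established, is self-propagating to the future. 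So the genuine content of the proof is the elementary ODE-comparison/continuity argument in the region $J^+(\gamma_A\cup\gamma_B)$, showing that $r$ stays within $\epsilon$ of $r_-$; the hard analytic input is entirely quarantined inside Proposition~\ref{stableblue} (and the Claim that its constants are uniform over $\delta$-perturbations), which we are entitled to assume. A secondary technical point is matching the two one-sided arguments (in $v$ off $\gamma_A$ and in $u$ off $\gamma_B$) in the central diamond where both $\partial_u r<0$ and $\partial_v r<0$; here the monotonicity of $\varpi$ in~(\ref{monintrap}) and the fact that $\varpi$ is bounded above by $M^f_{A,B}\approx M_{RN}$ along $\gamma_A,\gamma_B$ give a uniform upper bound on $\varpi$ in the whole region, which feeds back into the ODE for $\partial_v r$ and keeps everything uniform.
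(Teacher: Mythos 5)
Your overall skeleton --- quarantining the hard analysis inside Proposition~\ref{stableblue} and its Claim, then propagating the integrability of $-\partial_v r$ into $J^+(\gamma_A)\cup J^+(\gamma_B)$ by a bootstrap on the lower bound for $r$ --- is indeed the paper's strategy. But the core mechanism by which $(\ref{requ})$ propagates the bound is misidentified, and this is a genuine gap. In the trapped region the cross term $-\frac1r\partial_u r\,\partial_v r$ is \emph{not} the favourable one: with $\partial_u r<0$ and $\partial_v r<0$ it is negative, i.e.\ it drives $-\partial_v r$ to \emph{grow} in $u$. It is merely a Gronwall term, proportional to $-\partial_v r$ with coefficient $r^{-1}(-\partial_u r)$, and it is tamed only because the bootstrap pins $r$ between $r_--4\epsilon_0$ and $r_-+\epsilon_0$, so that $\int r^{-1}(-\partial_u r)\,du$ is of size $\epsilon_0$. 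Conversely, the terms you propose to treat as the ``driving'' source of a linear ODE, namely $-\frac{\Omega^2}{4r}+\frac14\Omega^2r^{-3}Q^2$, are precisely the ones with the genuinely good sign: since $r_-r_+=Q_{RN}^2$ forces $r_-<Q_{RN}$ for subextremal Reissner--Nordstr\"om, the bootstrap bound $r\approx r_-$ gives $-\frac1{4r}+\frac{Q^2}{4r^3}>0$ (this is $(\ref{tobasiko})$, and it is the one place where subextremality and the closeness of $r$ to $r_-$ really enter), so these terms can simply be discarded, yielding $\partial_u\log(-\partial_v r)\le r^{-1}(-\partial_u r)$ with no reference to $\Omega$ whatsoever.

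This matters because your fallback --- bounding the $\Omega^2$ source by showing $\Omega^2$ ``stays comparable to its Reissner--Nordstr\"om value'' via $(\ref{Omegeq})$ --- cannot work: near $\mathcal{CH}^+$ the term $\partial_u\phi\,\partial_v\phi$ is exactly where the blueshift lives, and no two-sided pointwise control of $\Omega^2$ is available at this stage (the paper obtains only a one-sided bound $\log\Omega^2\le C$, and only \emph{after} the global lower bound on $r$ is already established). Likewise, your proposed treatment of the central diamond via an upper bound on $\varpi$ from $(\ref{monintrap})$ does not by itself keep $r$ away from $0$: an upper bound on $\varpi$ constrains $r$ only relative to the roots $\varpi\pm\sqrt{\varpi^2-Q^2}$, not from below. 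What the paper actually does there is apply Cauchy stability (Proposition~\ref{CSprop}) on the compact rectangle $[-U,U]\times[-V,V]$ to place the connecting segments on $\{v=V\}$ and $\{u=U\}$ inside $\mathcal{T}$ with $r$ within $\epsilon_0$ of $r_-$; these segments, together with $\gamma_A$ and $\gamma_B$, form the past boundary from which the two one-sided integrations are launched and glued. With the sign observation $(\ref{tobasiko})$ supplied, the rest of your bootstrap closes essentially as you describe.
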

\begin{proof}
Given $\epsilon>0$ and $C$,  let $0<\epsilon_0\ll \epsilon$ (to be determined),
and
let $\delta_0$, $V_0$, $s$, $C_s$ be as in the claim following
Proposition~\ref{stableblue}. 
We have
\begin{equation}
\label{giatoQ}
|Q-Q_{RN}|<\epsilon_0.
\end{equation}
We can then choose $V\ge \max\{V_0,-v(\mathcal{H}^+_B),0\}$ such that 
\begin{equation}
\label{integratedsmall}
\int_{V}^v 2C_sv^{-1-s}<\epsilon_0,
\end{equation}
and 
\begin{equation}
\label{fromthebound}
\left|r(u,v)- (M_{RN}-\sqrt{M^2_{RN}-Q^2_{RN}})\right| < \epsilon_0
\end{equation}
for $(u,v)\in \gamma_A$ and $v\ge V$.

We apply the same argument to the end $B$ and this gives us a $U_0$, a new $\delta_0$,
and a $\gamma_B$. Let us call $\delta_0$ the minimum of the new $\delta_0$ and
the previous.
We may then choose a $U\ge \max\{U_0, -u(\mathcal{H}^+_A),0)\}$ satisfying the analogue
of $(\ref{integratedsmall})$ with $u$, $U$, in place of $v$, $V$.

Let us note that in the Reissner--Nordstr\"om solution with parameters
$M_{RN}$, $Q_{RN}$, we have that 
\[
r_{RN}\ge M_{RN} -\sqrt{M^2_{RN}-Q^2_{RN}}+\epsilon_{UV}
\]
on the rectangle $[-U,U]\times [-V,V]$,
while
\[
r_{RN} \le M_{RN}-\sqrt{M^2_{RN}-Q^2_{RN}} +\epsilon_0
\]
on $[u_{A,RN}(V),U]\times\{V\}$, $\{U\}\times[v_{B,RN}(U),V]$,
where by definition, 
$(u_{A,RN}(V),V)\in \gamma_{A,RN}$, 
$(U,v_{B,RN}(U)) \in\gamma_{B,RN}$.

Now given these $U,V$ and choosing  $\epsilon_1\ll\epsilon_{UV}, \epsilon_0$, we
apply  Proposition~\ref{CSprop} (with $\epsilon_1$ in place of $\epsilon$), 
and let $\delta$ be the minumum
of the previous $\delta_0$ and that given by the above corollary.

For $\epsilon_1\ll\epsilon_{UV}, \epsilon_0$ sufficiently small, it follows
that on $[u_{A}(V),U]\times\{V\}$ and $\{U\}\times[v_{B}(U),V]$, we have
\begin{equation}
\label{suv9nkn}
r \ge M_{RN}-\sqrt{M^2_{RN}-Q^2_{RN}} - \epsilon_0,
\end{equation}
and also
\[
\varpi>Q,
\]
\[
\varpi-\sqrt{\varpi^2-Q^2}<r< \varpi+\sqrt{\varpi^2-Q^2},
\]
and thus, $1-2m/r<0$.
In view of the fact that the signs of $\partial_ur$, $\partial_vr$, respectively are
determined by Raychaudhuri on these sets, it 
follows that $[u_{A,RN}(V),U]\times\{V\}\cup \{U\}\times[v_{B,RN}(U),V]
\subset\mathcal{T}$. In view of this together with the first inequality of
$(\ref{propagating})$ and its $u$-analogue, we obtain
\begin{equation}
\label{TOMELLOV}
J^+(\gamma_A)\cup J^+(\gamma_B) \cap (\{v\ge V\}\cup \{u \ge U\})\subset \mathcal{T}.
\end{equation}

We will show that
$(\ref{fromthebound})$   holds on the region $(\ref{TOMELLOV})$
with $\epsilon_0$ replaced by $3\epsilon_0$.

For this, let us define the region
\[
\mathcal{X}= J^+(\gamma_A)\cup J^+(\gamma_B)\, \cap\, (\{v\ge V\}\cup \{u \ge U\})
\cap \{r \ge M_{RN}-\sqrt{M^2_{RN}-Q^2_{RN}}-4\epsilon_0 \}
\]
\[
\input{glob_boot.pstex_t}
\]

In view of~$(\ref{suv9nkn})$, $\mathcal{X}\supset
[u_{A}(V),U]\times\{V\} \cup \{U\}\times[v_{B}(U),V]$,
while in view of~$(\ref{fromthebound})$, $\mathcal{X}\supset
(\gamma_A\cap\{v\ge V\})\,\cup\, ( \gamma_B\cap\{u\ge U\})$.

By $(\ref{TOMELLOV})$, we have that $\mathcal{X}\subset\mathcal{T}$.
We have thus the one-sided bound
\begin{equation}
\label{onesided}
r \le M_{RN}-\sqrt{M^2_{RN}-Q^2_{RN}}+\epsilon_0.
\end{equation}
Let us note that by the monotonicity properties of $r$, then
$\mathcal{X}$ is a past set in the topology of the left hand side of $(\ref{TOMELLOV})$.

Now choosing $\epsilon$ sufficiently small,
then the inequality $M_{RN}>Q_{RN}$, the definition of $\mathcal{X}$ and
the bound $(\ref{onesided})$ yields the one-sided bound
\begin{equation}
\label{tobasiko}
-\frac{1}{4r}+\frac{Q^2}{4r^3} >0
\end{equation}
in $\mathcal{X}$, and
thus
by $(\ref{requ})$ we have
\[
\partial_u\log (-\partial_v r) \le  \frac1{r} (-\partial_ur).
\]

Integrating in $u$ from $\gamma_A(v)$, we obtain
in the region  $\mathcal{X}\cap \{v\ge V\}$ the bound
\[
\log (-\partial_v r)(u,v) < E \epsilon_0 + \log (-\partial_v r)(u_A(v),v)
\]
where $(u_A(v),v)\in \gamma_A$,
and $E$ is a constant depending on $M_{RN}$, $Q_{RN}$.
Choosing $\epsilon_0$ so that $e^{E\epsilon_0}\le 2$, we obtain
\begin{equation}
\label{itpropagated}
0<-\partial_v r < 2 C_s v^{-1-s}.
\end{equation}

One obtains similarly to $(\ref{itpropagated})$,
\begin{equation}
\label{itpropagated2}
0<-\partial_u r < 2C_s u^{-1-s}
\end{equation}
in  $\mathcal{X}\cap \{u\ge U\}$.

Now  integrating $(\ref{itpropagated})$
from $[u_A,U]\times \{v=V\} \cup (\gamma_A\cap \{v\ge V\})$ 
one obtains in view of $(\ref{integratedsmall})$ that 
\[
r\ge     M_{RN}-\sqrt{M_{RN}^2-Q^2_{RN}}  - 2\epsilon_0
\]
in 
$\mathcal{X}\cap \{u\le U\}$.
It follows that
\[
\mathcal{X}\supset \cup_{v\ge V}[(u_A(v),U] \times\{v\},
\]
in particular
\[
\mathcal{X}\supset \{U\}\times [V,\infty).
\]

We may now interate $(\ref{itpropagated2})$ in $u$ from
$\{U\}\times[v_B,\infty)$ to obtain  similarly
\begin{equation}
\label{finalbound}
r\ge     M_{RN}-\sqrt{M_{RN}^2-Q^2_{RN}}  - 3\epsilon_0
\end{equation}
in 
$\mathcal{X}\cap \{u\ge U\}$.
This now implies that
\[
\mathcal{X}=
J^+(\gamma_A)\cup J^+(\gamma_B) \cap (\{v\ge V\}\cup \{u \ge U\})
\]
with the bound $(\ref{finalbound})$ holding on all of $\mathcal{Q}$.
The theorem follows, choosing  (in addition to the above restrictions) $3\epsilon_0<\epsilon$.
\end{proof}

\section{The continuous extension}
Let $u'>u(\mathcal{H}^+_A)$, $v'>v(\mathcal{H}^+_B)$.
We shall define a new set of null coordinates $(u^*,v^*)$ centred
at $(u',v')$, 
i.e.~such that $(u^*,v^*)=(0,0)$ at the old $(u',v')$,
and such that with respect to the new coordinates,
$(\Omega^*)^2=1$ on $u^*=0$, $v^*=0$.
{\bf Let us however immediately
drop the $*$ and refer to these new coordinates again as $(u,v)$.}

Since $\partial_vr $ becomes negative on $v=0$ and $\partial_u r$ becomes negative
on $u=0$, then from $(\ref{Raych1})$, $(\ref{Raych2})$ and the result of the
previous section, it follows
that $J^+(\mathcal{H}^+_A) \cup  J^+(\mathcal{H}^+_B)$
corresponds to a \emph{finite}
range
\[
(-U',U'')\times (-V',V'')
\]
for some $0<U', U''<\infty$, $0<V', V''<\infty$.
\begin{proposition}
Under the assumptions of Theorem~\ref{willfollowf},
then with respect to the new coordinate system defined above,
$\Omega^2$, $r$ and $\phi$ extend to continuous functions
on the half-closed rectangle
\[
(-U',U'']\times(-V',V''].
\]
Moreover, $\Omega$, $r$ and $\phi$ restricted
to both
$\{U''\}\times (-V',V'')$ and $(-U',U'')\times\{V''\}$ are smooth functions.
\end{proposition}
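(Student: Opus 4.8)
The plan is to work throughout in the gauge just fixed — with $\Omega^2=1$ on the two initial segments $u=0$ and $v=0$ of the new coordinates — on the \emph{finite} rectangle $(-U',U'')\times(-V',V'')$, which by the previous section exhausts $J^+(\mathcal{H}^+_A)\cup J^+(\mathcal{H}^+_B)$, so that the two ``future'' null boundaries $\{u=U''\}$ and $\{v=V''\}$ are (in the notation of Proposition~\ref{frameprop}) the two components of $\mathcal{CH}^+$. The only structural improvement over the local statements of~\cite{md:cbh} recalled in Theorem~\ref{chstab} is that, by Theorem~\ref{willfollowf}, one now has the \emph{global} lower bound $r\ge M_{RN}-\sqrt{M_{RN}^2-Q_{RN}^2}-\epsilon>0$ on all of $\mathcal{Q}$. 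Together with the Raychaudhuri equations~$(\ref{Raych1})$, $(\ref{Raych2})$ — which constrain $\partial_u r$, $\partial_v r$ in the region under consideration and in particular make $r$ of uniformly bounded variation along each null ray, i.e.\ $\int|\partial_v r|\,dv$ and $\int|\partial_u r|\,du$ are uniformly bounded — this is precisely what is needed to run the estimates of~\cite{md:cbh} \emph{globally} rather than merely in $\mathcal{U}_{A,B}$.

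I would establish the continuous extensions of $r$, $\phi$, $\Omega^2$ first. For $r$ this is immediate: $r$ is of bounded variation in each null variable and bounded, so the boundary values on $u=U''$ and $v=V''$ exist, and joint continuity up to the half-closed rectangle follows from the uniform bound on $\int|\partial_v r|\,dv$ (equicontinuity in $v$, uniformly in $u$) together with the corresponding bound in $u$. For $\phi$ I would use the renormalised wave equations~$(\ref{WAVe})$, written as $\partial_u(r\partial_v\phi)=-\partial_u\phi\,\partial_v r$ and $\partial_v(r\partial_u\phi)=-\partial_u r\,\partial_v\phi$, and run the Gr\"onwall argument of~\cite{md:cbh}: using $r$ bounded below, the data bound~$(\ref{stoapeiro})$, and the integrated $r$-derivative bounds above, one propagates uniform bounds on $r\partial_v\phi$, $r\partial_u\phi$ and hence on $\int|\partial_v\phi|\,dv$, $\int|\partial_u\phi|\,du$; so $\phi$ is of uniformly bounded variation in each null direction and extends continuously. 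For $\Omega^2$ I would integrate~$(\ref{Omegeq})$ twice out from the initial segments, controlling $-4\pi\partial_u\phi\,\partial_v\phi$ by the $\phi$-bounds just obtained and the terms $\Omega^2/(4r^2)$, $\partial_u r\,\partial_v r/r^2$, $\Omega^2Q^2/(2r^4)$ by the global lower bound on $r$ and the finiteness of the rectangle, with a bootstrap keeping $\Omega^2$ bounded (the quantity $\kappa=-\Omega^2/(4\partial_u r)$ is monotone along ingoing rays, and its analogue along outgoing rays); this yields the continuous extension of $\Omega^2$. All of this is essentially contained in~\cite{md:cbh}, and I expect the $\Omega^2$-estimate to be the main technical obstacle; the globalisation itself is soft once Theorem~\ref{willfollowf} is available.

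For the smoothness of the restriction to, say, $\{v=V''\}$, the idea is to propagate the \emph{transverse} ($u$-)regularity in $v$ up to the boundary. Viewing $r\,\partial_u r$, $\partial_u\log\Omega^2$ and $r\,\partial_u\phi$ as functions of $u$ along $v=\mathrm{const}$, equations~$(\ref{requ})$, $(\ref{Omegeq})$, $(\ref{WAVe})$ give
\[
\partial_v(r\,\partial_u r)=-\frac14\Omega^2\bigl(1-Q^2/r^2\bigr),\qquad \partial_v(r\,\partial_u\phi)=-\partial_u r\,\partial_v\phi,
\]
together with $\partial_v(\partial_u\log\Omega^2)=-4\pi\partial_u\phi\,\partial_v\phi+\Omega^2/(4r^2)+\partial_u r\,\partial_v r/r^2-\Omega^2Q^2/(2r^4)$. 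On a compact $u$-subinterval of $(-U',U'')$ — i.e.\ away from $i^+$ and from the bifurcation sphere $(U'',V'')$ — the quantities $\Omega^2$ and $r^{-1}$ are bounded by the previous step, and $\int|\partial_v r|\,dv$, $\int|\partial_v\phi|\,dv$ over the \emph{finite} interval $[v_0,V'']$ are bounded; consequently $r\,\partial_u r$ is bounded (its transport right-hand side is), hence so is $\partial_u r$, hence (by the second transport equation) so is $r\,\partial_u\phi$ and therefore $\partial_u\phi$, and then each right-hand side above is integrable in $v$ on $[v_0,V'']$. Integrating from $v=v_0$ to $v=V''$ shows that $r\,\partial_u r$, $\partial_u\log\Omega^2$, $r\,\partial_u\phi$ extend to $v=V''$ with limits continuous in $u$; differentiating the transport equations in $u$ and iterating produces uniform $C^k_{\mathrm{loc}}$-bounds for every $k$, so $r(\cdot,v)$, $\Omega^2(\cdot,v)$, $\phi(\cdot,v)$ converge in $C^\infty_{\mathrm{loc}}$ of the open interval as $v\to V''$ and the restrictions are smooth; the case $\{u=U''\}$ is identical with $u$ and $v$ interchanged. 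I would flag that this last argument relies on the fact — itself part of the continuous-extension step — that $\partial_v\phi$, though possibly \emph{not} square-integrable near $\mathcal{CH}^+$, is integrable in $v$ over the finite interval, i.e.\ $\phi$ is of bounded variation there.
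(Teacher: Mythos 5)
Your proposal is correct and follows essentially the same strategy as the paper: globalise the estimates of~\cite{md:cbh} using the uniform lower bound on $r$ from Theorem~\ref{willfollowf}, obtain integrated (BV-type) bounds on $\partial r$ and $\partial\phi$ along null rays (the $\phi$-bounds indeed come from the Gr\"onwall argument of Section~13 of~\cite{md:cbh}, seeded by the Price's law bounds $\int_{\mathcal{H}^+}|\partial_v\phi|\,dv\le C$ on the horizons rather than directly by $(\ref{stoapeiro})$), deduce the continuous extensions of $r$ and $\phi$, then control $\Omega^2$, and finally get smoothness of the restrictions by differentiating the transport equations --- your transport-equation sketch is a correct expansion of the one sentence the paper devotes to that last step. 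The one place where you diverge in substance is the $\Omega^2$ estimate, which you rightly flag as the main obstacle but resolve only schematically. The paper does not run a bootstrap: since $r$ is pinned near $r_-<Q_{RN}$ on the near-boundary region, the zeroth-order terms satisfy $\frac{\Omega^2}{4r^2}-\frac{\Omega^2 Q^2}{2r^4}\le 0$ there (this is $(\ref{tobasiko})$--$(\ref{twraauto})$), so they can simply be \emph{dropped} from an upper bound for $\partial_u\partial_v\log\Omega^2$, leaving only terms already known to be in $L^1$; this gives the one-sided bound $\log\Omega^2\le C$, hence $\iint\Omega^2\,du\,dv<\infty$. The paper then exploits the coordinate-invariance of this integral, re-basing the normalisation $\Omega^2=1$ on earlier null segments to obtain $\int_{u_0}^{U''}\int_{v_0}^{V''}\Omega^2<C(u_0,v_0)$ for every interior base point, after which $(\ref{Omegeq})$ can be integrated forwards and backwards to give the \emph{two-sided} bound $|\log\Omega^2|\le C(u_0,v_0)$ and the integrable bounds on $\partial_u\Omega$, $\partial_v\Omega$. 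A na\"ive bootstrap ``keeping $\Omega^2$ bounded'' does not obviously close without the sign observation (the Gr\"onwall constant grows with the bootstrap bound over a region of fixed area), and your $\kappa$-monotonicity remark, while a genuine tool of~\cite{md:cbh}, requires control of $\kappa$ on a null segment reaching all the way to $\mathcal{CH}^+$, which is part of what is being proved; also note that an upper bound alone would not exclude $\Omega^2\to 0$ at $\mathcal{CH}^+$, which must be ruled out for the extension to be a nondegenerate $C^0$ Lorentzian metric.
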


\begin{proof}
We begin with $r$.
The estimate of  the previous section   shows that  for $-U'<u<U''$
\[
\int _{v_1}^{v_2} (-\partial_vr)(u,v) dv \to 0
\]
as $v_1<v_2$ and $v_1\to V''$.
On the other hand,
the argument of the previous section also yields
the estimate 
\[
\sup_{v\ge 0} -\partial_ur(u,v)  \le -C \partial_u r(u,V''-\epsilon)
\]
for $\epsilon$ sufficiently small.
Since moreover
\[
\lim_{\tilde{u}\to U''}\int_{\tilde{u}}^{U''} -\partial_u r (u,V''-\epsilon)d u \to 0
\]
the required statement about $r$ follows immediately,
interchanging also the role of $u$ and $v$.

Note in fact the bound
\begin{equation}
\label{noteinfact}
\int_0^{U''}\int_{0}^{V''} |\partial_ur\partial_vr| \,du\,dv \le  C.
\end{equation}

We turn now to $\phi$.
Recall from   Section~13 of~\cite{md:cbh}, 
that, in view of the global lower bound on $r$,
and the bounds\footnote{note these are independent of the normalisation
of the null coordinates}
\[
\int_{\mathcal{H}^+_A} |\partial_v\phi| dv\le C, \qquad
\int_{\mathcal{H}^+_B} |\partial_u\phi| du\le C
\]
which follow from Theorem~\ref{plt},
one can obtain global uniform bounds 
\begin{equation}
\label{matterunif}
\int_{-V'}^{V''} \sup_{u\ge-U'}|\partial_v\phi|dv \le C, \qquad
 \int_{-U'}^{U''}\sup_{v\ge-V'}|\partial_u\phi| du \le C.
\end{equation}
Note that these bounds imply the global bound
\[
|\phi|\le C.
\]

Integrating the wave equation $(\ref{WAVe})$,
our estimates show that for each $v\in(-V',V'')$, and $-U'<u_0$, $-V'<v_0$,
we have
\[
\sup_{u\ge u_0} |\partial_v\phi(u,v)|\le C(v,u_0)
\]
and similarly
\[
\sup_{v\ge v_0}|\partial_u\phi(u,v)|\le C(u, v_0)
\]
for each $u\in(-U',U'')$, for functions $C(v,u_0)$, $C(u,v_0)$ such that
\begin{equation}
\label{integrableinuv}
\int_{u_0}^{U''} C(u,v_0)du <\infty, \qquad \int_{v_0}^{V''} C(v,u_0)dv<\infty.
\end{equation}
The continuity of $\phi$ now follows.

We finally turn to $\Omega$. 
From $(\ref{onesided})$ we see, for sufficiently small $\epsilon$, the one-sided bound
in the region $[0,U'')\times[V''-\epsilon,V'')\cup [U''-\epsilon,U'')\times[0,V'')$:
\begin{equation}
\label{twraauto}
\partial_u\partial_v \log \Omega^2 \le -4\pi \partial_u\phi\partial_v\phi
+\frac1{r^2}\partial_vr \partial_ur.
\end{equation}

From $(\ref{matterunif})$, it follows that the first term on
the right hand side of $(\ref{twraauto})$ is bounded
upon integration in absolute value over $[-U',U'']\times[-V',V'']$, 
while  from $(\ref{noteinfact})$, 
the second term on the right hand side of $(\ref{twraauto})$
is bounded upon integration in absolute value over $[0,U'']\times [0,V'']$.
Thus, these terms remain bounded when integrated over subregions.

Integrating now 
$(\ref{Omegeq})$
in $[0,u]\times [0,v]$, for $u\ge0$, $v\ge 0$,
using the compactness of $[0,U''-\epsilon]\times[0,V''-\epsilon]$,
the inequality
$(\ref{twraauto})$ in the remaining region, and the bounds of
the preceding paragraph, 
we obtain thus the uniform one sided bound
\[
\log\Omega^2 \le C.
\]

In particular, we have the global bound 
\begin{equation}
\label{Cinv}
\int_0^{U''}\int_0^{V''} \Omega^2 dudv <\infty.
\end{equation}

Now let us note that
the above argument would have applied if we chose $u=0$, $v=0$ at an earlier
time (always strictly to the future of the  two event horizons), and
$(\ref{Cinv})$ is a coordinate invariant statement!

It follows that
for any $-U'<u_0<U''$, $-V'<v_0<V''$,
we have
\begin{equation}
\label{fromCinv}
\int_{u_0}^{U''}\int_{v_0}^{V''} \Omega^2 dudv <C(u_0,v_0).
\end{equation}

Using $(\ref{fromCinv})$ and again $(\ref{matterunif})$, we may now 
integrate $(\ref{Omegeq})$
backwards and forwards from $u=0$, $v=0$, to 
obtain, in $[u_0,U'')\times[v_0,V'')$, a two-sided bound
\[
|\log\Omega^2| \le C(u_0,v_0).
\]

Revisiting the equation $(\ref{Omegeq})$, we obtain similar bounds
\[
\sup_{u\ge u_0} |\partial_v \Omega|(u,v)\le C(v,u_0),
\]
\[
\sup_{v\ge v_0} 
|\partial_u \Omega|(u,v)\le C(u,v_0),
\]
where $C(u,v_0)$, $C(v,u_0)$ again satisfy $(\ref{integrableinuv})$.

The continuous extendibility of $\Omega$ follows.

We have in fact already proven more, namely, that, in addition
to continuous extendibility,
$r$, $\phi$ satisfy global $BV$ bounds on both
$(-U',U'']\times\{V''\}$ and $\{U''\}\times(-V',V'']$,
while $\Omega$ satisfies local $BV$ bounds on these sets.

The higher order regularity claimed by the proposition follows inductively
by differentiating the basic equations and applying the above bounds.
\end{proof}

\section{Blow-up of Christoffel symbols}
\label{christoffels}

Given the above continuous extension, let us examine in 
more detail the geometry of the singular surface in the case where
the assumptions of Theorem~\ref{blowupth} apply.
We assume thus that in addition to the assumptions of Theorem~\ref{willfollowf},
we have $(\ref{lowerprice})$, 
and consequently by Corollary~\ref{globsingc}, $\varpi$ extends
``continuously'' to $\infty$ identically on $\mathcal{CH}^+$ (let us note
that the monotonicity shows that we have ``continuity'' also at the bifurcation point).

With respect to the finite coordinates defined in the last section, which
are to be thought now as a regular coordinate system of a $C^1$ (in fact
$C^\infty$) 
manifold admitting a $C^0$
continuous extension $\widetilde{g}$ of $g$, 
we see from the definition of Hawking mass~$(\ref{mdef})$ and
the bounds on $\Omega$, $\partial_ur$ that for $-U'<u<U''$,
\[
\lim_{v\to V''}\partial_v r(u,v) \to-\infty
\]
and (in view of the bounds on $\partial_vr$)
\[
\lim_{u\to U''}\partial_u r(u,v)\to -\infty.
\]
In particular, the extension $\widetilde{g}$ cannot have $C^0$
Christoffel symbols in $u$, $v$ coordinates
(for instance,
$\Gamma^u_{AB}=2\Omega^{-2}r \partial_vr\sigma_{AB}$, 
with $A,B$ coordinates on $\mathbb S^2$).

From equations $(\ref{Raych1})$, $(\ref{Raych2})$, 
in view again of the  bounds on $\Omega$, it follows that
for $-U'<u<U''$
\[
\int_v^{V''}( \partial_v\phi)^2(u,v)dv =\infty
\]
while for $-V'<v<V''$
\[
\int_u^{U''}( \partial_u\phi)^2(u,v)du =\infty.
\]

It follows in particular that $\phi$ does not extend $H^1_{\rm loc}$ at any point
of $\mathcal{CH}^+$.

To continue, let us note first that
we know a bit more from~\cite{md:cbh}, namely that, under the assumptions
of Theorem~\ref{blowupth}, 
then
in a neighbourhood $\mathcal{U}_A$ of $i^+_A$, 
then WLOG, $\partial_u\phi >0$, $\partial_v \phi>0$
in $\mathcal{U}_A\cap  J^+(\gamma_A)$, and thus,
noting from $(\ref{WAVe})$ that it follows that
$\partial_v(r\partial_u\phi)>0$, we have that
$\partial_u\phi>0$ on $\mathcal{CH}^+_A\cap \mathcal{U}_A$.
(Let us note  moreover that this and our upper bounds on $\partial_u\phi$ 
imply that in fact $\partial_v\phi>0$ in a neighbourhood of $(-U',U'')\times\{V''\}$.)

Let $u_1<u_2$ be two $u$-values such that $u=u_1$ and $u=u_2$ intersect
$\mathcal{CH}^+_A\cap \mathcal{U}_A$. Note by the continuity properties of
$\partial_u\phi|_{\mathcal{CH}^+_A}$ and compactness, we have for
$v_0<V''$ suitable large, a lower bound
$|\partial_u\phi|\ge c(v_0, u_1,u_2)$ in $v\ge v_0$, $u_1\le u\le u_2$,
for a positive $c>0$.

One sees easily that 
\begin{eqnarray*}
\int_{v_0}^{V''} (\partial_v\Omega)^2(u_1,v)dv+\int_{v_0}^{V''}  (\partial_vr)^2(u_1,v)dv
+\int_{v_0}^{V''} (\partial_v\Omega)^2(u_2,v)dv\\
\ge c \int_{v_0}^{V''}(\partial_v\phi)^2(u_1,v) dv 
-C
\end{eqnarray*}
from which it follows that there exists a $u_1\le u_0\le u_2$
such that for $u\ne u_0$,
then 
\begin{equation}
\label{firstcase}
\int_{v_0}^{V''} (\partial_v\Omega)^2(u,v)dv=\infty
\end{equation}
or 
\begin{equation}
\label{secondcase}
\int_{v_0}^{V''} (\partial_vr)^2(u,v)dv=\infty.
\end{equation}

In particular, the Christoffel symbols fail to be $L^2_{\rm loc}$ at any
point of $\mathcal{CH}^+_A\cap\mathcal{U}_A$ (recall
$\Gamma^u_{AB}=2\Omega^{-2}r \partial_vr\sigma_{AB}$
and note also $\Gamma^v_{vv}=
\partial_v\log\Omega^2$).

Let us note that if $(\ref{secondcase})$ is true for a particular value of $u$,
then it is true for all $u\in(-U',U'']$.

On the other hand, if $(\ref{secondcase})$ does \emph{not} hold, then
we easily see the following: Let $Z=\{u:\partial_u\phi(u,V'')\ne 0\}$.
We have that for almost all $u\in Z$, $(\ref{firstcase})$ holds.

Similar considerations apply to $\{u=U''\}$.   

It follows thus that the Christoffel symbols fail to be $L^2_{\rm loc}$ at any
point $x$ of $\mathcal{CH}^+$ such that $\phi|_{\mathcal{CH}^+}$ is not
constant in a neighbourhood of $x$.

One expects that the statements shown in this section can be generalised to apply to any
nontrivial $C^0$ Lorentzian manifold $(\widetilde{\mathcal{M}},\widetilde{g})$ extending
$(\mathcal{M},g)$, not assumed \emph{a priori} to have the above differential structure.
We shall not pursue the
details here.

\section{Connection with the Luk--Rodnianski theory of interacting
impulsive gravitational waves}
\label{lukrodtheory}

In a different guise, ``weak null singularities''
have a long history in general relativity. They occur in 
the explicit colliding plane wave spacetimes constructed by Khan--Penrose~\cite{khanpen}
and Szekeres~\cite{szek}, where intersecting null hypersurfaces carry
delta-function singularites in the curvature.
In these spacetimes, however, the
singularities are even {\bf weaker} than the ``weak null singularities'' of
our Theorem~\ref{thes2}, 
in that the Christoffel symbols remain bounded. 
In view of this,  the  plane waves of~\cite{khanpen} were not interpreted as
terminal singular boundaries of spacetime, but
rather as singular hypersurfaces in an ambient spacetime
continuing on both sides of the hypersurface,
whose dynamics should still be uniquely prescribed
by the Einstein equations, interpreted in a weak
sense. I.e.~the situation should be much like that for
well known shock fronts of classical fluid dynamics, where
the equations uniquely determine the dynamics (after imposing in addition a suitable
entropy condition).

A mathematical theory incorporating the above (and more general!)
singular hypersurfaces has only been
achieved very recently.  The reason this problem is so difficult
is that this type of singularity, despite
being sufficiently ``weak'' so as to allow for the spacetimes
to be interpreted as weak
solutions of the Einstein equations across the singularity, 
is still below the general well-posedness regularity threshold
for the Einstein equations  (even after the recent  seminal
improvement due to Klainerman--Rodnianski--Szeftel~\cite{krs} requiring only that
the  metric be in 
the Sobolev space $H^2$ on a Cauchy surface).
In remarkable very recent work of 
Luk--Rodnianski~\cite{LukRod1}, a well-posedness
result has been achieved 
for the characteristic initial value problem
with characteristic data admitting a delta-function singularity
of the curvature across a $2$-sphere, but otherwise smooth. This singularity is then
 shown to propagate along a null hypersurface emanating from
the $2$-sphere. These null singular hypersurfaces
can be thought of as impulsive gravitational
waves, generalising the plane symmetric examples of
Khan--Penrose and imbedding them finally in a dynamical framework
for which uniqueness is understood.

 Moreover,
in~\cite{LukRod2}, Luk and Rodnianski show that two impulsive
gravitational waves interact so that     the singularity
is supported in the union of two null cones. In particular, 
a singularity does not form  before
their intersection, and their ``wake'' is completely regular!

Although as we have said, 
the null singularities of Luk--Rodnianski are less singular than 
the weak null singularities of the present paper (as they must be
to allow for a well posedness theorem for spacetimes continuing beyond them!),
it is impossible not to compare the  situation established in~\cite{LukRod2} with the picture
of  Theorem~\ref{introtheo}, where again two weak null singularities propagate and
interact (albeit now as the terminal boundary of spacetime) without a ``stronger''
singularity occuring first.\footnote{Moreover, one can strengthen the connection
with the following remark. Were one to drop the ``constraint'' equations 
$(\ref{Raych1})$--$(\ref{Raych2})$
and solve only the     ``evolution'' equations $(\ref{requ})$--$(\ref{WAVe})$, then
one may still reasonably interpret the evolution equations alone in a generalised sense 
\emph{across}
$\mathcal{CH}^+$, and extend the spacetime so as to satisfy these. The
hypersurfaces $\mathcal{CH}^+_{A,B}$ can then be thought of precisely
as interacting singularities in such a spacetime, which again would effectively
pass through one another.  It is interesting
that it is specifically the impossibility of satisfying the constraint equations that exclude
such extensions from being thought of as weak solutions of the Einstein equations. 
In this sense, the constraint
equations can be thought to play a role analogous to that
of an entropy condition in classical continuum physics.}
The
methods introduced by~\cite{LukRod1,LukRod2}
thus give hope that {\bf vacuum}
spacetimes with singular structure similar to that of Theorem~\ref{introtheo} (and
without underlying symmetries)
will soon be constructed, parameterised by free initial data. 
This will be an important
 first step toward understanding
the validity of Conjecture~\ref{introconj}.

\section{Epilogue: Cosmological spacetimes without spacelike singularities}
\label{epilogue}

As discussed before, the original expectation of ``BKL''-type behaviour inside black hole 
regions starts from the    observation that locally the interior of a Schwarzschild
black hole can be viewed as a cosmological region. This relation goes both ways,
however. In particular, perturbing the Reissner--Nordstr\"om--de Sitter solution,
similar considerations to Reissner--Nordstr\"om apply. In fact, as we  shall see,
the situation is even more interesting, and the conclusions for strong cosmic censorship
potentially ominous.

\subsection{A rough stability result}
Let us first state a basic stability result, an exact analogue of the
stability part of Theorem~\ref{introtheo}, which can be proven following
the methods of this paper and~\cite{md:cbh, dr1, dr0:bound, dafren},  
depends on a relatively rough understanding for the decay properties
of the scalar field between the event and cosmological horizons, and
holds for all subextremal values of the Reissner--Nordstr\"om--de Sitter solution.
\begin{theorem}
\label{canbeshown}
For sufficiently small spherically
symmetric perturbations of subextremal Reissner--Nordstr\"om--de Sitter
initial data (with topology $\mathbb S^1\times \mathbb S^2$  and $n$ black hole regions), 
the maximal future (past) Cauchy development has Penrose diagramme given 
by:
\[
\input{glob_cosmo.pstex_t}
\]
and the spacetime is  extendible as a manifold with $C^0$-Lorentzian
metric (to which $\phi$ also extends continuously), 
such that each connected component of $\mathcal{CH}^+$ is a 
bifurcate null cone in the extension.
\end{theorem}

The above already motivates a conjecture which is stated here in the Kerr case so as to
apply in fact for vacuum:
\begin{conjecture}
\label{cosmocon}
Let $(\mathcal{M},g)$ be the maximal Cauchy development (for the Einstein
vacuum equations with positive cosmological constant) of
sufficiently small perturbations of subextremal Kerr--de Sitter data (with $n$ black hole
regions) and initial 
topology ${\mathbb S}^1\times {\mathbb S}^2$. Then 
$(\mathcal{M},g)$ is future (past) extendible to a $(\widetilde{\mathcal{M}},\widetilde{g})$
with $C^0$ metric  $\widetilde{g}$, such 
that $\partial\mathcal{M}$ is the union of $n$ bifurcate null cones, and
all incomplete, future(past)-intextendible geodesics of $\mathcal{M}$ pass
into $\widetilde{\mathcal{M}}\setminus \mathcal{M}$.
\end{conjecture}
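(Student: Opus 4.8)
\noindent\textbf{A proof strategy for Conjecture~\ref{cosmocon}.}
The plan is to decompose the problem into an \emph{exterior} and an \emph{interior} step, mirroring the logical architecture behind Theorem~\ref{introtheo} but replacing each spherically symmetric ingredient by its genuine vacuum, non-symmetric counterpart. In the reduced setting the two pillars were the decay statement of Theorem~\ref{plt} along the event horizons and the Cauchy-horizon stability/blow-up analysis of Theorems~\ref{thes1}--\ref{thes2} in the interior. Without a $1+1$ reduction neither is available for free: one must instead run the full tensorial machinery of the vacuum equations. First I would establish the nonlinear stability of the exterior region of subextremal Kerr--de Sitter---the domain bounded to the future by the event and cosmological horizons---and extract from it precise, uniform decay bounds for the perturbation and its derivatives along each event horizon $\mathcal{H}^+$. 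These horizon bounds play the role that Theorem~\ref{plt} plays in the reduced problem: they furnish the characteristic data from which the interior is reconstructed.

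The exterior step is by now the more tractable pillar, and here the positive cosmological constant is a decisive advantage rather than a complication. Unlike the asymptotically flat case, where Price's-law tails decay only polynomially and drive the delicate power-counting of~\cite{dr1, md:cbh}, the cosmological redshift forces \emph{exponential} decay of linear waves on Kerr--de Sitter backgrounds, at a rate governed by the spectral gap of the quasinormal-mode spectrum. I would invoke the nonlinear stability theory for slowly rotating Kerr--de Sitter (in the spirit of Hintz--Vasy and the exterior methods surveyed in~\cite{bhsp}) to obtain bounds of the schematic form $|\text{perturbation}|\le C\,e^{-\alpha t}$ on $\mathcal{H}^+$, with $\alpha>0$ determined by the subextremal parameters. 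The output of this step is a full set of exponentially decaying characteristic data on the two components of each event horizon.

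The interior step is where the essential difficulty lies. Posing the characteristic initial value problem with the above data on the event horizons, I would propagate estimates inward toward the Cauchy horizon $\mathcal{CH}^+$ in a double-null-type gauge. The obstruction is the blueshift: perturbations are amplified exponentially at a rate set by the Cauchy-horizon surface gravity $\kappa_-$, and continuous extendibility requires that the \emph{incoming} exponential decay rate $\alpha$ from the exterior dominate this amplification in the energies controlling the $C^0$ norm of $\widetilde{g}$. Under subextremality, and using the spectral-gap value of $\alpha$, one expects $\alpha$ to be large enough that a renormalised energy remains bounded up to $\mathcal{CH}^+$, yielding the continuous extension. To close the estimates at the borderline regularity---where curvature is permitted to blow up while the metric stays continuous---I would deploy the low-regularity characteristic-data techniques of Luk--Rodnianski~\cite{LukRod1, LukRod2} discussed in Section~\ref{lukrodtheory}, now adapted to a terminal singular boundary rather than to a pair of waves with regular wake. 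A bootstrap/continuity argument in the interior then produces $(\widetilde{\mathcal{M}},\widetilde{g})$ with $C^0$ metric across each $\mathcal{CH}^+$.

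The final step is the global one: ruling out a spacelike (``crushing'') portion of the singularity, showing that the incoming and outgoing null components of each black hole's Cauchy horizon close up into a bifurcate null cone, and deducing the geodesic statement. For the first two I would run a Cauchy-stability argument in the vacuum setting analogous to the proof of Theorem~\ref{willfollowf}: smallness of the perturbation, fed through the interior estimates, should keep the relevant area-type quantity bounded below everywhere, so that no spacelike singularity can form and the $n$ Cauchy horizons are each bifurcate null cones in $\widetilde{\mathcal{M}}$. The statement that all incomplete, inextendible causal geodesics pass into $\widetilde{\mathcal{M}}\setminus\mathcal{M}$ then follows from the $C^0$ extendibility together with the structure of $\partial\mathcal{M}$, via a limiting argument on geodesics approaching the boundary. \textbf{The hard part} is unquestionably the interior nonlinear stability without symmetry: controlling the blueshift all the way to $\mathcal{CH}^+$ in the full vacuum equations, with only the smallness inherited from exterior decay, is precisely where the Luk--Rodnianski methods must be pushed beyond their current reach. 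A close second obstacle is the global gluing in the last step, since---unlike the spherically symmetric proof of Theorem~\ref{willfollowf}---there is no monotone area-radius $r$ and no Raychaudhuri monotonicity to exploit, so excluding a spacelike singular component must be carried out by genuinely nonlinear, quantitative means.
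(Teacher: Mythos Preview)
The statement you are addressing is labelled in the paper as a \emph{conjecture}, and the paper supplies no proof of it. Conjecture~\ref{cosmocon} is presented precisely as the vacuum, non-symmetric analogue of Theorem~\ref{canbeshown}, motivated by the spherically symmetric results, and is left open. There is therefore no ``paper's own proof'' against which your proposal can be compared.

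That said, as a strategic outline your proposal is coherent and tracks the logical architecture the paper itself suggests: exterior decay (replacing Theorem~\ref{plt}) feeding an interior characteristic problem (replacing Theorem~\ref{thes1} and the argument of Theorem~\ref{willfollowf}), closed with Luk--Rodnianski-type low-regularity estimates. You correctly identify the two genuinely hard pieces---the nonlinear interior stability up to $\mathcal{CH}^+$ without symmetry, and the global exclusion of a spacelike singular component without monotone scalar quantities like $r$ or $\varpi$. You are also right that the cosmological constant helps on the exterior side via exponential decay. One caution: your claim that ``under subextremality\ldots one expects $\alpha$ to be large enough'' for $C^0$ extendibility glosses over exactly the parameter dependence that the paper flags in Section~\ref{epilogue} and Conjecture~\ref{cosmocon2}; the relation between the spectral gap $\alpha$ and the Cauchy-horizon surface gravity $\kappa_-$ is subtle and governs not just whether the extension is $C^0$ but its finer regularity, so this step would require a genuine quantitative comparison rather than a blanket appeal to subextremality. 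In any case, what you have written is a programme for attacking an open problem, not a proof, and should be read as such.
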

According to this conjecture, there is an {\bf open set in the moduli space} of vacuum,
``cosmological'' spacetimes
whose singularities, whether past or future,  are {\bf \emph{nowhere} spacelike}.

\subsection{The structure of the null singularity and the near extremal case}
The question of the precise generic singular nature of the null boundary
of these spacetimes is more delicate and depends on the parameters
 of the  Reissner--Nordstr\"om--de Sitter spacetime which is initially perturbed.
 
 This problem was plagued by much confusion and several incorrect claims.
The original studies of the charged 
spherical toy models gave a detailed analysis of the expected singular behaviour,
identifying a parameter range (i)
where the mass inflation scenario holds just as in the asymptotically flat case, 
a parameter range (ii) where  the curvature blows up without the mass blowing up, and,
finally, and most surprisingly, a parameter 
range (iii) where the curvature remains continuous.  See in particular~\cite{bradpoiss,
bradetal} and the survey~\cite{chambers}.
These original studies~\cite{bradpoiss, bradetal}, however, had failed to properly address the    
effect of backscattering in the region between the cosmological and event
horizons, already present in the scalar field model, which heuristically generates
an exponentially decaying tail with the slower exponential rate governed by the surface gravity
of the \emph{event horizon}, as opposed to the faster rate
governed by the surface gravity of the \emph{cosmological horizon}.
When this
was taken into account, the range of stability (iii) predicted in~\cite{bradpoiss, bradetal}
disappears. See~\cite{bradymossmyers}.
This suggested that the ``$C^2$-inextendibility formulation'' of strong cosmic censorship
holds in the case of positive cosmological constant--hence the reassuring
title of~\cite{bradymossmyers}.

Subsequent insights~\cite{formation} into the proper formulation of strong
cosmic censorship coming from the PDE point of view,
coupled with the recent Luk--Rodnianski
theory~\cite{LukRod1} described in the previous section, 
suggest that--\cite{bradymossmyers} not withstanding--one should
revisit the more detailed previous analyses of the studies described just above.
A revised heuristic analysis in the spirit of~\cite{bradetal} but
correctly taking into account the behaviour of the scalar field outside
the event horizon~\cite{bradymossmyers} suggests
that there is still a qualitative
difference in the nature of the blow-up, depending on the parameters,
corresponding to  (i) and (ii) above. 
In particular, close
to extremality, this suggests that although the curvature blows up,
 \emph{this is not associated with  mass blow up}, and,
particularly relevant to the point of view advanced here, the metric  in fact  extends Lipschitz
(its Christoffel symbols are uniformly bounded), and the scalar field extends
in $H^1_{\rm loc}$. 

\begin{conjecture}
In Theorem~\ref{canbeshown}, if the initial data is
sufficiently close to a  Reissner--Nordstr\"om--de Sitter 
with parameters in turn sufficiently close to extremal (but still subextremal!), then
the spacetime extensions $(\widetilde{\mathcal{M}},\widetilde{g})$ 
can be chosen so that $\widetilde{g}$ is Lipschitz, and 
such that the scalar field extends to $\widetilde{\mathcal{M}}$
as an $H^1_{\rm loc}$ function.

For generic such initial data (for all sub-extremal values of the Reissner--Nordstr\"om--de
Sitter reference solution), 
$\widetilde{g}$ cannot be $C^2$ and $\phi$ cannot extend $C^1$.
\end{conjecture}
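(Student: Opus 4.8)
The plan is to prove the two halves of the conjecture --- the regularity upper bound and the generic irregularity --- separately, both by the characteristic-estimate/Cauchy-stability scheme of this paper, but with the decay input now coming from the region between the cosmological and event horizons rather than from Theorem~\ref{plt}. First, for the regularity statement, I would start from the $C^0$ extension and bifurcate-null-cone structure of $\mathcal{CH}^+$ already supplied by Theorem~\ref{canbeshown}, localise near one of the bifurcation points $i^+$, and supply the key quantitative input: an \emph{exponential} decay estimate $|\phi|+|\partial_v\phi|\lesssim e^{-\alpha v}$ along each component of $\mathcal{H}^+$ (and similarly with $u$), for a rate $\alpha$ governed by the exterior dynamics --- this is the Reissner--Nordstr\"om--de Sitter analogue of Theorem~\ref{plt}, provable by the redshift and energy-current methods of \cite{dr0:bound, dafren} adapted to the two-horizon exterior, the rate $\alpha$ reflecting the slower surface gravity governing the tail, per the corrected heuristic of \cite{bradymossmyers}. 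In the interior I would pass to a null coordinate $V\sim-e^{-\kappa_- v}$ regularising the would-be Cauchy horizon at $V=0$ (with $\kappa_-$ the surface gravity of the reference Reissner--Nordstr\"om--de Sitter Cauchy horizon) and run a bootstrap on the characteristic rectangle bounded by $\mathcal{H}^+$, by the de Sitter analogue of the curve $\gamma$ of Proposition~\ref{stableblue} (on which $-\partial_v r$ is integrable in $v$), and by $\{V=0\}$, using $(\ref{requ})$, $(\ref{WAVe})$, the Raychaudhuri equations $(\ref{Raych1})$--$(\ref{Raych2})$ and the renormalised-mass equations $(\ref{RHM1})$--$(\ref{RHM2})$. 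The horizon decay translates to $|\partial_V\phi|\lesssim |V|^{\beta-1}$ with $\beta:=\alpha/\kappa_-$, so that when the reference black hole is close enough to extremal, $\kappa_-\to0$ and one arranges $\beta>\tfrac12$, making $\Omega^{-2}(\partial_V\phi)^2$ integrable in $V$; feeding this through $(\ref{Raych2})$ gives $\Omega^{-2}\partial_V r$ and $\partial_V\varpi$ convergent up to $\{V=0\}$, so that in particular the Hawking mass $(\ref{mdef})$ does \emph{not} blow up, and then $(\ref{Omegeq})$ and $(\ref{WAVe})$ propagate the remaining bounds. One concludes an extension $(\widetilde{\mathcal{M}},\widetilde g)$ in which $r$ and $\Omega^2$ extend with $\partial_V r\in L^\infty$ --- so that Christoffel symbols of the type $\Gamma^u_{AB}=2\Omega^{-2}r\,\partial_v r\,\sigma_{AB}$ stay bounded --- and with $\phi\in H^1_{\mathrm{loc}}$.

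The delicate point in this first half is the Lipschitz bound on $\Omega^2$: the transversal derivative $\Gamma^v_{vv}=\partial_v\log\Omega^2$ is controlled through $(\ref{Omegeq})$ by the product $\partial_u\phi\,\partial_v\phi$, which at $\mathcal{CH}^+$ behaves like $|V|^{\beta-1}$; thus $\widetilde g$ is genuinely Lipschitz only in the regime $\beta\ge1$ (expected to be reached sufficiently close to extremality), while for reference parameters with $\tfrac12<\beta<1$ the argument yields only H\"older regularity $C^{0,\beta}$ together with $\phi\in H^1_{\mathrm{loc}}$. Either way the resulting metric, being no better than $C^{0,1}$, is not $C^2$, but for the precise $C^2$-inextendibility and the failure of $C^1$-extendibility of $\phi$ --- and for the statement to hold for \emph{all} subextremal reference parameters --- one needs a matching \emph{lower} bound, which I would supply, as is expected generically (the de Sitter analogue of Conjecture~\ref{isitgen}): $|\partial_v\phi|\ge c\,e^{-\alpha v}$ on $\mathcal{H}^+$, i.e.\ $\beta<1$ at the level of the propagated field. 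Propagating this inward along ingoing rays via $(\ref{WAVe})$, exactly as in the proof of Theorem~\ref{blowupth} --- where one first fixes the sign of $\partial_u\phi$, $\partial_v\phi$ in a neighbourhood of $i^+$ so that $\partial_v(r\partial_u\phi)$ has a sign and $\partial_u\phi$ does not degenerate on $\mathcal{CH}^+$ --- gives $|\partial_V\phi|\gtrsim|V|^{\beta-1}\to\infty$, so $\phi$ cannot extend $C^1$ at any point of $\mathcal{CH}^+$ where $\phi|_{\mathcal{CH}^+}$ is non-constant, in particular near $i^+$. Since in these coordinates the Ricci scalar of $\widetilde g$ equals $-32\pi\,\Omega^{-2}\partial_u\phi\,\partial_v\phi$ and $\Omega^{-2},\partial_u\phi$ are bounded away from $0$ near $i^+$, this curvature scalar --- a diffeomorphism invariant --- is unbounded along causal curves approaching $\mathcal{CH}^+$; invoking Part~III of the (de Sitter analogue of) Proposition~\ref{frameprop} exactly as in the corollaries following Corollary~\ref{globsingc}, no $C^2$ extension can exist. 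For reference solutions far from extremality ($\beta<\tfrac12$) this last half is simply subsumed by mass inflation, recovered verbatim from Theorem~\ref{blowupth} and Corollary~\ref{globsingc}; the genuinely new content is the near-extremal window in which $H^1_{\mathrm{loc}}$-extendibility of $\phi$ coexists with $C^1$-inextendibility of $\phi$ and $C^2$-inextendibility of $g$.

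The main obstacle is the very first step: pinning down the exterior decay rate $\alpha$ sharply enough to establish, for reference solutions sufficiently close to extremal, the inequality $\beta=\alpha/\kappa_->\tfrac12$ (and, for the Lipschitz claim, $\beta\ge1$). Unlike the asymptotically flat Price-law input of Theorem~\ref{plt}, this is a statement about the quasinormal-mode--governed dynamics in the region between the cosmological and event horizons of a \emph{dynamical}, nonlinearly coupled background, and the degeneration of the relevant spectral gap as extremality is approached must be tracked quantitatively and uniformly; the matching generic lower bound needed for the irregularity half is harder still and, as with Conjecture~\ref{isitgen}, is at present expected rather than known. A secondary difficulty, already flagged for the asymptotically flat case in Section~\ref{christoffels}, is upgrading these regularity and irregularity statements from the canonical double-null extension to \emph{all} $C^0$ extensions $(\widetilde{\mathcal{M}},\widetilde g)$, which requires handling the geometric ambiguity of the extension; following \cite{md:cbh} and Theorem~\ref{introtheo}, I would not pursue this and would state the result for the canonical extension only.
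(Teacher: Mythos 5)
The statement you are asked about is labelled a \emph{Conjecture} in the paper, and the paper offers no proof of it: the surrounding text provides only a heuristic motivation (the ``revised heuristic analysis in the spirit of~\cite{bradetal}'' correctly incorporating the exterior decay rate of~\cite{bradymossmyers}, and a ``linearised analysis'' suggesting $W^{1,p}$ extendibility of $\phi$ with $p\to\infty$ towards extremality). So there is nothing in the paper to compare your argument against. What you have written is, in substance, exactly the programme the paper gestures at: exponential decay $e^{-\alpha v}$ on $\mathcal{H}^+$ with $\alpha$ tied to the event-horizon surface gravity, the regularising coordinate $V\sim -e^{-\kappa_- v}$, the exponent $\beta=\alpha/\kappa_-$ controlling $\partial_V\phi\sim|V|^{\beta-1}$, and the dichotomy $\beta>\tfrac12$ (giving $H^1_{\rm loc}$ and no mass inflation) versus $\beta\ge 1$ (giving Lipschitz $\widetilde g$). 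Your observation that the Lipschitz claim for $\widetilde g$ and the $H^1_{\rm loc}$ claim for $\phi$ have different thresholds, and that an intermediate H\"older window may exist, is a sensible refinement of the conjecture as stated. Your route to $C^2$-inextendibility via the Ricci scalar $-32\pi\Omega^{-2}\partial_u\phi\,\partial_v\phi$ combined with Part~III of (the analogue of) Proposition~\ref{frameprop} is also a legitimate variant of the paper's Kretschmann/Hawking-mass argument.

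That said, as a \emph{proof} the proposal has genuine gaps, precisely at the two points you yourself flag. First, the entire regularity half rests on a sharp, uniform-in-parameters exponential decay upper bound for $\phi$ and $\partial_v\phi$ along the event horizon of the \emph{dynamical} perturbed spacetime, with a rate $\alpha$ quantitatively compared to $\kappa_-$; no such estimate is proved in this paper or in~\cite{dr0:bound,dafren} (Theorem~\ref{canbeshown} itself is stated as relying only on a ``relatively rough understanding'' of the decay, which suffices for the $C^0$ picture but not for the $\beta>\tfrac12$ or $\beta\ge1$ thresholds), and tracking the degeneration of the relevant rates as extremality is approached is a substantial open analytic problem, not a routine adaptation. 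Second, the inextendibility half requires a \emph{generic lower} bound on $|\partial_v\phi|$ along $\mathcal{H}^+$, the de Sitter analogue of the hypothesis $(\ref{lowerprice})$ of Theorem~\ref{blowupth}; this is the analogue of Conjecture~\ref{isitgen} and is likewise open. With both inputs assumed, your interior bootstrap is plausibly within reach of the methods of~\cite{md:cbh} and of this paper, but without them the argument is a conditional outline rather than a proof --- which is consistent with the paper's own decision to state the result as a conjecture.
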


In fact, the linearised analysis suggests
that the scalar field will extend in $W^{1,p}$,
where one can take $p\to \infty$  as  the initial data approaches extremality.
It is slightly amusing that the closer the black hole is to extremality,
the more ``stable'' its  Cauchy horizon, when set against the backdrop
of the novel instability
discovered by Aretakis~\cite{aretak1, aretak2} in his  study of the wave equation    
on exactly extreme Reissner--Nordstr\"om; this then suggests that,
as the parameters of the initial reference Reissner--Nordstr\"om--de Sitter
are pushed towards extremality, the size of the allowed ``open set''
in the moduli space of data, corresponding to the domain of stability,
must be taken to shrink.

When ``translated'' to the vacuum case (see the comments at the end of
Section~\ref{ssred}) one may expect that the $C^1$ blow up of the scalar field
will  be reflected as a $C^0$ blow up of the  Christoffel symbols,
and thus, the metric will not be Lipschitz. On the other hand, the above
still suggests that the  Christoffel symbols will be in $L^2_{\rm loc}$ (in fact,
in any $L^p_{\rm loc}$, given data sufficiently close to extremality).
This would represent a type of singularity which, although  outside the general
$H^2$-well posedness
theory of~\cite{krs}, may be describable by a suitable generalisation of
Luk--Rodnianski theory~\cite{LukRod1} outlined in the previous section, and
thus such solutions may
in fact admit     a local uniqueness characterization.  Thus, one is led to:
\begin{conjecture}
\label{cosmocon2}
Let $(\mathcal{M},g)$ be the maximal Cauchy development of
sufficiently small perturbations of Kerr--de Sitter data (with $n$ black hole
regions) and initial 
topology ${\mathbb S}^1\times {\mathbb S}^2$ with parameters sufficiently close to 
extremality. Then one can choose extensions $(\widetilde{\mathcal{M}},\widetilde{g})$ 
as above such 
that $\widetilde{g}$ has locally square integrable Christoffel symbols
and such that moreover
the extensions satisfy the Einstein equations in
a generalised sense.
In particular, the ``inextendible as a metric with
$L^2_{\rm loc}$ Christoffel symbols'' formulation (see~\cite{formation}) 
of strong cosmic censorship
does not hold in the case of positive cosmological constant.
\end{conjecture}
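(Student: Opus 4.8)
The plan is to reduce Conjecture~\ref{cosmocon2} to four ingredients: (a) a sharp \emph{exterior} stability statement for subextremal Kerr--de Sitter, (b) an \emph{interior} stability statement propagating the exterior decay up to a bifurcate Cauchy horizon, (c) a sharp regularity analysis at that Cauchy horizon valid in the near-extremal regime, and (d) a low-regularity characteristic well-posedness theorem for the vacuum equations allowing one to solve \emph{across} it. Steps (a)+(b) amount to establishing Conjecture~\ref{cosmocon}, i.e.~that for small perturbations $\partial\mathcal{M}$ is (nowhere spacelike, in fact) a union of $n$ bifurcate null cones $\mathcal{CH}^+$ to which $\widetilde{g}$ extends $C^0$; the remaining work (c)+(d) then upgrades the \emph{nature} of this extension. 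One works throughout in the analogue of the gauge-normalised double-null-type coordinates of this paper and of~\cite{LukRod1,LukRod2}, quasi-locally near each component of $\mathcal{CH}^+$.

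First I would invoke (the relevant refinement of) the global nonlinear stability of the exterior region of Kerr--de Sitter, together with a quantitative strengthening: restricted to a neighbourhood of the event horizon $\mathcal{H}^+$, every sufficiently small perturbation converges \emph{exponentially} to Kerr, with the slower of the relevant rates governed by the surface gravity $\kappa_+$ of the event horizon itself --- the essential correction to the naive picture emphasised in~\cite{bradymossmyers}. Feeding this horizon data into the interior, step (b) is the vacuum, positive-$\Lambda$ analogue of the $C^0$-stability-of-the-Cauchy-horizon results established in the asymptotically flat Kerr case: one constructs a ``stable blueshift region'' in the spirit of Proposition~\ref{stableblue}, propagates the (now exponentially decaying) perturbation across it, and concludes that $\partial\mathcal{M}$ has a non-empty bifurcate null piece $\mathcal{CH}^+$ to which $\widetilde{g}$ extends $C^0$. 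Here positivity of $\Lambda$ is an asset rather than an obstruction: exponential decay on $\mathcal{H}^+$ comfortably defeats the affine-time width of the blueshift region, so this step ought to be more robust than in the $\Lambda=0$ case.

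The crux is step (c). In a coordinate system $(U,V)$ regular at $\mathcal{CH}^+$, with $V\sim e^{-\kappa_- v}$ relative to an Eddington--Finkelstein parameter $v$ in the interior and $\kappa_-$ the Cauchy-horizon surface gravity, the transverse derivatives of the metric perturbation behave, schematically, like $V^{\beta-1}$, where $\beta=\kappa_+/\kappa_-\in(0,1)$ and, crucially, $\beta\to 1$ as the reference Kerr--de Sitter is pushed toward (sub)extremality. Hence $\partial_V(\text{perturbation})\in L^p_{\rm loc}$ for every $p$ with $p(1-\beta)<1$, i.e.~for all $p<p_\ast$ with $p_\ast\to\infty$; translating this, via the dictionary $r\partial_v\phi\leftrightarrow\hat\chi$ recalled at the end of Section~\ref{ssred}, into the connection coefficients of the vacuum solution, one obtains that the Christoffel symbols of $\widetilde{g}$ --- which, unlike in the charged scalar field toy model, genuinely \emph{blow up}, so that $\widetilde{g}$ is not Lipschitz --- are nonetheless in $L^2_{\rm loc}$ (indeed in $L^p_{\rm loc}$, $p<p_\ast$) up to and across $\mathcal{CH}^+$. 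Making this rigorous demands sharp two-sided control of the perturbation throughout the interior, that is, a genuine near-extremal refinement of (b); in the exactly extremal direction the admissible size of the ``open set'' of data must be allowed to shrink, consistently with the Aretakis-type degeneration~\cite{aretak1,aretak2} of the exterior problem.

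Finally, step (d): taking the (sub-$H^2$, but $L^p$-connection) characteristic data on the two null cones comprising $\mathcal{CH}^+$ to be the limits of $g$ and its Christoffel symbols from inside $\mathcal{M}$, one \emph{solves the vacuum characteristic initial value problem to their future}, producing $(\widetilde{\mathcal{M}},\widetilde{g})$ together with a precise sense in which it satisfies the Einstein equations; the $L^2_{\rm loc}$ bound on the Christoffel symbols and the weak-solution property are then both outputs of this well-posedness statement. \textbf{I expect step (d) to be the main obstacle.} The Luk--Rodnianski theory~\cite{LukRod1,LukRod2} handles data whose Christoffel symbols are bounded (though discontinuous); here they are merely $L^p_{\rm loc}$, genuinely below that threshold and below the $H^2$ well-posedness threshold of~\cite{krs}, so step (d) requires a substantial extension of the existing low-regularity theory in which ``satisfies the Einstein equations in a generalised sense'' is given a precise, uniqueness-retaining meaning --- and identifying the sharp regularity class for which such a theory exists is itself part of what the conjecture is probing. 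Granting (a)--(d), the concluding assertion is immediate: such an extension, available for an open set of data, contradicts the ``inextendible with $L^2_{\rm loc}$ Christoffel symbols'' formulation of strong cosmic censorship when $\Lambda>0$.
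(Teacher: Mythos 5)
The statement you are asked about is a \emph{conjecture}: the paper does not prove it, and explicitly presents it (together with Conjectures~\ref{introconj} and~\ref{cosmocon}) as an open problem that one may ``hope'' will ``soon be amenable to mathematical analysis'' once the exterior stability problem and a suitable extension of the Luk--Rodnianski theory are available. Your proposal is therefore not being measured against a proof in the paper, and, as you yourself concede, it is not a proof either. It is a faithful and well-organised restatement of the paper's own heuristic motivation from Section~\ref{epilogue}: the corrected Brady--Moss--Myers picture in which the slower exponential tail governed by the event-horizon surface gravity drives the blueshift; the ratio $\kappa_+/\kappa_-\to 1$ near extremality giving $W^{1,p}$ (hence $L^p_{\rm loc}$ Christoffel) regularity at $\mathcal{CH}^+$; the translation dictionary $r\partial_v\phi\leftrightarrow\hat\chi$; the Aretakis caveat about the shrinking domain of stability; and the need for a generalised Luk--Rodnianski well-posedness theory below the bounded-Christoffel threshold to give meaning to ``satisfies the Einstein equations in a generalised sense.''

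The genuine gap is that \emph{every one} of your four steps is itself an open problem of substantial difficulty, not merely step (d). Step (a), the nonlinear stability of the subextremal Kerr--de Sitter exterior without symmetry, with the sharp exponential rate, is not a theorem one can ``invoke.'' Step (b) has no vacuum analogue of Proposition~\ref{stableblue}; the only rigorous interior results are for the spherically symmetric scalar-field toy model (and even Theorem~\ref{canbeshown}, the scalar-field cosmological analogue, is stated in the paper without proof). Step (c) rests on linearised mode analysis; upgrading the two-sided $V^{\beta-1}$ behaviour to a nonlinear statement for the full vacuum system is precisely the hard content of the conjecture. And step (d), as you note, requires a well-posedness theory in a regularity class below both $H^2$ and the Luk--Rodnianski class, in which ``generalised solution'' retains a uniqueness characterisation --- without this, the concluding assertion about the failure of the $L^2_{\rm loc}$-Christoffel formulation of strong cosmic censorship does not follow, since mere $L^2_{\rm loc}$ extendibility of the connection is only half of what the conjecture asserts. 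In short: your outline is a correct reading of why the author believes the conjecture, but it establishes nothing; it should be presented as a research programme, not a proof.
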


From a sufficiently coarse point of view, the spacetimes $(\mathcal{M},g)$ described
by the above conjecture would be stable
examples of {\bf cosmological black hole spacetimes without singularities--full stop}.

\subsection{Is strong cosmic censorship false?}
It may be tempting, in view of Conjecture~\ref{cosmocon2}, to prefer the $C^2$  formulation  
of strong cosmic censorship, and to invoke the pointwise blow-up of curvature as an indication
that the effects of quantum gravity will somehow save the day. Until that time where
such considerations can be well formulated, however,
the local PDE-properties governing the Cauchy problem for the Einstein equations provide
the best insight into the theory that we have,
and its conclusions--like them or not--cannot be taken lightly.
Thus, these considerations suggest that while the uniqueness property
captured by strong cosmic censorship
may    indeed hold in the      realm of   gravitational collapse, 
its validity as a fundamental physical principle
now  appears in doubt.

\section{Acknowledgements}
This paper was inspired by a conversation with Amos~Ori  at the ``Numerical relativity beyond astrophysics workshop'' (hosted by the International Centre for Mathematical
Sciences, Edinburgh, July 2011), whom the author also thanks for his
subsequent input which has led to several improvements, particularly with
regard to the discussion of the cosmological case.
The author  thanks the organisers of the workshop (David Garfinkle, Carsten Gundlach
and Luis Lehner) for facilitating a stimulating discussion session on null singularities.  
Additional thanks go to Igor Rodnianski,
Stefanos Aretakis, as well as Jonathans Kommemi and Luk.
The author is supported in part by a grant from the European Research Council.

\end{document}